\long\def\ca#1\cb{} %Use for commenting out: \ca...\cb
\newcommand{\ketbra}[2]{| \hspace{1pt} #1 \rangle \langle #2 \hspace{1pt} |}
\newcommand{\norm}[2][]{#1| \! #1| #2 #1| \! #1|}
\newcommand{\avg}[1]{\left\langle #1\right\rangle }
\newcommand{\ket}[1]{|#1\rangle}               %ket
\newcommand{\bra}[1]{\langle #1|}              %bra
\newcommand{\dya}[1]{\ket{#1}\!\bra{#1}}
\newcommand{\poly}{\operatorname{poly}}
\newcommand{\CC}{\mathcal{C}}
\newcommand{\EC}{\mathcal{E}}
\newcommand{\GC}{\mathcal{G}}
\newcommand{\HC}{\mathcal{H}}
\newcommand{\LC}{\mathcal{L}}
\newcommand{\MC}{\mathcal{M}}
\newcommand{\OC}{\mathcal{O}}
\newcommand{\PC}{\mathcal{P}}
\newcommand{\SC}{\mathcal{S}}
\newcommand{\WC}{\mathcal{W}}
\newcommand{\Tr}{{\rm Tr}}
\newcommand{\Var}{{\rm Var}}
\renewcommand{\geq}{\geqslant}
\renewcommand{\leq}{\leqslant}
\newcommand{\LCb}{\overline{\LC}}
\newcommand{\LCt}{\widetilde{\LC}}
\newcommand{\Ot}{\widetilde{O}}
\newcommand{\sigmat}{\widetilde{\sigma}}
\renewcommand{\vec}[1]{\boldsymbol{#1}}  % Bold vectors instead of arrow vectors
\newcommand{\ad}{^\dagger}
\newcommand*{\id}{\openone}
\newcommand{\Sb}{\overline{S}}
\newcommand{\sbar}{\overline{s}}
\newcommand{\wbar}{\overline{w}}
\newcommand{\rhoi}{\rho_{\text{in}}}
\newcommand{\rhoo}{\rho_{\text{out}}}
\newcommand{\tout}{{\text{out}}}
\newcommand{\tin}{{\text{in}}}
\newcommand{\thv}{\vec{\theta}}
\newtheorem{theorem}{Theorem}
\newtheorem{lemma}{Lemma}
\newtheorem*{remark}{Remark}
\newtheorem{corollary}{Corollary}
\newtheorem{proposition}{Proposition}
\begin{document}
\title{Absence of Barren Plateaus in Quantum Convolutional Neural Networks}

\author{Arthur Pesah}
\affiliation{Theoretical Division, Los Alamos National Laboratory, Los Alamos, NM 87545, USA}
\affiliation{Department of Physics and Astronomy, University College London, London WC1E 6BT, UK}

\author{M. Cerezo}
\affiliation{Theoretical Division, Los Alamos National Laboratory, Los Alamos, NM 87545, USA}
\affiliation{Center for Nonlinear Studies, Los Alamos National Laboratory, Los Alamos, New Mexico 87544}

\author{Samson Wang}
\affiliation{Theoretical Division, Los Alamos National Laboratory, Los Alamos, NM 87545, USA}
\affiliation{Imperial College London, London, UK}

\author{Tyler Volkoff}
\affiliation{Theoretical Division, Los Alamos National Laboratory, Los Alamos, NM 87545, USA}

\author{Andrew T. Sornborger} 
\affiliation{Information Sciences, Los Alamos National Laboratory, Los Alamos, NM USA.}

\author{Patrick J. Coles}
\affiliation{Theoretical Division, Los Alamos National Laboratory, Los Alamos, NM 87545, USA}

\begin{abstract}
Quantum neural networks (QNNs) have generated excitement around the possibility of efficiently analyzing quantum data. But this excitement has been tempered by the existence of exponentially vanishing gradients, known as barren plateau landscapes, for many QNN architectures. Recently, Quantum Convolutional Neural Networks (QCNNs) have been proposed, involving a sequence of convolutional and pooling layers that reduce the number of qubits while preserving information about relevant data features. In this work we rigorously analyze the gradient scaling for the parameters in the QCNN architecture. We find that the variance of the gradient vanishes no faster than polynomially, implying that QCNNs do not exhibit barren plateaus. This provides an analytical guarantee for the trainability of randomly initialized QCNNs, which highlights QCNNs as being trainable under random initialization unlike many other QNN architectures.  To derive our results we introduce a novel graph-based method to analyze expectation values over Haar-distributed unitaries, which will likely be useful in other contexts. Finally, we perform numerical simulations to verify our analytical results. 
\end{abstract}

\maketitle

\section{Introduction}

The field of classical machine learning has been revolutionized by the advent of Neural Networks (NNs). One of the most prominent forms of classical neural network is the Convolutional Neural Network (CNN) \cite{lecun1990handwritten,lecun1998gradient}. CNNs differ from traditional fully-connected neural networks in that they use  kernels applied across network layers which reduce the dimension of the data and the number of parameters that need to be trained. The architecture of CNNs was inspired by seminal experiments investigating the structure of the visual cortex \cite{hubel1968receptive,fukushima1982neocognitron}. Indeed, CNNs have been employed for image-based tasks as their structure allows for local pattern recognition; reviews of state-of-the-art techniques can be found in Refs.~\cite{rawat2017deep,sharma2018analysis,al2017review}.

Despite the tremendous success of NNs, several difficulties were encountered during their development that hindered the trainability of the NN's parameters. While multi-layer perceptrons are more powerful than single-layer ones~\cite{haykin1994neural,minsky2017perceptrons}, their trainability is more challenging, and the backpropagation method was developed to address this issue~\cite{rumelhart1986learning}. In addition, training difficulties can also arise in NNs that employ backpropagation and gradient-based methods, as the loss function gradients can be vanishing~\cite{hochreiter2001gradient}. For CNNs this phenomena can be mitigated by choosing rectifier neuron activation functions~\cite{glorot2011deep} or by implementing layer-by-layer training.

With the advent of noisy, intermediate-scale quantum (NISQ) computers, the field of quantum machine learning has brought new approaches to solving computational problems dealing with quantum data~\cite{schuld2015introduction,biamonte2017quantum,guan2020quantum,sharma2020reformulation}. Among the most promising applications are variational quantum algorithms (VQAs)~\cite{VQE,qaoa2014,QAQC,sharma2020noise,bravo-prieto2019,cirstoiu2020variational,arrasmith2019variational,cerezo2019variational, cerezo2020variational2,bharti2021noisy}, and more generally, quantum neural networks (QNNs)~\cite{schuld2014quest,Romero,farhi2018classification,killoran2019continuous,verdon2019quantumHamiltonian,beer2020training,tacchino2020quantum,cong2019quantum}. In the near-term, these architectures employ a noisy hardware to evaluate a cost (or loss function), while leveraging the power of classical optimizers to train the parameters in a quantum circuit, or a neural network.

Recently, tremendous effort has been put forward to analyze the trainability of the cost functions of VQAs and QNNs, as it has been shown that the optimization  landscape can exhibit the so-called {\it barren plateau} phenomenon where the cost function gradients of a randomly initialized ansatz vanish exponentially with the problem size~\cite{mcclean2018barren,cerezo2020cost,sharma2020trainability,cerezo2020impact,wang2020noise,holmes2020barren,volkoff2020efficient,campos2020abrupt,marrero2020entanglement,abbas2020power}. If the cost exhibits a barren plateau, an exponentially large precision is needed to navigate through the landscape, rendering the architecture unscalable.

Specifically, it has been shown that barren plateaus can arise for different architectures such as VQAs with deep random parametrized quantum circuits~\cite{mcclean2018barren} or with global cost functions~\cite{cerezo2020cost}. Additionally, recent studies have shown that more general perceptron-based dissipative QNNs~\cite{beer2020training}  can also show barren plateau landscapes~\cite{sharma2020trainability,marrero2020entanglement}, proving that this is a general phenomenon in many quantum machine learning applications.  While strategies have been proposed to mitigate the effects of barren plateaus and improve trainability~\cite{VQSD,grant2019initialization,volkoff2020large,verdon2019learning,skolik2020layerwise,cerezo2020variational,commeau2020variational,bharti2020iterative,cervera2020meta, patti2020entanglement,patti2020entanglement}, much work still remains to be done to guarantee the efficient trainability of VQAs and QNNs.

In this work we analyze the trainability and the existence of barren plateaus in the Quantum Convolutional Neural Network (QCNN) architecture introduced by Cong et al. in~\cite{cong2019quantum}. A very similar architecture was proposed by Grant et al. in~\cite{grant2018hierarchical} under the name Hierarchical Quantum Classifier, which our work applies to as well. Motivated by the structure of CNNs, in a QCNN a series of convolutional layers are interleaved with pooling layers which effectively reduce the number of degrees of freedom, while preserving the relevant features of the input state. We remark that due to the specific architecture of the QCNN, its trainability does not follow from previous works. QCNNs have been successfully implemented for error correction, quantum phase detection~\cite{cong2019quantum, maccormack2020branching}, and image recognition~\cite{franken2020explorations}, and their shallow depth makes them a promising architecture for the near-term.

Here we provide a rigorous analysis of the scaling of the QCNN cost function gradient, under the following assumptions: (1) all the unitaries in the QCNN form independent (and uncorrelated) $2$-designs; (2) the cost function is linear with respect to the input density matrix. Under those assumptions, we show that the variance of the cost function partial derivatives is at most polynomially vanishing with the system size. This implies that the cost function landscape does not exhibit a barren plateau, and hence that the QCNN architecture is trainable under random initialization of parameters. We obtain our results by introducing a novel method to compute expectation values of operators defined in terms of Haar-distributed unitaries, which we call the Graph Recursion Integration Method (GRIM). Intuitively, our results follow from the intrinsically short depth of the QCNN (logarithmic depth), and the local nature of the cost function. We then also provide trainability guarantees for the special case where the QCNN is composed only of pooling layers. Finally, we perform numerical simulations to verify our analytical results.

Our paper is organized as follows. In Section~\ref{sec:theo} we start by reviewing the necessary theoretical background on QCNNs, barren plateaus and Haar-distributed unitaries. The GRIM is introduced in Sec.~\ref{sec:GRIM}. Section~\ref{sec:maintheorems} contains our main results in the form of Theorem~\ref{theo1} and concomitant Corollary~\ref{coro1} which show the absence of barren plateaus for randomly initialized QCNNs. Then, in Sec.~\ref{sec:pooling} we present Theorem~\ref{theo2} which analyzes the pooling-based QCNN, while Sec.~\ref{sec:numerics} contains our numerical results. Finally, our discussions and conclusion are presented in Sec.~\ref{sec:discussion}. The proof of our main results are presented in the Appendix.

\section{Theoretical Framework}\label{sec:theo}

\subsection{The QCNN architecture}

As depicted in Fig.~\ref{fig:QCNN}(a), the Quantum Convolutional Neural Network (QCNN) architecture  takes as input an $n$-qubit input state $\rhoi$, in a  Hilbert space $\HC_{\tin}$, which is sent through a circuit composed of a sequence of convolutional and pooling layers. The convolutional layer is composed of two rows of parametrized two-qubit gates acting on alternating pairs of neighboring qubits. In each pooling layer half of the qubits are measured, with the measurement outcomes controlling the unitary applied to neighbouring qubits. After $L$ convolutional and pooling layers, the QCNN  contains a  fully connected layer that applies a unitary to the remaining qubits. Finally, at the end of the QCNN one measures the expectation value of some Hermitian operator $O$.

From the previous description we have that the input state to the QCNN is mapped to a reduced state in a Hilbert space $\HC_{\tout}$ whose dimension is much smaller than that of $\HC_\tin$. The output state can  be expressed as 
\begin{equation}\label{eq:rhoout}
    \rhoo(\thv)=\Tr_{\overline{\tout}}[V(\thv)\rhoi V\ad(\thv)]\,.
\end{equation}
Here, $V(\thv)$ is the unitary that contains the gates in the convolutional and pooling layers plus the fully connected layer,  $\thv$ is the vector of the trainable parameters, and  $\Tr_{\overline{\tout}}$ denotes the partial trace over all qubits except those in $\HC_{\tout}$.  Note that the non-linearities in a QCNN arise from the pooling operators (measurement and conditioned unitary) in the pooling layers, which effectively reduce the degrees of freedom in each layer.

\begin{figure}[t]
    \centering
    \includegraphics[width=.95\columnwidth]{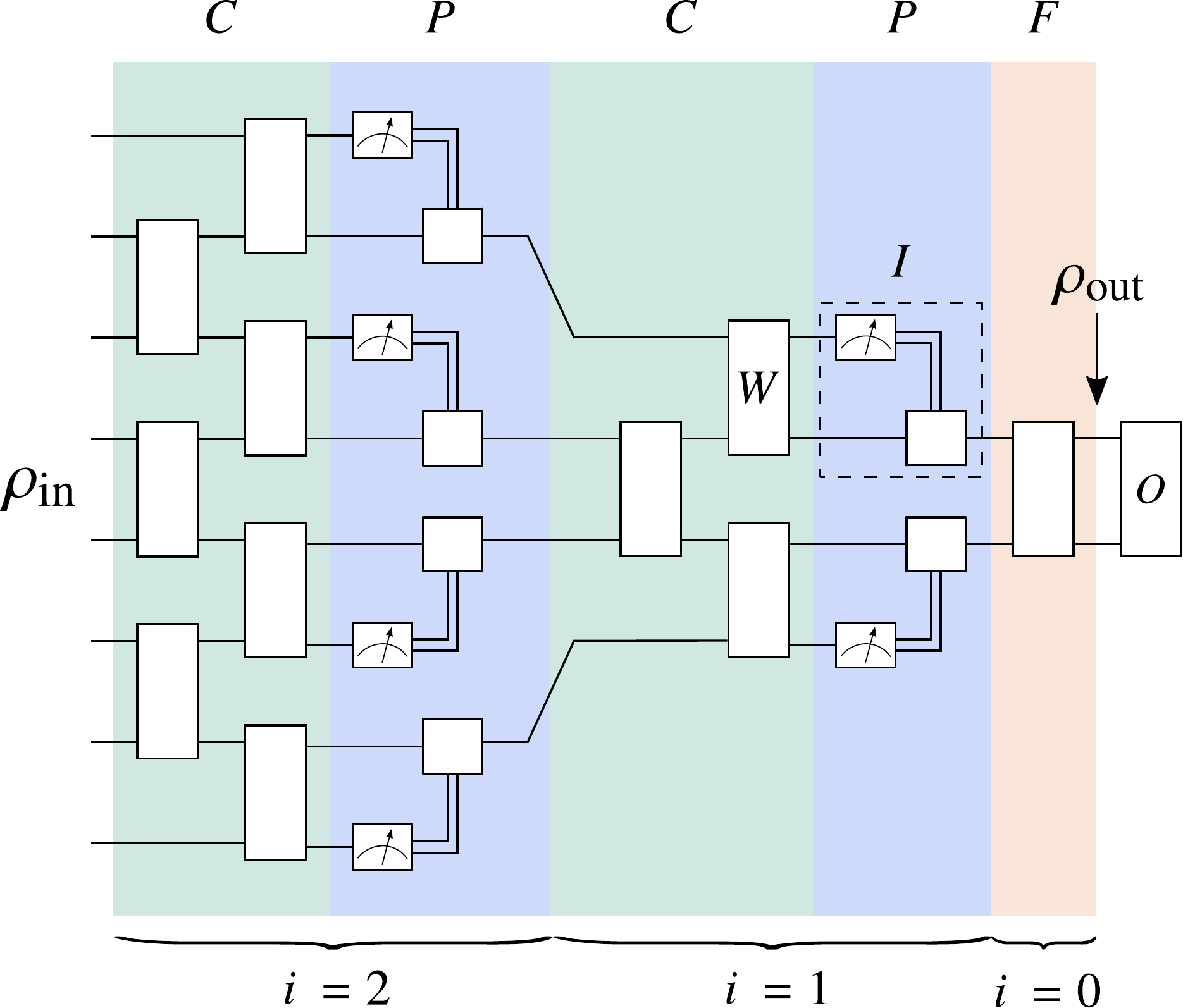}
    \caption{Schematic representation of a quantum convolutional neural network (QCNN). The input of the QCNN is an $n$-qubit quantum state $\rho_{\tin}$. The state $\rho_{\tin}$ is then sent through a sequence of $L$ convolutional (C) and pooling (P) layers. The convolutional layers are composed of two rows of two-qubit unitaries (W) acting on alternating pairs of qubits. The pooling layer is composed of pooling operators (indicated in the dashed box). In each pooling module a qubit is measured, with the measurement outcome controlling a unitary applied to a neighboring qubit (I). After the final pooling layer, one applies a fully connected unitary (F) to the remaining qubits and obtains an output state $\rho_{\tout}$ whose dimension is much smaller than that of $\rho_{\tin}$. Finally, one measures the expectation value of some operator $O$ over the state $\rho_{\tout}$.}
    \label{fig:QCNN}
\end{figure}

\subsection{Cost function} \label{sec:cost-function}

The goal of the QCNN is to employ a training set $\SC$ (of size $M=|\SC|$) containing input states $\{\rhoi^\alpha\}_{\alpha=1}^{M}$ to optimize the parameters in the QCNN and minimize a cost function which we assume can be expressed as
\begin{align}\label{eq:cost}
    C(\thv)=&\sum_{\rho^\alpha_{\tin}\in\SC} c_\alpha \Tr[\rhoo^\alpha(\thv) O ]\,,
\end{align}
where $c_\alpha$ are real coefficients, and where $\rhoo ^\alpha(\thv)$ are obtained from~\eqref{eq:rhoout} for each input state $\rhoi$. 

For example, in a binary classification problem the training set is usually of the form $\SC=\{\rhoi^\alpha,y^\alpha\}_{\alpha=1}^M$ with $y^\alpha=0$, or $y^\alpha=1$. Here we can divide $\SC$ into two subsets $\SC_0$ and $\SC_1$, of sizes $M_0=\vert \SC_0\vert$ and $M_1=\vert \SC_1\vert$ (such that $M=M_0+M_1$),  respectively composed of states associated with outputs $0$, and $1$. In this case, Eq.~\eqref{eq:cost} can be explicitly expressed as 
\begin{equation}
    C(\thv)=\sum_{\rho_{\tin}^\alpha\in \SC_0} \frac{\Tr[ \rho_{\tout}^\alpha (\thv)O]}{M_0} -  \sum_{\rho_{\tin}^\alpha\in \SC_1}\frac{\Tr[ \rho_{\tout}^\alpha(\thv)O ]}{M_1}\,,
\end{equation}
where the operator $O$ is such that for any state $\tau\in\HC_{\tout}$ we have $\Tr[\tau O]\in[0,1]$.

Here we remark that it is  convenient to rewrite the cost function as
\begin{align}\label{eq:costreduced}
        C(\thv)=&  \Tr[V(\thv)\sigma V\ad(\thv) \widetilde{O}]\,,
\end{align}
where $\widetilde{O}=\left( O  \otimes \id_{\overline{\tout}}\right)$, and $\id_{\overline{out}}$ is the identity operator over the Hilbert space $\HC_{\overline{\tout}}$ which is defined such that $\HC_{\tin}=\HC_{\tout}\otimes \HC_{\overline{\tout}}$. In addition, here we define the Hermitian operator 
\begin{equation}
    \sigma=\sum_{\rho^\alpha_{\tin}\in\SC} c_\alpha \rhoi^\alpha\,,
\end{equation}
which is a quantum state for the special case when $ c_\alpha\geq0$ for all $\alpha$, and when $\sum c_\alpha=1$. As shown below, our results are valid independently of $\sigma$ being a quantum state.

\subsection{Ansatz}

In what follows we consider for simplicity the case where $n=2^k$ and $L=\log(n)=k$, so that $\dim(\HC_{\tout})=2$.  Moreover, we assume that the unitaries in the convolutional and pooling layers are independent. That is, the convolutional and the fully connected layer in $V(\thv)$ are composed of two qubit parameterized unitary blocks acting on neighboring qubits, which we denote as $W _{ij}(\thv_{ij})$. We henceforth use the terminology of sub-layers, such that the unitaries in the first sub-layer act before those in the second-sub-layer. On the other hand, the pooling layers are composed of pooling operators $I_{ij}$. Here, $i=0,\ldots, L$ is the layer index with $i=0$ corresponding to the fully connected layer, while $j$ is the index that determines block placement within the layer.   We remark that this generalization contains as a special case the usual QCNN structure where the blocks in the same convolutional or pooling layer are identical.  Moreover, as discussed in our results section, correlating the unitaries in the convolutional layers tends to increase the magnitude of the cost function gradient.

Given a two-qubit unitary $W_{ij}(\thv_{ij})$ in a convolutional or fully-connected layer  it is common to expand it as a product of two-qubit gates of the form
\begin{equation}\label{eq:expandW}
    W_{ij}=\prod_\eta e^{-i \theta_{\eta}H_\eta}W_\eta\,,
\end{equation}
where $H_\eta$ is a Hermitian operator, $e^{-i \theta_{\eta}H_\eta}$ is a parameterized gate (such as a rotation), $W_\eta$ is an unparameterized gate (such as a CNOT), and where $\thv_{ij}=(\theta_1,\ldots,\theta_\eta,\ldots)$.  For simplicity of notation here we omit the parameter dependence in $W_{ij}$.  Note that  Eq.~\eqref{eq:expandW} can be readily used to determine the circuit description of $W_{ij}$. Moreover, if the operator $H_\eta$ has eigenvalues $\pm 1$ (or more generally, if it only has two distinct eigenvalues) one can evaluate the cost partial derivative via the parameter-shift rule~\cite{mitarai2018quantum,schuld2019evaluating}.

On the other hand, the modules in the pooling layers can be thought of as a map from a two-qubit Hilbert space $\HC_{AB}=\HC_A\otimes \HC_B$ to a single-qubit Hilbert space $\HC_B$. Given a two-qubit state $\tau\in\HC_{AB}$, the action of the measurement and the controlled unitary is given by $\tau_2=\Tr_A\left[I_{ij}\tau \right]$, where $\Tr_A$ denotes the partial trace over $\HC_A$, and where 
\begin{equation}\label{eq:isom}
 I_{ij}= \left(\Pi_0\otimes U_0^{ij} + \Pi_1\otimes U_1^{ij}\right)\,.
\end{equation}
Here $\Pi_k=\dya{i}$ for $k=1,0$ and $U_k^{ij}$  are parameterized single qubit unitaries. It is straightforward to verify that the operators $I_{ij}$ are unitaries: $I_{ij}I_{ij}\ad=I_{ij}\ad I_{ij}=\id$.

\subsection{Trainability and variance of the cost}

Consider a given trainable parameter $\theta_\mu$ in a unitary $W_{ij}$ appearing in the $\ell$-th layer of $V(\thv)$, which for simplicity we denote as $W$, and let the cost function partial derivative with respect to $\theta_\mu$ be $\partial C(\thv)/\partial \theta_\mu=\partial_\mu C$. Then, let us define the two-qubit Hilbert space where $W$ acts on as $\HC_w$, such that $\HC_{\tin}=\HC_w\otimes \HC_{\overline{w}}$.

As shown in~\cite{mcclean2018barren,cerezo2020cost,sharma2020trainability,cerezo2020impact}, the trainability of the parameters in a randomly initialized QCNN can be analyzed by studying the scaling of the variance 
\begin{equation}\label{eq:variance}
    \Var[\partial_\mu C]=\left\langle(\partial_\mu C)^2\right\rangle-\left\langle\partial_\mu C\right\rangle^2\,,
\end{equation}
where the expectation value $\langle\cdots\rangle$ is taken over the parameters in $V(\thv)$. We recall that under standard assumptions (see below) we have $\left\langle\partial_\mu C\right\rangle=0$~\cite{mcclean2018barren,cerezo2020cost}. Hence, from Chebyshev's inequality, Eq.~\eqref{eq:variance} bounds the probability that the cost function partial derivative deviates from zero by a value larger than a given $c\geq 0$ as
\begin{equation}\label{eq:Chebyshev}
    \Pr[|\partial_\mu C|\geq c] \leq \frac{\Var[\partial_\mu C]}{c^2}\,.
\end{equation}
Hence, if the variance is exponentially small (i.e., if the cost exhibits a barren plateau), the cost function gradient will be on average exponentially small and an exponential precision will be needed to navigate the flat landscape. In contrast, large variances show that no barren plateaus are present and that the trainability of the parameters under random initialization can be guaranteed.  

Let us now present the explicit form of $\partial_\mu C$. From Eqs.~\eqref{eq:costreduced} and~\eqref{eq:expandW} we find that
\begin{align}\label{eq:partialC}
    \partial_\mu C \!=\! \Tr\left[W_A V_L\sigma V_L\ad W_A\ad \left[H_\mu, W_B\ad V_R\ad \widetilde{O} V_RW_B\right]\right],
\end{align}
where we write $V=V_RWV_L$. That is, $V_R$ and $V_L$ contain all gates in the QCNN except for $W$. We henceforth omit the parameter dependency for simplicity of notation. Moreover, from the expansion of $W$ in~\eqref{eq:expandW}  we also define
 \begin{align}
     W_A=\prod_{\eta\leq\mu} e^{-i \theta_{\eta}H_\eta}W_\eta\,, \quad   W_B=\prod_{\eta>\mu} e^{-i \theta_{\eta}H_\eta}W_\eta,
 \end{align}
 so that $W=W_BW_A$. We remark that it is straightforward to verify that if $H_\mu$ is an idempotent operator, then  $\left\langle\partial_\mu C\right\rangle=0$.

\subsection{Haar-distributed unitaries}

When training the QCNN, a common strategy is to randomly initialize the parameters in $V(\thv)$~\cite{cong2019quantum}.  Hence, it is convenient to define  $\WC_{ij}$ as the set of the unitaries obtained from $W_{ij}$ for each random initialization. To analyze the trainability of the QCNN we assume that the sets $\WC_{ij}$ form independent local $2$-designs. 

Let us recall the definition of a  $t$-design. Let $\WC=\{W_y\}_{y\in Y}$ be a  set of unitaries  in a $d$-dimensional Hilbert space. Then, $\WC$ is a $t$-design if for every polynomial $P_{t,t}(W)$ of degree $t$ in the matrix elements of $W$ and of degree $t$ in those of $W\ad$, the expectation value of $P_{t,t}(W)$ over the set $\WC$ is equal to the expectation value over the Haar distribution. That is,
\begin{equation}
    \frac{1}{|Y|}\sum_{y\in Y} P_{t,t}(W_y)=\int d\mu(W) P_{t,t}(W)\,
\end{equation}
where the integral is taken with respect to the Haar measure over the unitary group of degree $d$.

The previous assumption has the following key consequence. Let us recall that the pooling layer is formed by pooling operators defined according to Eq.~\eqref{eq:isom}. Then, as  shown in  Fig.~\ref{fig:QCNN} every pooling operator $I$ is preceded by a unitary  $W$ in the second sub-layer of the convolutional layer such that their combined action can be obtained from $IW$. As $W$ forms a $2$-design it is left- and right-invariant under the action of the unitary group. That is, for any function $F(W)$, and for any unitary matrix $A$
\begin{align}\label{eq:invariance}
    \int  F(AW)d\mu(W)=&\int F(W) d\mu(W)\,.
\end{align}
Equation~\eqref{eq:invariance} shows that the action of the operators $I_{ij}$ in the pooling layer can be absorbed into the action of the unitaries in the convolutional layers as the variance of the cost function partial derivative is independent of the controlled unitaries in $I_{ij}$. Hence, as schematically shown in Fig.~\ref{fig:2}, we can consider that the unitary $V(\thv)$ of the QCNN is simply composed of two-qubit unitaries $W_{ij}$ which form independent $2$-designs. 

\begin{figure}[t]
    \centering
    \includegraphics[width=.8\columnwidth]{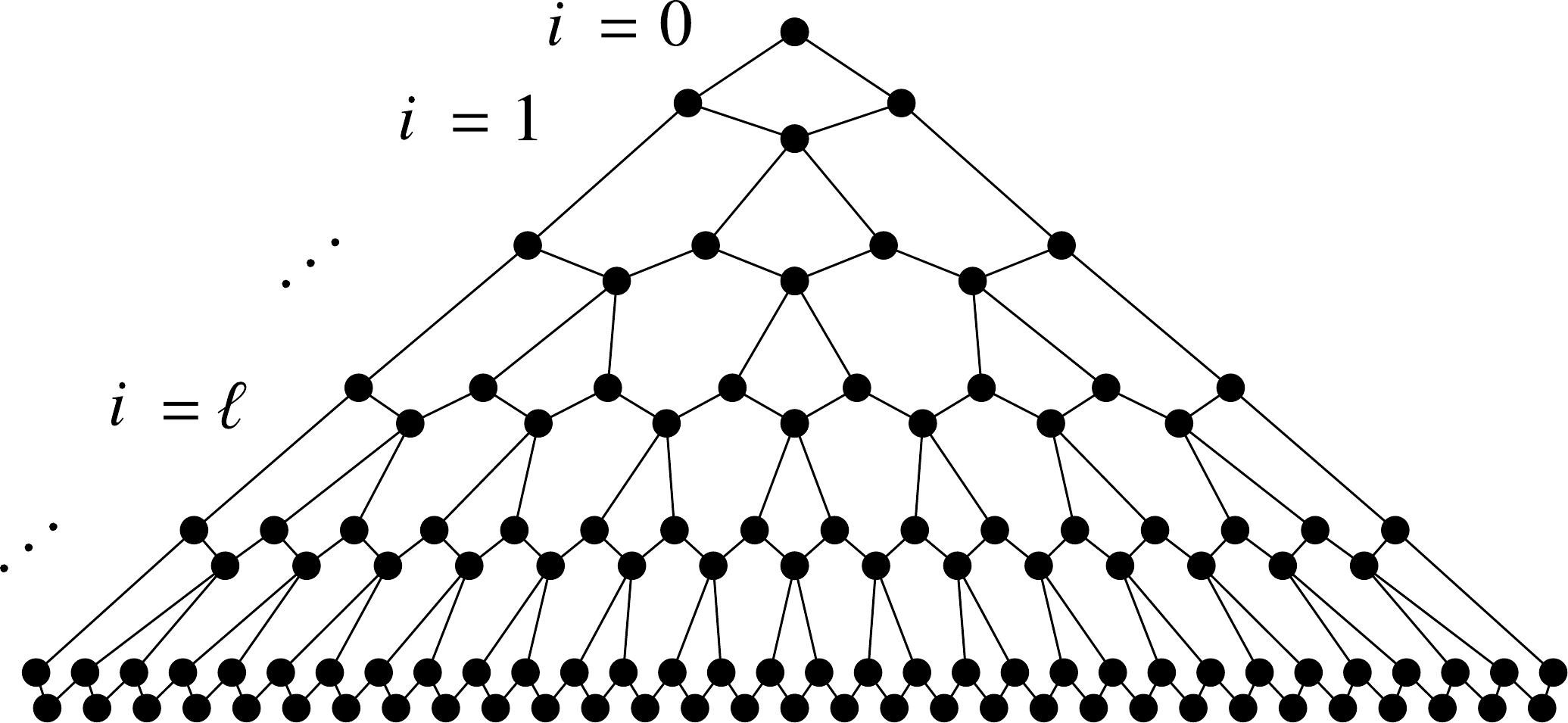}
    \caption{Tensor network representation of the QCNN. Here each unitary $W_{ij}$  in the convolutional and fully connected layers forms a $2$-design. The operators $I_{ij}$ of~\eqref{eq:isom} in the pooling layer do not modify the variance $\Var[\partial_\mu C]$ as their action can be absorbed by the action of the unitaries in the convolutional layers. Each tensor (circle) corresponds to a $W_{ij}$, with $L=0$ corresponding to the unitary in the fully connected layer. Circle with three  legs indicates that a qubit is traced out after their action.   }
    \label{fig:2}
\end{figure}

Finally, let us remark that as shown in~\cite{cerezo2020cost}, if $W_A$ and $W_B$ form independent $2$-designs, then $\langle \partial_\mu C\rangle =0$ and Eq.~\eqref{eq:variance} can be expressed as 
\begin{equation}\label{eq:Var-C}
 \Var[\partial_\mu C]=\frac{2\Tr[H_\mu^2]}{225}\sum_{\substack{\vec{p}\vec{q}\\\vec{p'}\vec{q'}}}\avg{\Delta\Omega_{\vec{q}\vec{p}}^{\vec{q'}\vec{p'}}}_{V_R}\!\!\avg{\Delta\Psi_{\vec{p}\vec{q}}^{\vec{p'}\vec{q'}}}_{V_L}\!\!.
\end{equation}
Here the summation runs over all bitstrings $\vec{p}$, $\vec{q}$, $\vec{p'}$, $\vec{q'}$ of length $2^{n-2}$. In addition, we define
\begin{align}
        \Delta\Omega_{\vec{q}\vec{p}}^{\vec{q'}\vec{p'}} &= \Tr[ \Omega_{\vec{q}\vec{p}}\Omega_{\vec{q'}\vec{p'}}] -\frac{\Tr[ \Omega_{\vec{q}\vec{p}}]\Tr[\Omega_{\vec{q'}\vec{p'}}]}{4}\,, \label{eq:Delta-Psi} \\
    \Delta\Psi_{\vec{p}\vec{q}}^{\vec{p'}\vec{q'}} &=  \Tr[\Psi_{\vec{p}\vec{q}}\Psi_{\vec{p'}\vec{q'}}]-\frac{\Tr[\Psi_{\vec{p}\vec{q}}]\Tr[\Psi_{\vec{p'}\vec{q'}}]}{4}\,,
\end{align}
 where  $\Omega_{\vec{q}\vec{p}}$ and $\Psi_{\vec{q}\vec{p}}$ are operators on $\HC_{w}$ defined as 
\begin{align}
    \Omega_{\vec{q}\vec{p}}&=\Tr_{\wbar}\left[(\ketbra{\vec{p}}{\vec{q}}\otimes\id_{w})V_R\ad \widetilde{O}V_R\right]\,,\label{eq:omegapq}
    \\
    \Psi_{\vec{p}\vec{q}}&=\Tr_{\wbar}\left[(\ketbra{\vec{q}}{\vec{p}}\otimes\id_{w})V_L \sigma V_L\ad\right]\,.
\end{align}
We use $\Tr_{\wbar}$ to denote the trace over all qubits not in $\HC_{w}$.

\section{Main results}\label{sec:main}

In this section we present our main results. First, we introduce a novel method for analyzing the scaling of the variance of Eq.~\eqref{eq:Var-C} which we call the Graph Recursion Integration Method (GRIM). Specifically, our method can be used when employing the Weingarten calculus to integrate unitaries over the unitary group. As discussed below, the GRIM is based on finding recursions when integrating groups of unitaries to form a graph which can be readily used to compute the scaling of the cost function partial derivative variance in~\eqref{eq:Var-C}. Our second main result is that we employ the GRIM to obtain a lower bound on $\Var[\partial_\mu C]$ for the QCNN architecture. Moreover, this lower bound can be used, under certain standard assumptions,  to guarantee the trainability of the QCNN.  We then present our results to guarantee the trainability of a pooling-based QCNN. Finally, we present heuristical results obtained from numerically computing $\Var[\partial_\mu C]$. 

\subsection{Graph Recursion Integration Method (GRIM)} \label{sec:GRIM}

Let us consider the task of computing the expectation values $\langle{\Delta\Omega_{\vec{q}\vec{p}}^{\vec{q'}\vec{p'}}\rangle}_{V_R}$ and $\langle{\Delta\Psi_{\vec{p}\vec{q}}^{\vec{p'}\vec{q'}}\rangle}_{V_L}$ in Eq.~\eqref{eq:Var-C}. In~\cite{cerezo2020cost} it was shown that one can solve this problem by sequentially integrating each Haar-distributed unitary  via the elementwise formula of the Weingarten calculus~\cite{collins2006integration,puchala2017symbolic} (see Appendix).

While the previous approach has been successfully used to analyze the trainability of parametrized quantum circuits and of QNNs~\cite{cerezo2020cost,sharma2020trainability} it has the difficulty that the number of terms that one has to keep track of increases exponentially with the number of unitaries integrated. The GRIM allows us to circumvent this difficulty by simultaneously integrating specific groups of unitaries and recursively grouping the resulting terms to form a {\it graph} $\GC_w$. As such, the GRIM has to be implemented to form a graph for the integration of the gates in $V_L$ and likewise for the integration of the gates in $V_R$. In what follows we consider the case of obtaining the graph to compute the expectation value of $\langle{\Delta\Omega_{\vec{q}\vec{p}}^{\vec{q'}\vec{p'}}\rangle}_{V_R}$. Moreover,  we henceforth assume that $V_R$ is composed of all the unitaries in the forward light-cone $\LC_F$ of $W$. 

The first step of the GRIM is to group the unitaries in $V_R$ into {\it modules} that cover all of  $\LC_F$. We remark that different modules will lead to different graphs, and that the fewer distinct modules one employs, the smaller the graph will be in terms of the number of distinct nodes. In Fig.~\ref{fig:all-module}(a) we show the three distinct basic modules needed to analyze the QCNN architecture, such that for any $W$ in the $\ell$-th layer, $\LC_F$ can be covered with $\ell$ such modules. Specifically,  Fig.~\ref{fig:all-module}(b) depicts an example where $\LC_F$ can be covered by three middle modules $M_{\MC}$, an edge module $M_{\EC}$, and a center module $M_{\CC}$.

\begin{figure}[t]
    \centering
    \includegraphics[width=.9\columnwidth]{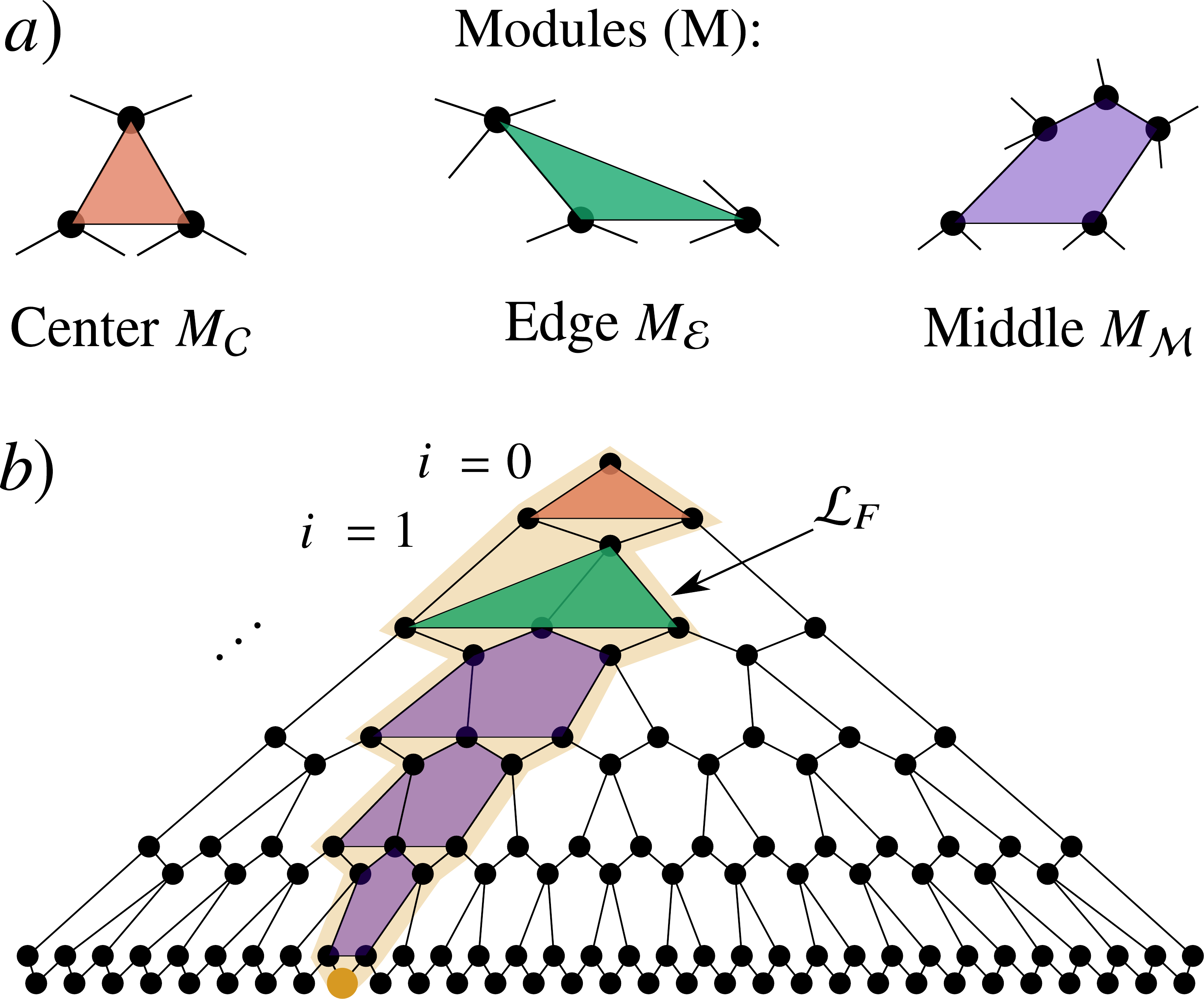}
    \caption{ GRIM modules for the QCNN architecture. (a) Basic modules $M$  from left to right, we refer to these modules as the center $M_{\CC}$, edge $M_{\EC}$, and middle $M_{\MC}$ module. Note that while both $M_{\CC}$ and $M_{\EC}$ contain three unitaries, their contractions are different. (b) Example of the first step of the GRIM. The forward light-cone $\LC_F$ of $W$ can be covered with the three basic modules. Here, $W$ is indicated with a larger colored circle.    }
    \label{fig:all-module}
\end{figure}

The next step in the GRIM is to integrate the gates in each module $M$ to form the graph. Each node $N_i\in\GC_w$ will correspond to different contractions of operators containing the modules, while the edges of the graph  will be associated with real coefficients $\lambda_{i,j}\in(0,1)$. In this notation, the coefficient $\lambda_{i,j}$ corresponds to the oriented edge $N_i\rightarrow N_j$. Moreover, the initial node can always be defined from~\eqref{eq:Delta-Psi}  as
\begin{equation}\label{eq:nodecenter}
    N_1(T)=\Tr[TT']-\frac{\Tr[T]\Tr[T']}{4}\,.
\end{equation}
where $T$ and $T'$ are two qubit operators. Following~\eqref{eq:omegapq}, here $T'$  contains primed operators obtained from $\ketbra{\vec{p}'}{\vec{q}'}$.  Then, as shown in the Appendix, the result of integrating a given module $M$ can be expressed as
\begin{equation}\label{eq:integrationmodule}
     \int_{ M}\! \prod d\mu(W_{ij})  N_i(T)=\! \sum_j e_{ij} N_j(\widetilde{T})\,,
\end{equation}
where the integral is taken over the unitaries $W_{ij}$ that appear in a module $M$, and where the terms $e_{ij}$ lead to the edge coefficients $\lambda_{i,j}$. Moreover, here the operator $\widetilde{T}$ is obtained from $T$ such that it contains the same modules as the ones in $T$ except for the one that was integrated according to~\eqref{eq:integrationmodule}. 

\begin{figure}[t]
    \centering
    \includegraphics[width=.9\columnwidth]{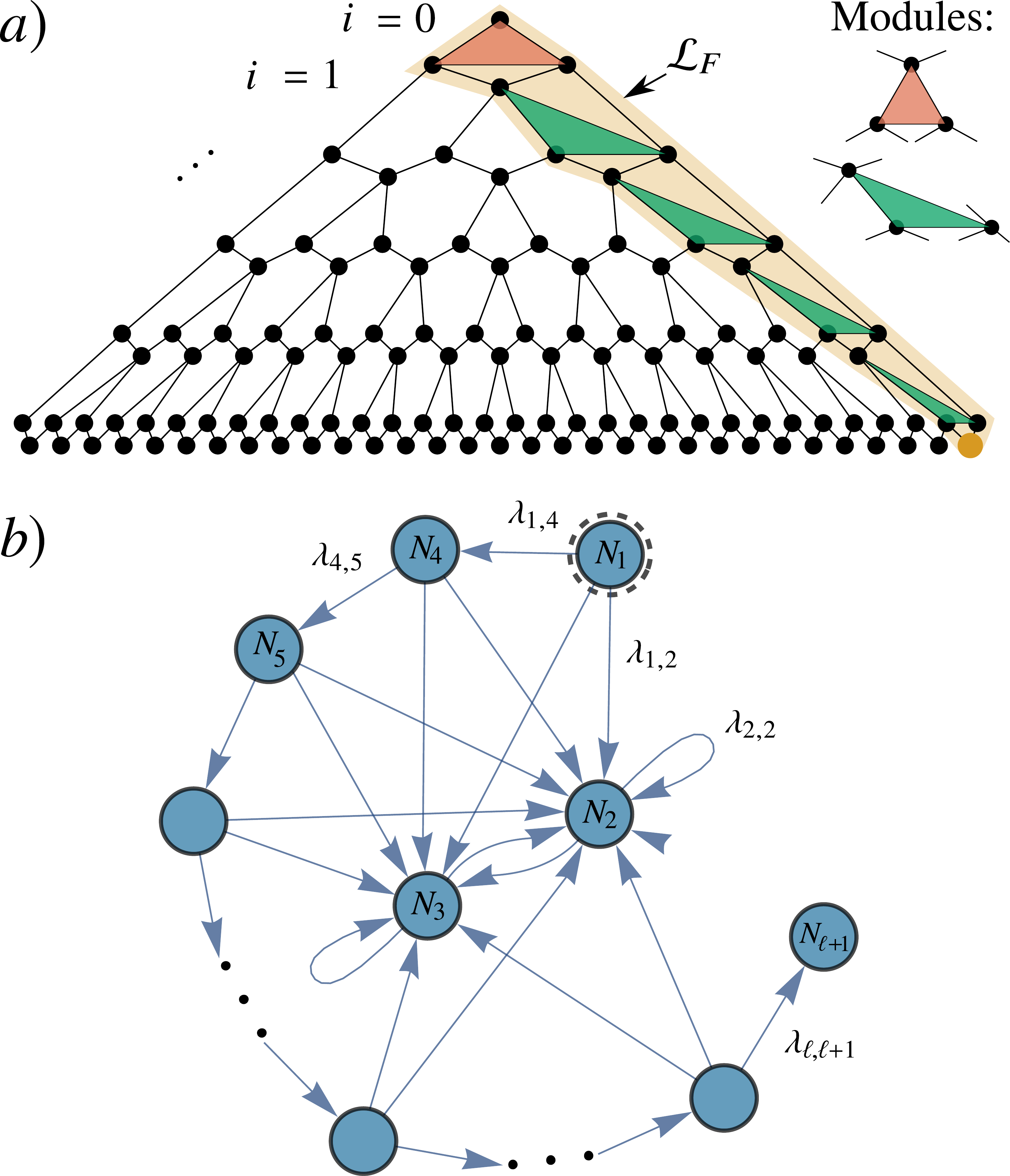}
    \caption{ GRIM  for the case when $W$ is in the edge of $\ell$-th layer of the QCNN.  (a)  The forward light-cone can be covered with $(\ell-1)$ edge modules $M_{\EC}$ and a single center module $M_{\CC}$. Here, $W$ is indicated with a larger colored circle. (b) Graph obtained from the GRIM by integrating all modules $M_{\EC}$ in ar$\LC_F$ for the case in (a). Graph $\GC_w$ obtained via GRIM. Here the edge coefficients $\lambda_{i,j}$ are real and positive numbers such that    $\lambda_{i,j}\in(0,1)$. Moreover, for this case the number of nodes is $(\ell+1)$ and the structure of the graph is such that that one can easily obtain the graph for any value of $\ell$.  }
    \label{fig:edge-module}
\end{figure}

Equation~\eqref{eq:integrationmodule} is at the basis of the GRIM as it shows that when a module is integrated one connects the nodes $N_i\rightarrow N_j$ for all $j$ in the summation. By sequentially integrating all unitaries, one can form an oriented graph which always starts with  $N_1$ according to~\eqref{eq:Delta-Psi}. As discussed in the Appendix, the number of nodes in the graph, denoted as $|\GC_w|$, is always in $\OC(\ell)$ for the QCNN architecture. Additionally, the structure of the graphs $\GC_w$ is recursive, which allows us to obtain results valid for arbitrary $\ell$ and for all placements of the unitary $W$. In Fig.~\ref{fig:edge-module}(a) we show an example where $W$ is in the edge of the $\ell$-th layer, and where the forward light-cone $\LC_F$ is composed of $(\ell-1)$ edge modules and a single center module. The graph  that arises from integrating the unitaries in the $M_{\EC}$ modules is shown in Fig.~\ref{fig:edge-module}(b). We refer the reader to the Appendix for additional details on the derivation of this graph and the definition of the $N_i$ nodes. For this specific case the graph contains $|\GC_w|=(\ell+1)$ nodes.

Once the graph  $\GC_w$ has been formed, one can use it to explicitly compute the contribution of expectation values  in the variance of~\eqref{eq:Var-C}. Specifically, the following Proposition, proved in the Appendix, holds
\begin{proposition}\label{prop1}
Let $W$ be a unitary in the $\ell$-th layer of the QCNN. The contribution of $\langle{\Delta\Omega_{\vec{q}\vec{p}}^{\vec{q'}\vec{p'}}\rangle}_{V_R}$ to the variance can be computed via the GRIM as
\begin{align}\label{eq:finalresult}
     \Var[\partial_\mu C]\geq\frac{\Tr[H_\mu^2] \varepsilon_O}{450}
    \avg{\varepsilon_{\widetilde{\sigma}_w}}_{V_L}
    \sum_{\vec{g}\in P_\ell(\GC_w)} \Lambda_{\vec{g}} \,,
\end{align}
where $\widetilde{\sigma}_w=\Tr_{\overline{w}}[V_L \sigma V_L\ad]$ is the reduced operator $\sigma$ in the subsystem of the qubits that $W$ acts on, and where we defined for an operator $O$ 
\begin{equation}
    \varepsilon_O=D_{HS}\left(O,\Tr[O]\frac{\id}{4}\right)\,,
\end{equation}
with $D_{HS}(A,B)=\Tr[(A-B)^2]$ the Hilbert-Schmidt distance between $A$ and $B$. Here $\GC_w$ denotes the oriented graph obtained by integrating all the modules in $\LC_F$, $P_\ell(\GC_w)$ is the set of all paths of length $\ell$ over $\GC_w$, and $\vec{g}=(g_1,\ldots,g_\ell)$ is a vector which indicates the nodes in the path such that $g_i\in[1,\ldots,|\GC_w|]$. Finally, $\Lambda_{\vec{g}}$ are real positive numbers obtained as
\begin{equation}
    \Lambda_{\vec{g}}=\prod_{\xi=1}^\ell\lambda_{g_\xi,g_{\xi+1}}\,,
\end{equation}
where $\lambda_{g_\xi,g_{\xi+1}}\in(0,1)$ are the edge coefficients associated with the oriented  edge $N_\xi\rightarrow N_{\xi+1}$. 
\end{proposition}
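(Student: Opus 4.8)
The plan is to evaluate the $V_R$-average in Eq.~\eqref{eq:Var-C} by the GRIM and then to lower-bound the resulting expression by a manifestly non-negative path sum over the graph $\GC_w$. First I would use the left-/right-invariance of the $2$-designs, Eq.~\eqref{eq:invariance}, to discard every unitary lying outside the forward light-cone $\LC_F$ of $W$: integrating such a gate returns an identity-type factor, so that $\avg{\Delta\Omega_{\vec q\vec p}^{\vec q'\vec p'}}_{V_R}$ depends only on the gates in $\LC_F$. With the light-cone isolated, I would identify the summand of Eq.~\eqref{eq:Var-C} with the initial node $N_1$ of Eq.~\eqref{eq:nodecenter} evaluated on the operators $\Omega_{\vec q\vec p}$ and $\Omega_{\vec q'\vec p'}$ of Eq.~\eqref{eq:omegapq}; indeed the $\Delta$-structure in Eq.~\eqref{eq:Delta-Psi} is exactly $N_1(T)=\Tr[TT']-\tfrac14\Tr[T]\Tr[T']$.

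The core of the argument is the recursion. I would cover $\LC_F$ by the three basic modules of Fig.~\ref{fig:all-module}(a) and integrate them one at a time via the Weingarten calculus, proving the single-module identity Eq.~\eqref{eq:integrationmodule}, $\int_M\prod d\mu(W_{ij})\,N_i(T)=\sum_j e_{ij}N_j(\widetilde T)$. The crucial structural fact to establish here is \emph{closure}: the finite collection of node operators $\{N_i\}$ is mapped back into itself by integrating any module, with non-negative coefficients $e_{ij}$ that, after normalization, become the edge weights $\lambda_{i,j}\in(0,1)$. This is what guarantees that $\GC_w$ has only $\OC(\ell)$ nodes and a recursive structure independent of $\ell$ (as in Fig.~\ref{fig:edge-module}(b)). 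Having done so, I would proceed by induction on the number of integrated modules: integrating all $\ell$ modules composes $\ell$ such maps, and expanding the composition reorganizes the result as a sum over oriented paths $\vec g=(g_1,\dots,g_\ell)$ of length $\ell$ in $\GC_w$, each weighted by $\Lambda_{\vec g}=\prod_\xi\lambda_{g_\xi,g_{\xi+1}}$.

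It remains to read off the terminal factors and assemble the bound. The path terminates at the center module $M_{\CC}$ carrying the output operator $\widetilde O$; evaluating the center node yields a factor proportional to $\varepsilon_O=D_{HS}(O,\Tr[O]\id/4)$, the base case of the recursion. Separately, performing the bitstring sums over $\vec p,\vec q,\vec p',\vec q'$ against $\avg{\Delta\Psi_{\vec p\vec q}^{\vec p'\vec q'}}_{V_L}$ collapses the $V_L$ factor onto the reduced operator $\widetilde\sigma_w=\Tr_{\overline{w}}[V_L\sigma V_L\ad]$, producing $\avg{\varepsilon_{\widetilde\sigma_w}}_{V_L}$. Collecting the prefactor $2/225$ of Eq.~\eqref{eq:Var-C} with the dimensional factor generated at the terminal integration gives the stated constant $1/450$. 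The inequality (rather than an equality) enters because the exact bitstring sum contains additional cross terms; since all edge weights $\lambda_{i,j}$ are positive and both $\varepsilon_O,\varepsilon_{\widetilde\sigma_w}\geq0$, these extra contributions are non-negative and may be discarded to obtain the lower bound Eq.~\eqref{eq:finalresult}.

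I expect the main obstacle to be the closure step inside Eq.~\eqref{eq:integrationmodule}: one must carry out the explicit Weingarten integration for each of the three module types and verify that the output always decomposes in the same fixed finite node basis with positive coefficients, so that the graph is finite and the path-sum representation is exact. A secondary difficulty is controlling the sign of the discarded bitstring cross terms, which is what licenses passing from the exact integral to the inequality.
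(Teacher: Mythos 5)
Your construction of the graph and the path sum follows the paper's own route: the paper likewise covers $\LC_F$ with the three module types, integrates module by module via the Weingarten calculus, establishes the closure property you describe, and arrives at an exact walk expansion whose terminal factor is $N_1(O)=(\delta_{\vec{p}\vec{q}})_{\LCb}(\delta_{\vec{p'}\vec{q'}})_{\LCb}\,\varepsilon_O$. The genuine gap is in your final assembly step, where you assert that the bitstring sums against $\avg{\Delta\Psi_{\vec{p}\vec{q}}^{\vec{p'}\vec{q'}}}_{V_L}$ directly ``collapse'' onto $\avg{\varepsilon_{\sigmat_w}}_{V_L}$, with the inequality arising merely from discarding non-negative cross terms. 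That is not the mechanism, and as stated it does not work. The edge objects $e_{\alpha,\beta,s}$ produced by each module integration are not numbers but \emph{operators} containing Kronecker deltas in $\vec{p},\vec{q},\vec{p}',\vec{q}'$ (the paper's $p_{\alpha,\beta,s}$), so each walk induces a path-dependent delta pattern that splits the traced-out qubits into sets $S_P$ and $\Sb_P$. Carrying out the bitstring sum \emph{exactly} (the paper's Lemma~\ref{lemma:s1-s2}, which rests on the trace identity of Lemma~\ref{lemma:trace-identity}) yields $\avg{D_{HS}\bigl(\sigmat_{\Sb_P,w},\,\sigmat_{\Sb_P}\otimes\tfrac{\id}{4}\bigr)}_{V_L}$: a Hilbert--Schmidt distance on the enlarged, path-dependent subsystem $\Sb_P\cup w$, not $\varepsilon_{\sigmat_w}$, and no sign argument alone relates the two.

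The missing idea is the passage from that path-dependent quantity to the single factored term $\avg{\varepsilon_{\sigmat_w}}_{V_L}$. The paper does this with the inequality $D_{HS}\bigl(R_{AB},R_A\otimes\tfrac{\id_B}{d_B}\bigr)\geq D_{HS}\bigl(R_B,\tfrac{\id}{d_B}\bigr)/(d_A d_B)$ (Lemma~\ref{lemma:dhs-inequality}), whose proof combines the Frobenius/trace norm equivalence with the monotonicity of the trace distance under partial trace (Lemma~\ref{lemma:monotonicity}). Applied with $B=w$ ($d_B=4$) and $A=\Sb_P$ ($d_A=2^{|\Sb_P|}$), this one step is simultaneously the source of (i) the inequality sign in the proposition, (ii) the factor $1/4$ that turns the prefactor $2/225$ of Eq.~\eqref{eq:Var-C} into $1/450$ (you misattribute this to ``the terminal integration''), and (iii) the factors $1/2^{|\sbar|}$ that convert the operator-valued edges into the numerical weights $\lambda_{\alpha\beta}=\sum_s a_{\alpha,\beta,s}/2^{|\sbar|}$ appearing in $\Lambda_{\vec{g}}$ (which you gloss as ``normalization''). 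Without Lemma~\ref{lemma:dhs-inequality} or an equivalent argument, the exact path sum you construct cannot be put in the factored form of Eq.~\eqref{eq:finalresult}.
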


Note that $\varepsilon_O=N_1(O)$ from Eq.~\eqref{eq:nodecenter} by taking $T=T'$, meaning that  here we are using the surprising fact that the final node is {\it always}  $N_1(O)$. Proposition~\ref{prop1} shows that given a unitary $W$, the contribution from the expectation value of $\Delta\Omega_{\vec{p}\vec{q}}^{\vec{p'}\vec{q'}}$ can be ultimately obtained by adding up all the paths in the oriented graph $\GC_w$ weighed by the coefficients associated with the edges taken over the path.   Moreover, since $\varepsilon_O$ and $\varepsilon_{\widetilde{\sigma}_w}$ are always positive,   any single path over the graph leads to a lower bound for $\Var[\partial_\mu C]$. Finally, we remark that one can also employ the GRIM to  integrate the gates in $V_L$ to compute the expectation value $\avg{\varepsilon_{\widetilde{\sigma}_w}}_{V_L}$. 

\begin{figure}[t]
    \centering
    \includegraphics[width=.9\columnwidth]{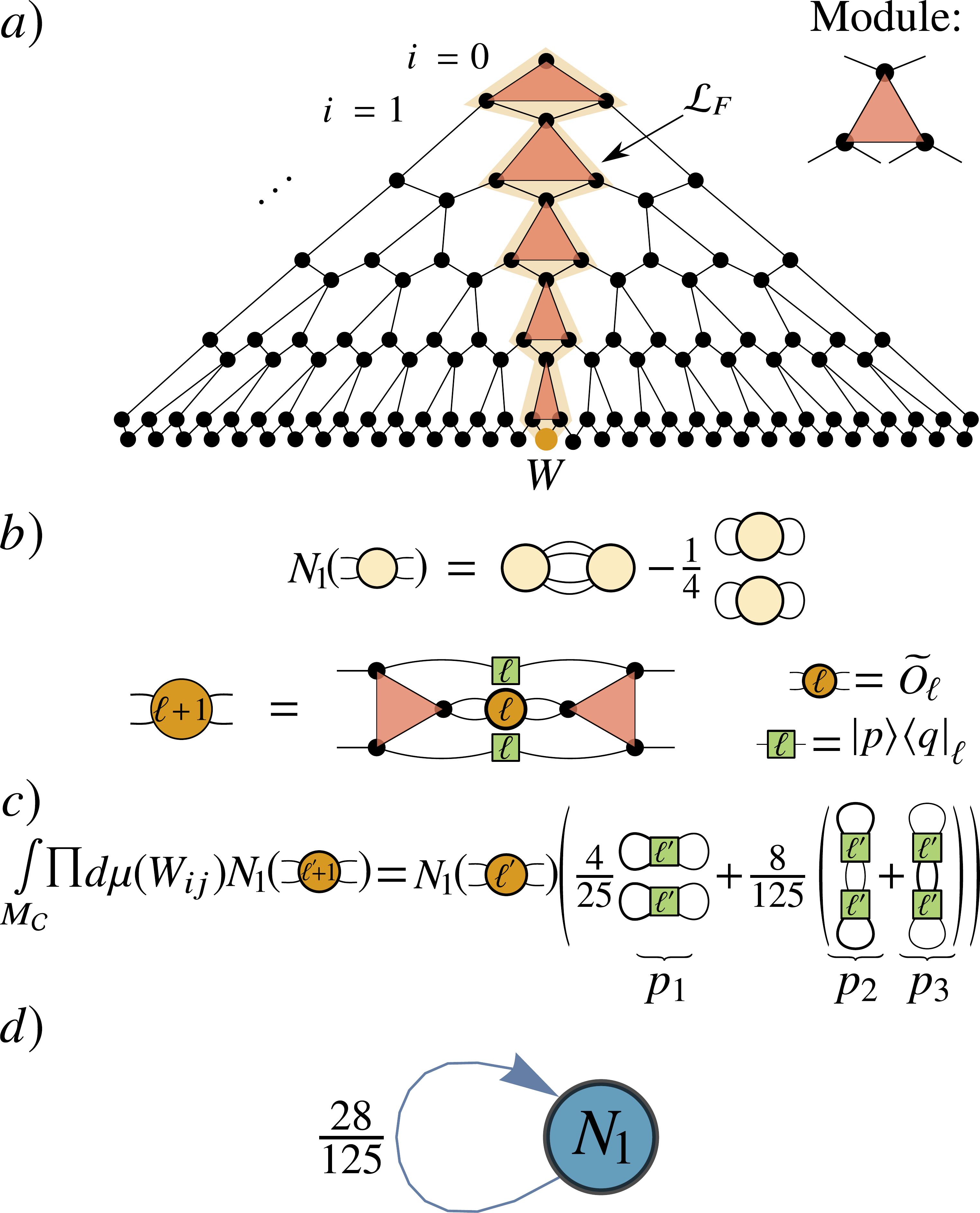}
    \caption{GRIM  for the case when $W$ is in the center of $\ell$-th layer of the QCNN. (a)  The unitaries in $\LC_F$ can be covered by $\ell$ center modules $M_{\CC}$.  (b) Schematic representation of the tensor contraction $N_1$ of Eq.~\eqref{eq:nodecenter}. Here we also show the recursive relation between the operators $\widetilde{O}_\ell$.  (c) Tensor network representation of~\eqref{eq:integration} obtained by integrating the $(\ell'+1)$-th center module. The operators $p_2$ and $p_3$ correspond to contractions over different qubits, as indicated by the bold lines. (d) Graph $\GC_w$ obtained from the GRIM for the case in (a). }
    \label{fig:center-module}
\end{figure}

In what follows we briefly give an example of how to use the GRIM to compute the expectation value $\langle{\Delta\Omega_{\vec{q}\vec{p}}^{\vec{q'}\vec{p'}}\rangle}_{V_R}$ for the case when $W$ is the center unitary of the $\ell$-th layer of the QCNN as indicated in Fig.~\ref{fig:center-module}(a). Here we can see that $\LC_F$ can be covered with $\ell$ center modules $M_{\CC}$. Employing the notation $\widetilde{O}_\ell=\Omega_{\vec{q}\vec{p}}$ it is straightforward to see that $\langle{\Delta\Omega_{\vec{q}\vec{p}}^{\vec{q'}\vec{p'}}\rangle}_{V_R}=\langle N_1(\Ot_\ell)\rangle_{V_R}$. Then, as shown in Fig.~\ref{fig:center-module}(b) in the tensor network representation of quantum circuits,  we can always write the recursion relation
\begin{equation}
    \widetilde{O}_{\ell}=\Tr_{\overline{w}}[(\ketbra{\vec{p}}{\vec{q}}_{\ell-1}\otimes\id_{w})M_{\CC}\widetilde{O}_{\ell'-1}M_{\CC}\ad]\,.
\end{equation}
Here, $\ketbra{\vec{p}}{\vec{q}}_{\ell-1}$ denotes the projectors on the qubits that the $\ell$-th module $M_{\CC}$ acts on. With this notation it then follows that $\widetilde{O}_{0}=O$.

By employing the elementwise formula of the Weingarten calculus to integrate unitaries over the Haar distribution (see Appendix) we find for all $\ell' \leq \ell$
\begin{equation}\label{eq:integration}
    \int_{\!M_{\CC}}\!\!   N_1(\widetilde{O}_{\ell'}) d\mu= \frac{4N_1(\widetilde{O}_{\ell'-1})}{25}\left(p_1+\frac{2(p_2+p_3)}{5}\right)\,,
\end{equation}
where the integration is over all unitaries $W_{ij}$ in the $\ell'$-th module $ M_{\CC}$. As depicted in  Fig.~\ref{fig:center-module}(c) the coefficients $p_i$ with $i=1,2,3$  arise from different contractions of $\ketbra{\vec{p}}{\vec{q}}_{\ell'-1}$. Then, as shown in the Appendix, the operators $p_2$ and $p_3$ lead to a factor $1/2$, while the operator $p_1$ leads to a factor of $1$ so that $\frac{4}{25}(e_1+\frac{2(e_2+e_3)}{5})=28/125$. Equation~\eqref{eq:integration} shows how the structure of the graph $\GC_w$ emerges, as integrating a module connects the node $N_1$ with itself with a coefficient $\lambda_{1,1}=28/125$. The ensuing graph is presented in Fig.~\ref{fig:center-module}(d), and we can see that it is composed of a single node. 

\subsection{Trainability of the QCNN}\label{sec:maintheorems}

Here we present the main result of our article, which is derived using the GRIM. The detailed proofs can be found in the Appendix. For convenience of notation we here consider the case when $W$ is in the first sub-layer. The case where it belongs to the second sub-layer is considered in the Appendix and we remark that the results are substantially the same for both cases, albeit with some notational differences.  

\begin{theorem}\label{theo1}
Consider the QCNN cost function of Eq.~\eqref{eq:costreduced} and let $W$ be a unitary in the first sub-layer of the $\ell$-th layer of $V(\thv)$. Moreover, let us assume  that the unitaries $W_{ij}$, $W_A$ and $W_B$ in the convolutional and fully connected layers form independent $2$-designs. Then, the lower bound of the variance of the cost partial derivative with respect to a parameter $\theta_\mu$ in  unitary $W$ can be computed, using the GRIM, from the graph $\GC_w$ as
\begin{align}
    \Var[\partial_\mu C]\geq F_n(L,\ell)\,,
\end{align}
with
\begin{equation}\label{eq:functionlower}
    F_n(L,\ell)=\frac{1}{9}\frac{\Tr[H_\mu^2] \varepsilon_O \varepsilon_{\sigma_{w}} }{50^{L-\ell+1}} \sum_{\vec{g}\in P_\ell(\GC_w)} \Lambda_{\vec{g}}.
\end{equation}
Here $\sigma_w=\Tr_{\overline{w}}[\sigma]$ is the reduced operator $\sigma$ in the subsystem of the qubits that $W$ acts on. Moreover, $\varepsilon_O$, $P_\ell(\GC_w)$, $\vec{g}$, and $\Lambda_{\vec{g}}$ are defined in Proposition~\ref{prop1}.
\end{theorem}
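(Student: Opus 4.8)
The plan is to obtain Theorem~\ref{theo1} directly from Proposition~\ref{prop1} by supplying the one ingredient that \eqref{eq:finalresult} leaves implicit, namely a lower bound on the backward-light-cone average $\avg{\varepsilon_{\widetilde\sigma_w}}_{V_L}$. Comparing the prefactor $1/450$ in \eqref{eq:finalresult} with the target prefactor $\tfrac19\,50^{-(L-\ell+1)}$ in \eqref{eq:functionlower}, and using $450/9=50$, one sees that the theorem follows the moment one establishes
\begin{equation}
\avg{\varepsilon_{\widetilde\sigma_w}}_{V_L}\geq\frac{\varepsilon_{\sigma_w}}{50^{\,L-\ell}}\,,
\end{equation}
where $\sigma_w=\Tr_{\wbar}[\sigma]$. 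Thus the entire content of the proof is this single inequality; everything else is substitution into \eqref{eq:finalresult} and collection of constants.

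First I would set up the GRIM a second time, now for the gates of $V_L$ rather than those of $V_R$. The object to integrate is $\varepsilon_{\widetilde\sigma_w}=N_1(\widetilde\sigma_w)$, which by \eqref{eq:nodecenter} with $T=T'=\widetilde\sigma_w$ is exactly the diagonal specialization of the initial node; this is what makes the backward graph strictly simpler than the forward one, since no primed/unprimed bookkeeping survives. Because $V_L$ is supported on the backward light cone of $W$, only the $L-\ell$ layers between $W$ and the input survive the $2$-design averaging, and the modules covering this cone can be integrated one layer at a time through the same Weingarten identity used for \eqref{eq:integration}. Each such integration is a recursion of the form $\avg{N_1(\cdot)}=\lambda\,\avg{N_1(\cdot')}+(\text{nonnegative})$, where $\cdot'$ is the reduced operator one layer closer to the input; iterating $L-\ell$ times and invoking the \emph{final node is always} $N_1$ property terminates the recursion at $N_1(\sigma_w)=\varepsilon_{\sigma_w}$.

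At that point the backward average is a sum, over paths of length $L-\ell$, of products of nonnegative edge coefficients multiplying $\varepsilon_{\sigma_w}$. Since every coefficient is positive, retaining a single path gives a valid lower bound, exactly as in the remark following Proposition~\ref{prop1}. It then remains to show that every edge coefficient of the backward graph is bounded below by $1/50$, which I would verify by explicitly integrating each of the three basic module types (center, edge, middle) that can appear in the backward cone and taking the worst case; a single-qubit pooling step already yields the clean factor $\avg{\varepsilon_\tau}=\tfrac25\,\varepsilon_\rho$ for a two-qubit $\rho$ reduced to one qubit $\tau$, so $1/50$ is a comfortable uniform lower bound once the full two-sublayer module is accounted for. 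Multiplying $L-\ell$ such factors yields the displayed inequality, and substitution into \eqref{eq:finalresult} gives $F_n(L,\ell)$ after using $450/9=50$. Finally I would note that the case where $W$ lies in the second sub-layer is identical up to relabeling of which qubit is traced, so it is deferred to the Appendix.

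The main obstacle is the last step: proving a \emph{uniform} per-layer lower bound $\lambda\geq 1/50$ across all module types in a backward light cone that \emph{widens} towards the input, unlike the forward cone which narrows to the single output qubit. Tracking the reduced operator through this widening geometry, and checking that the single-operator ($T=T'$) module integrations never produce a coefficient smaller than $1/50$ — including the pooling traces and the controlled-unitary structure of the $I_{ij}$ absorbed into the convolutional gates via \eqref{eq:invariance} — is where the careful Weingarten bookkeeping concentrates; the rest of the argument is the mechanical path-counting and constant collection already licensed by Proposition~\ref{prop1}.
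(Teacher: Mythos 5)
Your proposal reproduces the paper's proof essentially step for step: the paper also obtains Theorem~\ref{theo1} by combining Proposition~\ref{prop1} (prefactor $1/450$) with precisely the backward-light-cone bound $\avg{\varepsilon_{\widetilde{\sigma}_w}}_{V_L}\geq 50^{-(L-\ell)}\,\varepsilon_{\sigma_w}$, proven by recursively integrating the modules tiling an effective backward cone, keeping the $\tfrac{1}{50}\,\varepsilon_{\widetilde{\sigma}_w^{(\ell+2)}}$ term at each step and discarding the remaining terms, which are manifestly non-negative Hilbert--Schmidt distances. The only simplification relative to your plan is that for $W$ in the first sub-layer the effective backward cone is tiled by a \emph{single} repeated module type, so no worst case over center/edge/middle modules is needed; the extra initial-gate factors ($1/5$ or $1/50$) arise only in the second sub-layer case, which the paper likewise defers to the appendix.
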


%    |\---/|
%    | ,_, |
%     \_`_/-..----.
%  ___/ `   ' ,""+ \  
% (__...'   __\    |`.___.';
%   (_,...'(_,.`__)/'.....+

From Theorem~\ref{theo1} we can obtain the following corollary.

\begin{corollary}\label{coro1}
Consider the function $F_n(L,\ell)$ defined in Eq.~\eqref{eq:functionlower}. Then, since $L$ is at most $\OC(\log(n))$, the variance $\Var[\partial_\mu C]$ is at most polynomially vanishing with $n$  as  
\begin{equation}\label{eq:corlower}
    F_n(L,\ell)\in\Omega\left(\frac{1}{\poly(n)}\right)\,, 
\end{equation}
provided that $\Tr[H_\mu^2] \varepsilon_O \varepsilon_{\sigma_{w}}\in \Omega\left(\frac{1}{\poly(n)}\right)$.
\end{corollary}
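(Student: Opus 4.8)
The plan is to bound each of the three $n$-dependent factors appearing in $F_n(L,\ell)$ of Eq.~\eqref{eq:functionlower} separately and to verify that none of them decays faster than polynomially in $n$. The factor $\Tr[H_\mu^2]\varepsilon_O\varepsilon_{\sigma_{w}}$ is already assumed to lie in $\Omega(1/\poly(n))$, so the work reduces to controlling the prefactor $1/50^{L-\ell+1}$ and the path-sum $\sum_{\vec{g}\in P_\ell(\GC_w)}\Lambda_{\vec{g}}$ from below. Since a product of finitely many quantities each bounded below by an inverse polynomial is itself bounded below by an inverse polynomial, establishing these two bounds suffices.

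First I would dispatch the prefactor. Because $0\leq\ell\leq L$ we have $L-\ell+1\leq L+1$, and since the depth satisfies $L\in\OC(\log n)$ (indeed $L=\log_2 n$ for $n=2^k$), it follows that $50^{L-\ell+1}\leq 50^{L+1}=50\cdot n^{\log_2 50}$, a fixed polynomial in $n$. Hence $1/50^{L-\ell+1}\in\Omega(1/\poly(n))$. The key step is the path-sum, where I would invoke the positivity observation stated after Proposition~\ref{prop1}: every weight $\Lambda_{\vec{g}}=\prod_\xi\lambda_{g_\xi,g_{\xi+1}}$ is a product of edge coefficients $\lambda_{i,j}\in(0,1)$, so the full sum is lower bounded by the weight $\Lambda_{\vec{g}^*}$ of any single length-$\ell$ path $\vec{g}^*$ in $\GC_w$. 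Because the graphs $\GC_w$ for the QCNN are built exclusively from the three fixed modules $M_{\CC},M_{\EC},M_{\MC}$ of Fig.~\ref{fig:all-module}, the set of edge coefficients that can ever appear is a finite collection of absolute constants; letting $\lambda_{\min}>0$ be the smallest of them, I obtain $\Lambda_{\vec{g}^*}\geq\lambda_{\min}^{\ell}$. Since $\ell\leq L\in\OC(\log n)$ and $\lambda_{\min}<1$, this gives $\lambda_{\min}^{\ell}\geq\lambda_{\min}^{L}=n^{-\log_2(1/\lambda_{\min})}\in\Omega(1/\poly(n))$.

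Combining the three inverse-polynomial lower bounds yields $F_n(L,\ell)\in\Omega(1/\poly(n))$, which is precisely the claimed absence of a barren plateau. The main obstacle I anticipate is not the arithmetic but the uniformity of the single-path bound: one must establish that $P_\ell(\GC_w)$ is nonempty for every placement of $W$ and every value of $\ell$, i.e.\ that the recursively constructed graph always admits at least one directed path of the full length $\ell$. I expect this to follow from the recursive structure of $\GC_w$ displayed in Figs.~\ref{fig:edge-module} and~\ref{fig:center-module}, where integrating each successive module advances a path by one node while preserving a surviving outgoing edge (for instance the self-loop with $\lambda_{1,1}=28/125$ in the center case). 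Verifying this path-existence property uniformly in $\ell$ and in the location of $W$ is the one ingredient that the argument genuinely rests on.
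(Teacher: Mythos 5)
Your reduction has the right skeleton, and it is in fact the same skeleton the paper uses: handle the $1/50^{L-\ell+1}$ prefactor via $L\in\OC(\log n)$, lower bound the path-sum by a single path, and bound that path's weight by $\lambda_{\min}^{L}\in\Omega(1/\poly(n))$. Your treatment of the prefactor is correct and even more explicit than the paper's. But the step you lean on to produce $\lambda_{\min}$ rests on a false premise, and it is precisely the step whose verification constitutes the paper's entire proof of the corollary.

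Specifically, the claim that ``the set of edge coefficients that can ever appear is a finite collection of absolute constants'' because the graph is built from three fixed module types does not hold. The edge coefficients produced by integrating a module depend not only on the module but on the node $N_\alpha$ it is integrated against, and the graphs $\GC_w$ have $\OC(\ell)$ \emph{distinct} nodes whose defining coefficients depend on their index: in the edge-module graph the coefficients include $\lambda_{5+k,5+k+1}=\frac{16(k+1)-4}{125(4k-1)}$, and in the middle-module graph they involve the recursively defined sequences $a_k,b_k,p_k,q_k$. So across all $\ell$ the coefficient set is infinite, and the existence of a positive, $n$-independent $\lambda_{\min}$ is not automatic; it has to be extracted either from the asymptotics of these explicit formulas or by exhibiting a path that avoids the $k$-dependent edges entirely (e.g., the self-loop at $N_2$ with constant coefficient $24/125$ in the edge graph, or $28/125$ in the center case). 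Likewise, the path-existence property you flag at the end — that $P_\ell(\GC_w)$ is nonempty for every placement of $W$ and every $\ell$, and that every such path terminates at $N_1(O)$ with positive weight — is exactly what the paper proves by its three-case analysis (coverings by center, edge, and middle modules, for $W$ in the first or second sub-layer), including the explicit verification that integrating the final center module against \emph{any} node $N_\alpha$ yields a positive multiple of $\varepsilon_O$. Without those computations (the paper's tables of $\lambda_{\alpha,\beta}$ and the node-integration identities), your argument assumes the conclusion of the hard part rather than establishing it.
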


Let us now consider the main implication of this result. Corollary~\ref{coro1} provides conditions under which no barren plateaus arise and the trainability of the parameters in the unitaries of the QCNN architecture can be guaranteed.  This is due to the fact that the determination of a cost minimizing direction in the parameter hyperspace requires only a polynomially large precision. This is in contrast to landscapes which exhibit a barren plateau, where an exponentially large precision is needed to navigate through the flat landscape. 

We remark that at the core of the result in Corollary~\ref{coro1} lies the fact that QCNNs are shallow, as they have at most a number of layers $L$  in $\OC(\log(n))$.  Such shallowness naturally arises from the intrinsic structure of the QCNN architecture, where the number of qubits is reduced at each layer. Hence, terms such as $1/50^{L-\ell+1}$ or $\Lambda_{\vec{g}}$ in Theorem~\ref{theo1} can vanish no faster than $\Omega(1/\poly(n))$ if $L$ is in $\OC(\log(n))$.

As indicated in Corollary~\ref{coro1}, the bound in~\eqref{eq:corlower} holds if $\Tr[H_\mu^2] \varepsilon_O \varepsilon_{\sigma_{w}}\in \Omega(1/\poly(n))$. The latter means that  $O$ and the reduced operator $\sigma_w$ need to respectively have a Hilbert-Schmidt distance to the (scaled) identity operator which is not exponentially vanishing. Note that this condition is expected, as extracting information by measuring operators close to the identity is naturally  a hard task. Similarly, attempting to train a quantum circuit on a state close to the identity is also a difficult task.  Here we remark that in the Appendix we relax the condition on $\sigma$ so that we can guarantee that the lower bound for $\Var[\partial_\mu C]$ is at most polynomially vanishing with $n$ if $\sigma_z$ is not exponentially close to $\Tr[\sigma]\id/4$ (as measured by the contraction in $N_1$)  on {\it any} pair of qubits in the backwards-propagated light-cone of $W$.

\subsection{Trainability from pooling}\label{sec:pooling}

As previously discussed, when the unitaries $W_{ij}$ in the convolutional layers form $2$-designs, the action of the operators $I_{ij}$ in the pooling modules can be absorbed by the action of their neighboring $W_{ij}$. Here, we analyze the trainability of the QCNN for the special case when the unitaries $W_{ij}$ act trivially, i.e. when $W_{ij}=\id$ $\forall i,j$.  This can be illustrated by the pooling module in Fig.~(\ref{fig:poolcnn}), in which an input state $\left( \rho^{(0)}\right)^{\otimes n}$ is processed by pooling layers, locally described by the channel
\begin{equation}\label{eq:statepooling}
\rho^{(i+1)}=\text{tr}_{A}\left[ I_{i+1}\left(\rho^{(i)}\right)^{\otimes 2} I_{i+1}\ad \right]\,,
\end{equation}
where 
\begin{equation}\label{eq:oppooling}
I_{i+1}=\ket{+}\bra{+}_{A}\otimes e^{-i\theta_{+}^{(i+1)}Y} + \ket{-}\bra{-}_{A}\otimes e^{-i\theta_{-}^{(i+1)}Y}\,,
\end{equation}
is the controlled unitary corresponding to the pooling. Moreover, here we assume for simplicity that $\rho^{(0)}=\ketbra{0}{0}^{\otimes n}$, where we recall that $n=2^{L}$. 

As explicitly proved in the Appendix, the following theorem holds.
\begin{theorem}\label{theo2}
Consider a CQNN where the convolutional unitaries are $W_{ij}=\id$ for all $i,j$ and where the action of pooling layers can be described by Eqs.~\eqref{eq:statepooling}--\eqref{eq:oppooling}. Then, the expectation value of the magnitude of the cost function partial derivative $\partial C/ \partial\theta^{(k)}=\partial_k C$ with respect to any parameter $\theta^{(k)}$ in a pooling module is
\begin{equation}
\left\langle \vert \partial_k C \vert \right\rangle = {1\over 2n^{\log_{2}(\pi) -1}}\,.
\end{equation}
\end{theorem}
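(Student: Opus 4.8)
The plan is to collapse the whole $n=2^L$-qubit network onto a single-qubit recursion and then exploit that, because the pooled qubit is read out in a basis orthogonal to its Bloch vector, the pooling channel acts \emph{linearly} with equiprobable measurement outcomes. First I would note that with $W_{ij}=\id$ and the product input $\dya{0}^{\ot n}$, all surviving qubits after the $i$-th layer share the same reduced state $\rho^{(i)}$, so Eqs.~\eqref{eq:statepooling}--\eqref{eq:oppooling} reduce to the single-qubit map $\rho^{(i+1)}=\tfrac12\sum_{s=\pm}e^{-i\theta_s^{(i+1)}Y}\rho^{(i)}e^{i\theta_s^{(i+1)}Y}$, the factor $\tfrac12$ arising because the control qubit's Bloch vector lies in the $xz$-plane while it is read out along the orthogonal direction. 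Since $Y$-rotations preserve the $xz$-plane and $\rho^{(0)}=\dya{0}$, I would track the state through the complex variable $w=\langle X\rangle+i\langle Z\rangle$, for which the map becomes the \emph{linear} recursion $w_{i+1}=w_i\cos(\Delta_{i+1}/2)\,e^{-i\overline\phi_{i+1}}$, with $\phi_s^{(j)}=2\theta_s^{(j)}$, $\Delta_j=\phi_+^{(j)}-\phi_-^{(j)}$ and $\overline\phi_j=\tfrac12(\phi_+^{(j)}+\phi_-^{(j)})$. Solving from $w_0=i$ gives the closed form $w_L=i\prod_{j=1}^L\cos(\Delta_j/2)\,e^{-i\sum_j\overline\phi_j}$.

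Next I would write the cost with the natural readout $O=\dya{0}$, so that $C=\langle0|\rho^{(L)}|0\rangle=\tfrac12(1+\Im\,w_L)=\tfrac12\bigl(1+\prod_j\cos(\Delta_j/2)\cos(\sum_j\overline\phi_j)\bigr)$, and differentiate with respect to a single angle $\theta_+^{(k)}$. Because $\theta_+^{(k)}$ enters only through $\Delta_k/2$ and $\overline\phi_k$, and because $\Delta_k/2+\overline\phi_k=\phi_+^{(k)}$, the product rule together with the angle-addition identity collapses the two resulting terms into a single sinusoid, $\partial_k C=-\tfrac12\bigl(\prod_{j\neq k}\cos(\Delta_j/2)\bigr)\sin\!\bigl(\phi_+^{(k)}+\sum_{j\neq k}\overline\phi_j\bigr)$. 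The point of this step is that the derivative factorizes into exactly one factor per layer — the differentiated layer contributing a clean $\sin$ and every other layer a $\cos(\Delta_j/2)$ — and the derivation for $\theta_-^{(k)}$ is identical up to signs, which is what produces the $k$-independence asserted in the statement.

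I would then take the expectation of $|\partial_k C|$ over the i.i.d.\ uniform parameters. The argument of the sine contains $\phi_+^{(k)}$, which is uniform and independent of every remaining factor, so averaging it alone gives $\langle|\sin(\cdots)|\rangle=2/\pi$; each of the remaining $L-1$ factors contributes $\langle|\cos(\Delta_j/2)|\rangle=2/\pi$, and independence across layers lets the full expectation factorize into $L$ identical factors. Collecting the prefactor yields $\langle|\partial_k C|\rangle=\tfrac12(2/\pi)^L$, and substituting $n=2^L$ — whence $(2/\pi)^L=2^L/\pi^L=n^{1-\log_2\pi}$ — reproduces the claimed $\tfrac12\,n^{-(\log_2\pi-1)}$.

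The main obstacle is the very first step: establishing that the pooling channel acts linearly on $\rho^{(i)}$ with equiprobable outcomes. If the readout instead had nonzero overlap with the state's Bloch vector, the outcome probabilities $p_\pm=\tfrac12(1\pm\langle\cdot\rangle)$ would depend on $\rho^{(i)}$, the recursion for $w$ would become nonlinear, and the clean per-layer factorization (and with it the exact $2/\pi$ factors and the $k$-independence) would break down. Hence the crux is to justify this equiprobability/linearity carefully; once it is in hand, the remaining steps — the elementary closed-form solution, the angle-addition collapse, and the factorized average — are routine.
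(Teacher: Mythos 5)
You have correctly located the crux, but the gap is real and your proposed way around it cannot work in the generality you set up. The control qubit is read out in the $\ket{\pm}$ basis, i.e.\ along the $x$-axis, which lies \emph{in} the $xz$-plane containing the Bloch vector --- it is not orthogonal to it. Tracing out the control in Eq.~\eqref{eq:statepooling} gives $\rho^{(i+1)}=\sum_{s=\pm}\bra{s}\rho^{(i)}\ket{s}\,e^{-i\theta_s^{(i+1)}Y}\rho^{(i)}e^{i\theta_s^{(i+1)}Y}$ with $\bra{\pm}\rho^{(i)}\ket{\pm}=\tfrac12\bigl(1\pm\langle X\rangle_{\rho^{(i)}}\bigr)$, so equiprobability holds only when $\langle X\rangle_{\rho^{(i)}}=0$. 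For the fully independent angles $(\theta_+^{(j)},\theta_-^{(j)})$ that you allow, this fails already after one layer: your own closed form gives $\langle X\rangle_{\rho^{(1)}}=\Re\, w_1=\cos(\Delta_1/2)\sin\overline{\phi}_1$, which is nonzero whenever $\overline{\phi}_1\neq 0$. The layer-two probabilities are then state-dependent, the recursion for $w$ becomes quadratic, and your closed form $w_L=i\prod_j\cos(\Delta_j/2)e^{-i\sum_j\overline{\phi}_j}$ is invalid. In other words, your derivation is internally inconsistent --- the linear recursion you solve contradicts the equiprobability assumption used to derive it --- and no amount of careful justification can rescue it for independent $\theta_\pm^{(j)}$.

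The paper closes exactly this hole by restricting the parametrization: each pooling layer carries a \emph{single} parameter $\theta^{(j)}$ with $\theta_\pm^{(j)}=\pm\theta^{(j)}$, and this is what the symbol $\theta^{(k)}$ in the theorem refers to (see the discussion around Eq.~\eqref{eqn:toycost}). Under this antisymmetric choice one shows by induction that the state never leaves the $z$-axis: if $\langle X\rangle_{\rho^{(j)}}=0$ the two outcomes are equiprobable, and the $x$-components generated by the two opposite rotations cancel, so $\langle X\rangle_{\rho^{(j+1)}}=0$ as well. Only then is the scalar recursion $\rho^{(j+1)}_{0,0}-\rho^{(j+1)}_{1,1}=\cos\theta^{(j+1)}\bigl(\rho^{(j)}_{0,0}-\rho^{(j)}_{1,1}\bigr)$ exact, giving $\rho^{(L)}_{0,0}=\tfrac12\bigl(1+\prod_{k}\cos\theta^{(k)}\bigr)$, the cost $C^{(L)}=\tfrac12\bigl(1-\prod_k\cos\theta^{(k)}\bigr)$, and the factorized average $\langle|\partial_k C|\rangle=\tfrac12(2/\pi)^{L}=\tfrac12\, n^{1-\log_2\pi}$ --- the same $2/\pi$-per-layer computation you perform at the end, now on solid ground. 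Two smaller discrepancies with your write-up: the derivative in the theorem is taken with respect to the single per-layer angle $\theta^{(k)}$, not an independent $\theta_+^{(k)}$; and the prefactor $\tfrac12$ matches the paper's convention in which $e^{-i\theta Y}$ rotates the Bloch vector by $\theta$, whereas with the standard Pauli convention the chain rule produces an extra factor of $2$.
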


Theorem~\ref{theo2} shows that the pooling-based QCNN does not exhibit a barren plateau as $\langle \vert \partial_k C \vert \rangle$ is polynomially vanishing with the system size. 

\begin{figure}[t]
    \centering
    \includegraphics[width=.95\columnwidth]{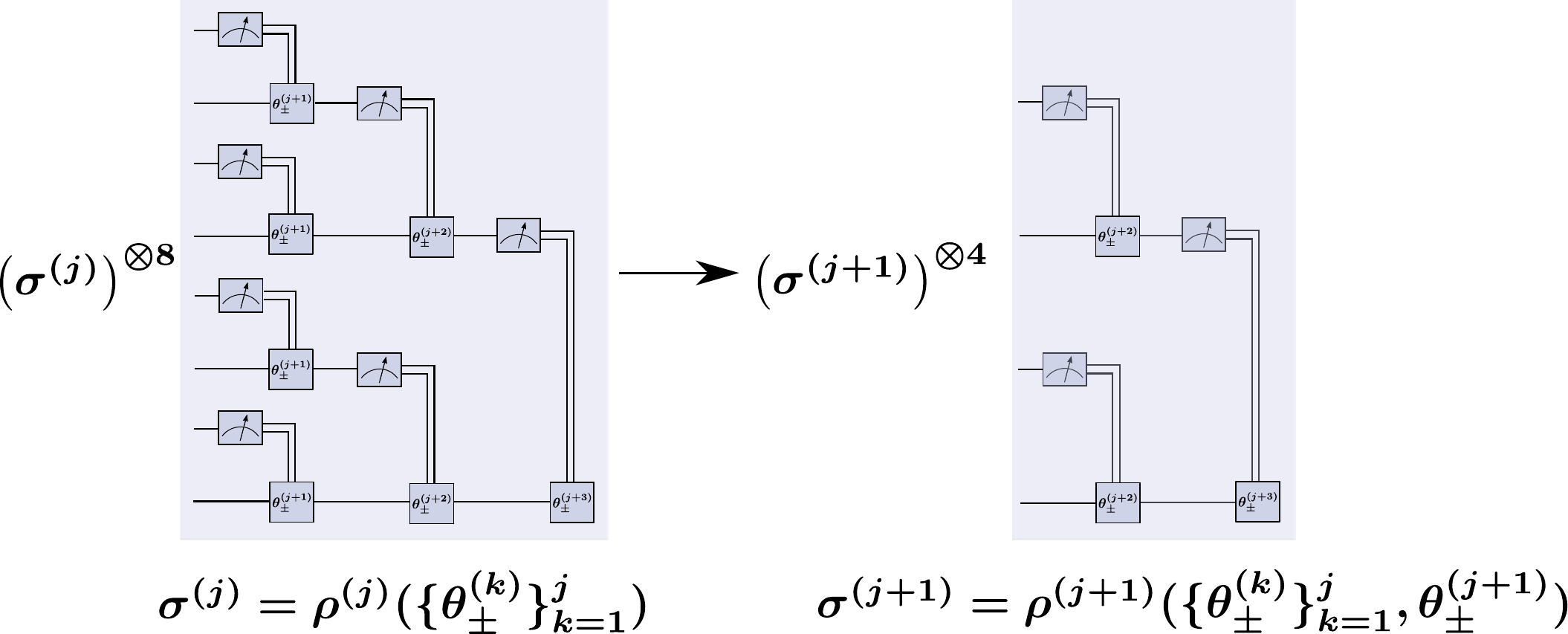}
    \caption{Schematic circuits of a pooling-based QCNN. The left  module belongs  to the layer that acts prior to one which contains the right module. As shown, after each pooling layer, half the qubits are measured.  In this special case, the unitaries in the convolutional and fully connected layers of the QCNN act trivially.}
    \label{fig:poolcnn}
\end{figure}

\subsection{Numerical Verification}\label{sec:numerics}

In this section we present numerical results to analyze the scaling of the cost function partial derivative variance. Specifically, we consider here two cases: when all the unitaries $W_{ij}$ in each convolutional module are independent, and when all the unitaries in the same layer are identical. The latter corresponds to the QCNN architecture as originally introduced in~\cite{cong2019quantum}. As shown below, our numerics indicate that correlating (``corr'') the unitaries $W_{ij}$ (to make them identical) leads to a variance of the cost function partial derivatives which is larger to the one obtained in the uncorrelated (``uncorr'')  case, i.e.,
\begin{equation}\label{eq:corrvsuncorr}
    \Var[\partial_\mu C_{\text{corr}}]\geq \Var[\partial_\mu C_{\text{uncorr}}]\,.
\end{equation}
This suggests  that the lower bound obtained in Theorem~\ref{theo1} can also be used as a lower bound for the correlated case. We remark that in~\cite{volkoff2020large} it was shown that correlating the parameters in an ansatz can lead to larger variances. 

\begin{figure}[t]
    \centering
    \includegraphics[width=.9\columnwidth]{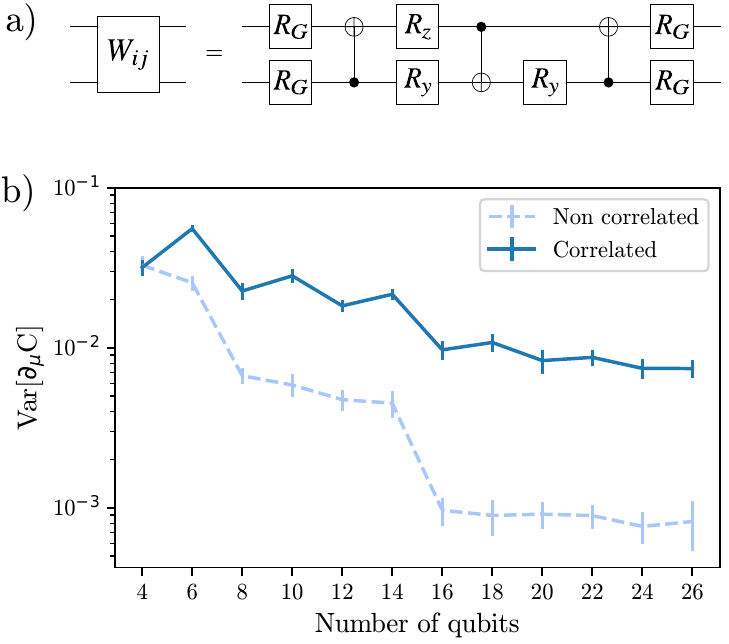}
    \caption{Ansatz and results for the numerical simulations of the QCNN. (a) Ansatz for the two qubit unitary $W_{ij}$. Here $R_G=e^{-i Z \theta_1/2}e^{-i Y \theta_2/2}e^{-i Z \theta_3/2}$ is a general single-qubit rotation parametrized by three angles, and  $X$, $Y$ are Pauli operators. In addition, $R_y$ and $R_z$ denote single-qubit rotations about the $y$ and $z$ axis, respectively. (b) Heuristical scaling of the variance of the cost function partial derivative $\Var[\partial_\mu C]$. The $y$-axis is in a log scale.  The solid (dashed) curve represents results for correlated (uncorrelated) unitaries in the convolutional layers. Error bars represent the standard deviation over 16 repetitions. Here we can see that $\Var[\partial_\mu C]$ is always larger when the unitaries are correlated. As $n$ increases we see that the curves are sub-linear, indicating a sub-exponential scaling for $\Var[\partial_\mu C]$.}
    \label{fig:qcnn-numerics}
\end{figure}

The QCNN in our heuristics is constructed such that each $W_{ij}$ is given by the parametrized circuit of Fig.~\ref{fig:qcnn-numerics}(a). As shown in~\cite{vatan2004optimal}, this decomposition allows us to prepare any two-qubit unitary (up to a global phase). Since we simulate QCNNs with an even number of qubits which was not always a power of $2$, in the pooling layers we trace out qubits so that each convolutional layer always acts on an even number of qubits. Moreover, for simplicity, we consider the case where the training set contains a single state initialized the the all-zero state $\rho=\ketbra{0}{0}^{\otimes n}$, and at the end of the QCNN we measure the operator $O=Z\otimes Z$, where $Z$ denotes the Pauli-$z$ operator. Finally, here the trainable parameter $\theta_\mu$ is in the center unitary acting on the first layer. We remark that the results obtained for this case are representative of other possible choices for the location of $\theta_\mu$.

For each number of qubits $n \in \{4,6,...,26\}$, we simulated $200$ instances of randomly initialized QCNNs and computed the derivative $\partial_\mu C$. All the simulations were performed using TensorFlow Quantum \cite{broughton2020tensorflow}. In Fig.~\ref{fig:qcnn-numerics}(b) we show the results for $\Var[\partial_\mu C]$ as a function of $n$ for the cases when the $W_{ij}$ are correlated and uncorrelated. First, let us remark that the variance for the correlated case is always larger than that of the uncorrelated case as in~\eqref{eq:corrvsuncorr}. Moreover, noting that the $y$-axis is a log scale, we can see that the curves are sub-linear, meaning that the scaling of $\Var[\partial_\mu C]$ is sub-exponential. This result then numerically verifies that QCNNs do not exhibit barren plateaus in their training landscape.

\section{Discussion}\label{sec:discussion}

% Overview of the problem

Investigating the barren plateau phenomenon in parametrized quantum circuits and quantum neural networks is fundamental to understanding their trainability. In the absence of a barren plateau, the determination of a minimizing direction in the cost function landscape does not require an exponentially large precision, meaning that one can always navigate through the landscape by measuring expectation values with a precision that grows at most polynomially with the system size. We remark that polynomial overhead is the standard goal for quantum algorithms, which aim to achieve a speedup over classical algorithms that often scale exponentially. Hence, the absence or existence of a barren plateau can determine whether or not a quantum speedup is achievable.

%Recently, such rigorous trainability results have been derived for variational quantum algorithms with layered ansatzes, and for dissipative quantum neural networks. Still, more work needs to be done to make rigorous trainability analysis a staple for the community. 

% Discussion of Grim

In this work we analyzed the gradient scaling for the Quantum Convolutional Neural Network (QCNN) architecture. We first introduced a novel method, called the Graph Recursion Integration Method (GRIM), which greatly simplifies the computation of expectation values of quantities that are expressed in terms of Haar-distributed unitaries. The goal of the GRIM is to create  an oriented graph that can be used to evaluate this expectation value.

% Discussion of results in Theorem 1 and Corollary

We then employed the GRIM to derive our main result in Theorem~\ref{theo1}, which provides a general lower bound on the variance of the cost function gradient. In Corollary~\ref{coro1} we derived the asymptotic scaling of this lower bound under standard assumptions and show that it vanishes no faster than polynomially with the system size. This result confirms the absence of barren plateaus in QCNNs and allows us to guarantee the trainability of this architecture when randomly initializing its parameters. Moreover, our work highlights QCNNs as potentially being generically trainable under random initialization, and hence sets them apart from other QNN architectures that are trainable only under certain conditions.  

% Discussion of results in Theorem 2 and numerics

Since the results in Theorem~\ref{theo1} were derived assuming that the unitaries in the convolutional layers form independent $2$-designs, we additionally considered the case when they are trivial and equal to the identity. In Theorem~\ref{theo2} we show that this pooling-based QCNN, i.e., a QCNN which is composed only of pooling layers, is also trainable when initialized randomly. In addition, we present numerical evidence to support our theoretical results, and we show that correlating the parameters in the convolutional layers (as in the original QCNN architecture), leads to larger gradient variances, meaning that our results can also be useful for that particular case. Similarly, when the two-qubit gates do not form $2$-designs, then one can expect the variance of the gradient to be larger~\cite{holmes2021connecting}. Moreover, let us remark that while we have here considered cost functions which are linear with respect to the input density matrices, as shown in Section~\ref{sec:cost-function}, this type of costs is quite general and encompasses a wide range of applications (such as classification). 

% Future research

While our work proves the trainability for the QCNN architecture, there are many future research directions that can be pursued to generalize our results. First, one can analyze more general cost functions than the ones considered here. For instance, one could consider mean-square cost functions, as the ones employed in regression problems, where $C$ is a quadratic function on the expectation value $\Tr[V(\thv)\sigma C\ad(\thv)O]$.  In addition, due to close connection between the QCNN architecture and the Multi-scale Entanglement Renormalization Ansatz (MERA)~\cite{cong2019quantum}, it would be interesting to analyze how our results and methods can be employed to derive trainability results for MERA frameworks. Finally, we believe the  GRIM will be useful in analyzing the scaling of the cost function partial derivative variance for other VQA and QNN architectures. 

{\it Note added.} After completion of this work, \cite{zhang2020toward} was posted where the authors analyze the existence of barren plateaus in QNN with a Tree Tensor structure, concluding that this architecture also does not exhibit barren plateaus. %Since tree tensor networks can always be consider as a special case of QCNNs, the results in the present manuscript generalize the absence of barren plateaus to more general architectures. 

\section*{Acknowledgements}

We thank Lukasz Cincio for helpful discussions. AP and SW were supported by the U.S. Department of Energy (DOE) through a quantum computing program sponsored by the Los Alamos National Laboratory (LANL) Information Science \& Technology Institute. SW also acknowledges support from the Samsung GRP grant. MC acknowledges initial support from the Center for Nonlinear Studies at Los Alamos National Laboratory (LANL). TV (ATS) was supported by the Laboratory Directed Research and Development (LDRD) program of LANL under project number 20200677PRD1 (20190065DR, resp.). ATS and PJC acknowledge initial support from the LANL ASC Beyond Moore's Law project. This work was supported by the U.S. DOE, Office of Science, Office of Advanced Scientific Computing Research, under the Accelerated Research in Quantum Computing (ARQC) program.

\bibliography{ref.bib}

\setcounter{section}{0}
\onecolumngrid

\setcounter{section}{0}

\renewcommand{\thesection}{\arabic{section}}

\titleformat{\section}{\large\bfseries}{}{0pt}{ Appendix \thesection:\quad}

\section{Preliminaries}

In this section, we present preliminary notation and results needed to derive our main Theorems. 

\subsection{Haar distributed unitaries and integration over the unitary group}

Let us here recall the definition of a $t$-design. Let $\WC=\{W_y\}_{y\in Y}$ be a finite set of size $|Y|$ of $d$-dimensional unitaries $W_y$, and let $P_{t,t}(W)$ be a polynomial of degree at most $t$ in the matrix elements of $W$, and at most of degree $t$ in those of $W\ad$. We say that $\WC$ is a $t$-design if for every polynomial $P_{t,t}(W)$ we have
\begin{equation}\label{eq:design}
    \frac{1}{|Y|}\sum_{y\in Y} P_{t,t}(W_y)=\int d\mu(W) P_{t,t}(W)\,,
\end{equation}
where the integral is taken with respect to the Haar measure over the unitary group.

Then, to compute expectation values as in~\eqref{eq:design} it is useful to recall formulas which allow for  symbolical integration with respect to the Haar measure on the unitary group $\mathcal{U}(d)$ of degree $d$. Specifically, following the elementwise formula of the Weingarten calculus~\cite{collins2006integration,puchala2017symbolic} we have that the following formulas are valid for the first two moments: 
\begin{equation}
\begin{aligned}
\!\!\int d\mu(W)w_{\vec{i}_1\vec{j}_1}w_{\vec{i}_2\vec{j}_2}^* &= \frac{\delta_{\vec{i}_1\vec{i}_2}\delta_{\vec{j}_1\vec{j}_2}}{d}\! \\
\!\!\int d\mu(W)w_{\vec{i}_1\vec{j}_1}w_{\vec{i}_2\vec{j}_2}w_{\vec{i}_1'\vec{j}_1'}^{*}w_{\vec{i}_2'\vec{j}_2'}^{*}&=\frac{1}{d^2-1}\left( \! \Delta_1-\frac{\Delta_2}{d}\! \right)\,,\label{eq:Haarmoment}
\end{aligned}
\end{equation}
where $w_{\vec{i}\vec{j}}$ are the matrix elements of the unitary $W \in \mathcal{U}(d)$, and
\begin{equation}
\begin{aligned}
\Delta_1&=\delta_{\vec{i}_1\vec{i}_1'}\delta_{\vec{i}_2\vec{i}_2'}\delta_{\vec{j}_1\vec{j}_1'}\delta_{\vec{j}_2\vec{j}_2'}+\delta_{\vec{i}_1\vec{i}_2'}\delta_{\vec{i}_2\vec{i}_1'}\delta_{\vec{j}_1\vec{j}_2'}\delta_{\vec{j}_2\vec{j}_1'}\,,\\
\Delta_2&=\delta_{\vec{i}_1\vec{i}_1'}\delta_{\vec{i}_2\vec{i}_2'}\delta_{\vec{j}_1\vec{j}_2'}\delta_{\vec{j}_2\vec{j}_1'}+\delta_{\vec{i}_1\vec{i}_2'}\delta_{\vec{i}_2\vec{i}_1'}\delta_{\vec{j}_1\vec{j}_1'}\delta_{\vec{j}_2\vec{j}_2'}\,.    \label{eq:Haar2moment2}
\end{aligned}
\end{equation} 

Here we remark that the integrations in our calculations were performed by employing the Random Tensor Network Integrator (RTNI) package for symbolic integration of Haar distributed tensors~\cite{Fukuda_2019}. 

\subsection{Useful results}

\begin{lemma} \label{lemma:trace-identity}
Let $\HC=\HC_1 \otimes \HC_2 \otimes \HC_3$ be a tripartite Hilbert space, and let $\{\ket{\vec{p}}\}$ be an arbitrary basis of $\HC_3$. Then for any pair of linear operators $A,B: \HC \rightarrow \HC$, the following identities hold:

\begin{align} \label{eq:lemma1-eq1}
   \sum_{\vec{p},\vec{q}}\Tr_2\left[\Tr_{13}[(\id_{12}\otimes \ketbra{\vec{p}}{\vec{q}}_3)A\right]\Tr_{13}\left[(\id_{12}\otimes \ketbra{\vec{q}}{\vec{p}}_3)B]\right]
   =\Tr_{23}[\Tr_{1}[A]\Tr_{1}[B]]\,,
\end{align}
\vspace{-10pt}
\begin{align}
   \sum_{\vec{p},\vec{q}}\Tr_{123}\left[(\id_{12}\otimes \ketbra{\vec{p}}{\vec{q}}_3)A\right]\Tr_{123}[(\id_{12}\otimes \ketbra{\vec{q}}{\vec{p}}_3)B]]
   =\Tr_{3}[\Tr_{12}[A]\Tr_{12}[B]]\,.
\end{align}
\end{lemma}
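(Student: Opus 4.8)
The plan is to reduce both identities to a single elementary observation: inserting a rank-one operator $\ketbra{\vec{p}}{\vec{q}}_3$ and tracing over $\HC_3$ extracts one block of a partially-traced operator, after which the sum over $\vec{p},\vec{q}$ reassembles a product trace by completeness of the basis $\{\ket{\vec{p}}\}$. The two identities are then the ``not fully traced'' and ``fully traced'' versions of the same computation.

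First I would establish the key block-extraction identity: for any operator $A$ on $\HC$,
\[
\Tr_{13}\big[(\id_{12}\ot\ketbra{\vec{p}}{\vec{q}}_3)A\big]=\big(\Tr_1[A]\big)_{\vec{q}\vec{p}}\,,
\]
where for an operator $C$ on $\HC_2\ot\HC_3$ I write $C_{\vec{s}\vec{t}}=(\id_2\ot\bra{\vec{s}}_3)\,C\,(\id_2\ot\ket{\vec{t}}_3)$ for its $(\vec{s},\vec{t})$ block, an operator on $\HC_2$. This follows in two steps: since $\id_2\ot\ketbra{\vec{p}}{\vec{q}}_3$ acts trivially on $\HC_1$ it pulls out of the trace over $\HC_1$, giving $\Tr_3[(\id_2\ot\ketbra{\vec{p}}{\vec{q}}_3)\Tr_1[A]]$; then expanding $\Tr_1[A]=\sum_{\vec{s},\vec{t}}(\Tr_1[A])_{\vec{s}\vec{t}}\ot\ketbra{\vec{s}}{\vec{t}}_3$ and tracing over $\HC_3$ collapses the sum to the single block $(\Tr_1[A])_{\vec{q}\vec{p}}$.

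Then for the first identity I substitute this (and its analogue for $B$) into the left-hand side to obtain $\sum_{\vec{p},\vec{q}}\Tr_2\big[(\Tr_1[A])_{\vec{q}\vec{p}}(\Tr_1[B])_{\vec{p}\vec{q}}\big]$. Block matrix multiplication on $\HC_2\ot\HC_3$ gives $(\Tr_1[A]\Tr_1[B])_{\vec{s}\vec{u}}=\sum_{\vec{t}}(\Tr_1[A])_{\vec{s}\vec{t}}(\Tr_1[B])_{\vec{t}\vec{u}}$, so setting $\vec{u}=\vec{s}$, applying $\Tr_2$, and summing over $\vec{s}$ reproduces exactly the above double sum under $(\vec{s},\vec{t})=(\vec{q},\vec{p})$; hence the left side equals $\Tr_{23}[\Tr_1[A]\Tr_1[B]]$, the claimed right-hand side. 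The second identity follows the same route with one further trace: writing $\Tr_{123}=\Tr_2\circ\Tr_{13}$ turns each factor into the scalar $\Tr_2[(\Tr_1[A])_{\vec{q}\vec{p}}]=(\Tr_{12}[A])_{\vec{q}\vec{p}}$, an ordinary matrix element of $\Tr_{12}[A]$ on $\HC_3$, and the sum $\sum_{\vec{p},\vec{q}}(\Tr_{12}[A])_{\vec{q}\vec{p}}(\Tr_{12}[B])_{\vec{p}\vec{q}}$ is immediately $\Tr_3[\Tr_{12}[A]\Tr_{12}[B]]$.

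I expect no genuine obstacle; the content is bookkeeping of partial traces over the correct subsystems. The only point needing care is the composition order of the partial traces (using $\Tr_{13}=\Tr_1\circ\Tr_3=\Tr_3\circ\Tr_1$ and that operators on the retained factors commute past the trace). Conceptually this is a ``swap-trick'' statement, since $\sum_{\vec{p},\vec{q}}\ketbra{\vec{p}}{\vec{q}}\ot\ketbra{\vec{q}}{\vec{p}}$ is the SWAP operator on $\HC_3\ot\HC_3$; an alternative one-line framing applies $A\ot B$ on a doubled space and invokes $\Tr[\text{SWAP}(X\ot Y)]=\Tr[XY]$. I would keep the block-decomposition version as the primary argument, as it avoids introducing auxiliary copies of the Hilbert space.
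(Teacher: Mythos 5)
Your proof is correct. Note, however, that the paper does not actually prove this lemma in-text: it simply cites an external reference (Ref.~\cite{cerezo2020cost}) for the proof, so there is no in-paper argument to compare against, and your self-contained block-decomposition derivation is a genuine addition rather than a restatement. Your key identity $\Tr_{13}\bigl[(\id_{12}\otimes \ketbra{\vec{p}}{\vec{q}}_3)A\bigr]=(\Tr_1[A])_{\vec{q}\vec{p}}$ is valid (the operator $\id_{12}\otimes\ketbra{\vec{p}}{\vec{q}}_3$ commutes past $\Tr_1$ because it acts trivially on $\HC_1$, and the trace over $\HC_3$ then picks out the single block), the reassembly via block multiplication $(\Tr_1[A]\Tr_1[B])_{\vec{s}\vec{u}}=\sum_{\vec{t}}(\Tr_1[A])_{\vec{s}\vec{t}}(\Tr_1[B])_{\vec{t}\vec{u}}$ together with $\Tr_{23}[C]=\sum_{\vec{s}}\Tr_2[C_{\vec{s}\vec{s}}]$ is exactly right, and the second identity does follow by composing with $\Tr_2$, since $\Tr_2[(\Tr_1[A])_{\vec{q}\vec{p}}]=\matl{\vec{q}}{\Tr_{12}[A]}{\vec{p}}$. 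One small caveat worth flagging: your step $\Tr_3[\ketbra{\vec{p}}{\vec{t}}_3]=\delta_{\vec{t}\vec{p}}$ (and likewise the swap-trick framing) requires the basis $\{\ket{\vec{p}}\}$ to be orthonormal, so the word ``arbitrary'' in the statement should be read as ``arbitrary \emph{orthonormal} basis''; this is how the lemma is used in the paper, where the $\ket{\vec{p}}$ are computational bitstring states, but for a merely linearly independent basis the identities would acquire Gram-matrix factors and fail as stated.
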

For the proof of Lemma~\ref{lemma:trace-identity}, see~\cite{cerezo2020cost}.

\begin{lemma} \label{lemma:trace-dhs}
    Let $\HC=\HC_1 \otimes \HC_2$ be a bipartite Hilbert space with $\dim \HC_2=d$, and let $H$ be a Hermitian operator on $\HC$, we have:
    \begin{align}
        \Tr\left[ H^2 \right] - \Tr\left[ \frac{\Tr_2\left[H \right]^2}{d} \right]
        =
        D_{HS}\left(H,\Tr_2\left[H \right] \otimes \frac{\id}{d} \right)\,,
    \end{align}
    where $D_{HS}(H_1,H_2)=\Tr[(H_1 - H_2)^2]=||H_1 - H_2||_2^2$ is the Hilbert-Schmidt distance. In particular, if $\HC=\HC_2$
    \begin{align} \label{eq:delta-as-distance}
        \Tr\left[ H^2 \right] - \frac{1}{d} \Tr[H]^2
        =
        D_{HS}\left(H, \frac{\id}{d} \Tr[H] \right)\,.
    \end{align}
\end{lemma}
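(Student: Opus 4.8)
The plan is to prove the identity by a direct expansion of the squared Hilbert-Schmidt distance, exploiting a cancellation between the cross term and the quadratic term. First I would abbreviate $A = \Tr_2[H] \otimes \frac{\id}{d}$, which is a Hermitian operator on $\HC = \HC_1 \otimes \HC_2$, and write
\begin{equation}
D_{HS}(H,A) = \Tr[(H-A)^2] = \Tr[H^2] - 2\Tr[HA] + \Tr[A^2]\,,
\end{equation}
using cyclicity of the trace to identify $\Tr[HA]=\Tr[AH]$. The strategy is then to evaluate the two remaining terms $\Tr[HA]$ and $\Tr[A^2]$ separately and to observe that they coincide, so that $-2\Tr[HA]+\Tr[A^2]$ collapses to a single contribution $-\Tr[A^2]$.

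For $\Tr[A^2]$, I would use that $A^2 = \Tr_2[H]^2 \otimes \frac{\id}{d^2}$ together with $\Tr_2[\id/d^2] = 1/d$, which gives $\Tr[A^2] = \frac{1}{d}\Tr_1[\Tr_2[H]^2]$; this is precisely the term $\Tr[\Tr_2[H]^2/d]$ appearing in the statement, once one reads the outer trace there as $\Tr_1$ acting on the operator $\Tr_2[H]^2$ supported on $\HC_1$. For $\Tr[HA]$, the key tool is the standard partial-trace identity $\Tr_2[(X \otimes \id_2)H] = X\,\Tr_2[H]$, valid for any operator $X$ on $\HC_1$; applying it with $X = \Tr_2[H]/d$ and then taking $\Tr_1$ yields $\Tr[HA] = \frac{1}{d}\Tr_1[\Tr_2[H]^2]$ as well. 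Hence $\Tr[HA] = \Tr[A^2]$, and substituting back produces
\begin{equation}
D_{HS}(H,A) = \Tr[H^2] - \frac{1}{d}\Tr_1[\Tr_2[H]^2]\,,
\end{equation}
which is the claimed equality.

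Finally, I would recover the special case~\eqref{eq:delta-as-distance} by specializing to a trivial one-dimensional factor $\HC_1$, so that $\Tr_2[H]=\Tr[H]$ becomes a scalar, the operator reduces to $A = \frac{\Tr[H]}{d}\id$, and the residual partial trace $\Tr_1$ disappears. I do not expect a genuine obstacle here, as the argument is a routine computation; the only point requiring care is the bookkeeping of the partial traces—specifically identifying which factor each trace symbol in the statement refers to—together with the observation that the cross term equals the quadratic term, which is exactly what makes the Hilbert-Schmidt distance reduce to the stated difference of traces.
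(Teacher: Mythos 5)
Your proof is correct and follows essentially the same route as the paper's: both expand $\Tr[(H - \Tr_2[H]\otimes \id/d)^2]$, evaluate $\Tr[A^2]$ and the cross term $\Tr[HA]$ (each equal to $\frac{1}{d}\Tr[\Tr_2[H]^2]$), and let them collapse to the stated difference of traces. The only cosmetic difference is that you isolate the partial-trace identity $\Tr_2[(X\otimes\id_2)H]=X\,\Tr_2[H]$ as an explicit tool, whereas the paper performs the same evaluation inline.
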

\begin{proof}
Consider the following chain of equalities: 
\begin{align}
    D_{HS}\left(H,\Tr_2[H]\otimes \frac{\id}{d} \right)
    &=
    \Tr\left[ \left( H - \Tr_2[H] \otimes \frac{\id}{d} \right)^2 \right] \\
    &=
    \Tr\left[ H^2 \right] + \Tr\left[\Tr_2[H]^2 \otimes \frac{\id}{d^2} \right] - 2 \Tr\left[ H \left( \Tr_2[H] \otimes \frac{\id}{d} \right) \right] \\
    &=
    \Tr\left[ H^2 \right] + \frac{1}{d} \Tr\left[\Tr_2[H]^2 \right] - 2 \Tr\left[ H \left( \Tr_2[H] \otimes \frac{\id}{d} \right) \right] \\
    &=
    \Tr\left[ H^2 \right] + \frac{1}{d} \Tr\left[\Tr_2[H]^2 \right] - \frac{2}{d} \Tr\left[ \Tr_2[H]^2 \right] \\
    &=
    \Tr\left[ H^2 \right] - \frac{\Tr\left[\Tr_2[H]^2 \right]}{d}\,.
\end{align}
\end{proof}

\begin{lemma} \label{lemma:s1-s2}
Consider the operator $\Delta\Psi_{\vec{p}\vec{q}}^{\vec{p'}\vec{q'}}$ defined in the main text, which we recall here for convenience:
\begin{equation}\label{eq:SMdeltapsi}
    \Delta\Psi_{\vec{p}\vec{q}}^{\vec{p'}\vec{q'}} =  \Tr[\Psi_{\vec{p}\vec{q}}\Psi_{\vec{p'}\vec{q'}}]-\frac{\Tr[\Psi_{\vec{p}\vec{q}}]\Tr[\Psi_{\vec{p'}\vec{q'}}]}{d}\,,
\end{equation}
where $\vec{p}$, $\vec{q}$, $\vec{p'}$, $\vec{q'}$ are vectors of length $2^{n}-d$. Let $\HC$ be a Hilbert space of $n$ qubits and let $\HC_w$ be a  $d$-dimensional subsystem, such that $\HC_w\otimes\HC_{\overline{w}}=\HC$. Then, let $S$ and $\Sb$ be two disjoint sets of qubits not in $\HC_w$ such that  $\HC_S\otimes\HC_{\overline{S}}=\HC_{\overline{w}}$. The following equality holds
\begin{align}
    \sum_{\substack{\vec{p}\vec{q} \\\vec{p'}\vec{q'}}}
    (\delta_{\vec{p}\vec{q'}})_{\Sb} (\delta_{\vec{q}\vec{p'}})_{\Sb} (\delta_{\vec{p}\vec{q}})_{S} (\delta_{\vec{p'}\vec{q'}})_{S}
    \Delta \Psi_{\vec{p}\vec{q}}^{\vec{p'}\vec{q'}}
    =
    D_{HS}\left(\sigmat_{\Sb, w}, \sigmat_{\Sb} \otimes \frac{\id}{d} \right)
\end{align}
where $\sigmat_{\Sb, w} = \Tr_{S} [\sigma_L]$ and $\sigma_L=V_L \sigma V_L\ad$. Here $\Tr_{S}[\cdot]$ indicates the partial trace over $\HC_S$.
\end{lemma}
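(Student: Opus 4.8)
The plan is to split the summand into its two terms, evaluate the constrained sum of each, and then recombine them using Lemma~\ref{lemma:trace-dhs}. Write $\sigma_L=V_L\sigma V_L\ad$ and expand $\Psi_{\vec{p}\vec{q}}=\Tr_{\wbar}[(\ketbra{\vec{q}}{\vec{p}}\otimes\id_w)\sigma_L]$ in the computational basis. A one-line calculation shows that $\Psi_{\vec{p}\vec{q}}$ is simply the $(\vec{p},\vec{q})$ block of $\sigma_L$ viewed as an operator on $\HC_w$, namely $[\Psi_{\vec{p}\vec{q}}]_{ab}=[\sigma_L]_{(\vec{p},a),(\vec{q},b)}$, where $a,b$ label a basis of $\HC_w$ and $\vec{p},\vec{q}$ label the basis of $\HC_{\wbar}$. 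With this identification the two pieces of $\Delta\Psi_{\vec{p}\vec{q}}^{\vec{p'}\vec{q'}}$ become
\begin{align}
\Tr[\Psi_{\vec{p}\vec{q}}\Psi_{\vec{p'}\vec{q'}}]&=\sum_{a,b}[\sigma_L]_{(\vec{p},a),(\vec{q},b)}[\sigma_L]_{(\vec{p'},b),(\vec{q'},a)}\,,\\
\Tr[\Psi_{\vec{p}\vec{q}}]\Tr[\Psi_{\vec{p'}\vec{q'}}]&=\sum_{a,b}[\sigma_L]_{(\vec{p},a),(\vec{q},a)}[\sigma_L]_{(\vec{p'},b),(\vec{q'},b)}\,.
\end{align}

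Next I would resolve every index over $\HC_{\wbar}$ into its $\HC_S$ and $\HC_{\Sb}$ components, writing $\vec{p}=(\vec{p}_S,\vec{p}_{\Sb})$ and likewise for $\vec{q},\vec{p'},\vec{q'}$, and impose the four Kronecker deltas. The pair $(\delta_{\vec{p}\vec{q}})_S,(\delta_{\vec{p'}\vec{q'}})_S$ forces $\vec{p}_S=\vec{q}_S=:\vec{u}$ and $\vec{p'}_S=\vec{q'}_S=:\vec{v}$, while the crossing pair $(\delta_{\vec{p}\vec{q'}})_{\Sb},(\delta_{\vec{q}\vec{p'}})_{\Sb}$ forces $\vec{p}_{\Sb}=\vec{q'}_{\Sb}=:\vec{x}$ and $\vec{q}_{\Sb}=\vec{p'}_{\Sb}=:\vec{y}$. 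Substituting these into the product-trace term, the sum over $\vec{u}$ (resp.\ $\vec{v}$), whose label appears with equal value in the row and column of a single $\sigma_L$, is precisely the partial trace $\Tr_S$; the crossing $\Sb$-deltas then glue the two resulting $S$-reduced blocks into the square of one operator on $\HC_{\Sb}\otimes\HC_w$, giving $\Tr[\sigmat_{\Sb,w}^2]$ with $\sigmat_{\Sb,w}=\Tr_S[\sigma_L]$. The same substitution in the trace-product term additionally contracts the $\HC_w$ label within each factor, producing $\tfrac{1}{d}\Tr[\sigmat_{\Sb}^2]$ with $\sigmat_{\Sb}=\Tr_w[\sigmat_{\Sb,w}]=\Tr_{S,w}[\sigma_L]$.

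Combining the two contributions, the full constrained sum equals $\Tr[\sigmat_{\Sb,w}^2]-\tfrac{1}{d}\Tr[\sigmat_{\Sb}^2]$. Invoking Lemma~\ref{lemma:trace-dhs} with $H=\sigmat_{\Sb,w}$ on the bipartite space $\HC_{\Sb}\otimes\HC_w$, tracing out the $d$-dimensional factor $\HC_w$, rewrites this difference as $D_{HS}(\sigmat_{\Sb,w},\sigmat_{\Sb}\otimes\id/d)$, which is the claimed identity.

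The real work — and the only genuine obstacle — is the index bookkeeping of the middle step: one must check that the reparametrization $(\vec{p},\vec{q},\vec{p'},\vec{q'})\mapsto(\vec{u},\vec{v},\vec{x},\vec{y})$ is a bijection onto the full unconstrained index set once the deltas are imposed, and that each family of deltas plays its intended role, with the $S$-deltas implementing the intra-copy partial trace over $\HC_S$ and the $\Sb$-deltas implementing the inter-copy contraction that assembles the square. Equivalently, this entire step is an instance of the operator identity in Lemma~\ref{lemma:trace-identity}, which already encodes exactly this separation between a subsystem that is traced out within each copy and one that is contracted across copies; one could therefore also obtain the result by a direct application of that lemma rather than by explicit index manipulation.
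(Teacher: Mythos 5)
Your proposal is correct and follows essentially the same route as the paper's proof: split $\Delta\Psi_{\vec{p}\vec{q}}^{\vec{p'}\vec{q'}}$ into its two terms, use the delta constraints to reduce each constrained sum to $\Tr[\sigmat_{\Sb,w}^2]$ and $\tfrac{1}{d}\Tr[\sigmat_{\Sb}^2]$ respectively, and finish with Lemma~\ref{lemma:trace-dhs}. The only difference is presentational: the paper invokes Lemma~\ref{lemma:trace-identity} for the contraction step, while you carry out the equivalent index bookkeeping explicitly (and correctly note that it is precisely an instance of that lemma).
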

\begin{proof}
Let us consider the first term in~\eqref{eq:SMdeltapsi}. We have:
\begin{align}
    & \sum_{\substack{\vec{p}\vec{q} \\\vec{p'}\vec{q'}}}
    (\delta_{\vec{p}\vec{q'}})_{\Sb} (\delta_{\vec{q}\vec{p'}})_{\Sb} (\delta_{\vec{p}\vec{q}})_{S} (\delta_{\vec{p'}\vec{q'}})_{S}
    \Tr\left[ \Psi_{\vec{p}\vec{q}} \Psi_{\vec{p'}\vec{q'}} \right]
    \\
    &=
    \sum_{\substack{\vec{p}\vec{q} \\\vec{p'}\vec{q'}}}
    (\delta_{\vec{p}\vec{q'}})_{\Sb} (\delta_{\vec{q}\vec{p'}})_{\Sb} (\delta_{\vec{p}\vec{q}})_{S} (\delta_{\vec{p'}\vec{q'}})_{S}
    \Tr\left[ 
        \Tr_{\wbar}\left[ (\ketbra{\vec{p}}{\vec{q}}_{\Sb} \otimes \ketbra{\vec{p}}{\vec{q}}_{S} \otimes \id_d) \sigma_L \right] 
        \Tr_{\wbar}\left[ (\ketbra{\vec{p'}}{\vec{q'}}_{\Sb} \otimes \ketbra{\vec{p'}}{\vec{q'}}_{S} \otimes \id_{w}) \sigma_L \right] 
    \right] \\
    &=
    \sum_{\vec{p}\vec{q} \in \Sb} \sum_{\vec{p}\vec{p'} \in S}
    \Tr\left[ 
        \Tr_{\wbar}\left[ \left(\ketbra{\vec{p}}{\vec{q}}_{\Sb} \otimes \ketbra{\vec{p}}{\vec{p}}_{S} \otimes \id_{w}\right) \sigma_L  \right] 
        \Tr_{\wbar}\left[ \left(\ketbra{\vec{q}}{\vec{p}}_{\Sb} \otimes \ketbra{\vec{p'}}{\vec{p'}}_{S} \otimes \id_{w}\right) \sigma_L  \right] 
    \right] \\
    &=
    \sum_{\vec{p}\vec{q} \in \Sb}
    \Tr\left[ 
        \Tr_{\wbar}\left[ \left(\ketbra{\vec{p}}{\vec{q}}_{\Sb} \otimes \sum_{\vec{p}} \ketbra{\vec{p}}{\vec{p}}_{S} \otimes \id_{w}\right) \sigma_L \right] 
        \Tr_{\wbar}\left[ \left(\ketbra{\vec{q}}{\vec{p}}_{\Sb} \otimes \sum_{\vec{p'}} \ketbra{\vec{p'}}{\vec{p'}}_{S} \otimes \id_{w}\right) \sigma_L  \right] 
    \right] \\
    &=
    \sum_{\vec{p}\vec{q} \in \Sb}
    \Tr\left[ 
        \Tr_{\wbar}\left[ (\ketbra{\vec{p}}{\vec{q}}_{\Sb} \otimes \id_{w, S}) \sigma_L  \right] 
        \Tr_{\wbar}\left[ (\ketbra{\vec{q}}{\vec{p}}_{\Sb} \otimes \id_{w, S}) \sigma_L  \right] 
    \right] \\
    &=
    \sum_{\vec{p}\vec{q} \in \Sb}
    \Tr\left[ 
        \Tr_{\wbar}\left[ (\ketbra{\vec{p}}{\vec{q}}_{\Sb} \otimes \id_{w, S}) \sigma_L  \right] 
        \Tr_{\wbar}\left[ (\ketbra{\vec{q}}{\vec{p}}_{\Sb} \otimes \id_{w, S}) \sigma_L  \right] 
    \right] \\
    &=
    \Tr\left[ 
        \Tr_{S}[\sigma_L]^2
    \right] \\
    &=
    \Tr\left[ \sigmat_{\Sb, w}^2 \right] \label{eq:lem3res1}\,,
\end{align}
where the last line comes from Lemma \ref{lemma:trace-identity}.

Similarly, for the second term in~\eqref{eq:SMdeltapsi} we have
\begin{align}
    \sum_{\substack{\vec{p}\vec{q} \\\vec{p'}\vec{q'}}}
    (\delta_{\vec{p}\vec{q'}})_{\Sb} (\delta_{\vec{q}\vec{p'}})_{\Sb} (\delta_{\vec{p}\vec{q}})_{S} (\delta_{\vec{p'}\vec{q'}})_{S}
    \Tr\left[ \Psi_{\vec{p}\vec{q}} \right] \Tr \left[ \Psi_{\vec{p'}\vec{q'}} \right]
    &=
    \Tr\left[ 
        \Tr_{w}[\sigmat_{\Sb, w}]^2
    \right]\,.\label{eq:lem3res2} \\
\end{align}

Therefore, combining the results from~\eqref{eq:lem3res1} and~\eqref{eq:lem3res2} we find 
\begin{align}
    \sum_{\substack{\vec{p}\vec{q} \\\vec{p'}\vec{q'}}}
    (\delta_{\vec{p}\vec{q'}})_{\Sb} (\delta_{\vec{q}\vec{p'}})_{\Sb} (\delta_{\vec{p}\vec{q}})_{S} (\delta_{\vec{p'}\vec{q'}})_{S}
    \Delta \Psi_{\vec{p}\vec{q}}^{\vec{p'}\vec{q'}}
    &=
    \Tr \left[
        \Tr_w \left[\sigmat_{\Sb, w}^2\right]
        -
        \frac{
            \Tr_w \left[\sigmat_{\Sb, w}\right]^2
        }{d}
    \right] \\
    &= D_{HS}\left(\sigmat_{\Sb, w}, \sigmat_{\Sb} \otimes \frac{\id}{d} \right)\,,
\end{align}
where in the last equation we invoked Lemma~\ref{lemma:trace-dhs}.
\end{proof}

\begin{lemma}[Monotonicity of the trace distance] \label{lemma:monotonicity}
Let $\HC$ be a  tensor product Hilbert space $\HC=\HC_A\otimes\HC_B$, and let $R_{AB}$ and $S_{AB}$ be two operators on $\HC$. Then, the following inequality holds 
\begin{equation}
    D_T(R_{AB},S_{AB}) \geq D_T(R_{B},S_{B})\,,
\end{equation}
where $R_{B} = \Tr_A (R_{AB})$ and $S_{B} = \Tr_A (S_{AB})$. Here
$D_T(R_{B},S_{B})=\frac{1}{2}||R_{B} - S_{B}||_1$ is the trace distance, with $||\cdot||_1$  the Schatten $1$-norm.

\begin{remark}
    The monotonicity of the trace distance is usually proven when $R_{AB}$ and $S_{AB}$ are density matrices. The proof we present here is valid for any two matrices.
\end{remark}

\end{lemma}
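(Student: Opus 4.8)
The plan is to reduce the statement to the contractivity of the partial trace under the Schatten $1$-norm. Since both sides of the inequality depend only on the differences $X_{AB}=R_{AB}-S_{AB}$ and $X_B=\Tr_A(X_{AB})=R_B-S_B$, and since the partial trace is linear, the claim $D_T(R_{AB},S_{AB})\geq D_T(R_{B},S_{B})$ is equivalent to $||X_{AB}||_1\geq ||X_B||_1$ for an \emph{arbitrary} (not necessarily Hermitian or positive) operator $X_{AB}$. This reformulation is what makes the argument work in the generality claimed by the remark, since it avoids any appeal to positivity or to the fact that the partial trace is a quantum channel.

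The key tool I would use is the dual characterization of the trace norm in terms of the operator norm $||\cdot||_\infty$, namely $||X||_1=\sup_{||Y||_\infty\leq 1}|\Tr(Y\ad X)|$, where the supremum runs over all contractions $Y$ and is attained in finite dimension (one may take $Y$ to be the partial isometry from the singular-value decomposition of $X$). First I would pick a contraction $Y_B$ on $\HC_B$ attaining $||X_B||_1=|\Tr(Y_B\ad X_B)|$. Then I would lift it to $Y_{AB}=\id_A\otimes Y_B$, which is still a contraction because the operator norm is multiplicative under tensor products, so $||Y_{AB}||_\infty=||Y_B||_\infty\leq 1$. Using the defining property of the partial trace, $\Tr[(\id_A\otimes M_B)X_{AB}]=\Tr[M_B\Tr_A(X_{AB})]$ with $M_B=Y_B\ad$, one gets $\Tr(Y_{AB}\ad X_{AB})=\Tr(Y_B\ad X_B)$. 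Feeding $Y_{AB}$ into the dual formula for $||X_{AB}||_1$ then yields $||X_{AB}||_1\geq|\Tr(Y_{AB}\ad X_{AB})|=|\Tr(Y_B\ad X_B)|=||X_B||_1$, and dividing by $2$ gives the claimed inequality.

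I expect the main subtlety to be conceptual rather than computational: one must use only facts about the trace norm that hold for arbitrary matrices. The usual textbook proofs of monotonicity of the trace distance invoke complete positivity and trace preservation of the partial trace together with a minimizing-projector argument valid for Hermitian differences; since here the difference of two general operators need not be Hermitian, that route is unavailable. The duality argument sidesteps this because both the identity $||X||_1=\sup_{||Y||_\infty\leq 1}|\Tr(Y\ad X)|$ and the multiplicativity $||\id_A\otimes Y_B||_\infty=||Y_B||_\infty$ hold with no positivity assumption. As an alternative I could instead average $X_{AB}$ over a unitary $1$-design acting on $\HC_A$ so as to realize $\Tr_A(X_{AB})\otimes \id_A/d_A$ as a convex mixture of unitarily-rotated copies of $X_{AB}$; the triangle inequality together with unitary invariance and tensor-multiplicativity of $||\cdot||_1$ then gives $||X_B||_1=||\Tr_A(X_{AB})\otimes \id_A/d_A||_1\leq ||X_{AB}||_1$. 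Either route establishes the lemma without any density-matrix hypothesis.
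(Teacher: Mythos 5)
Your main argument is correct and is essentially identical to the paper's proof: both use the duality between the Schatten $1$-norm and the operator norm, take an optimal contraction $\Lambda_B^*$ (your $Y_B$) on $\HC_B$, lift it to $\id_A\otimes\Lambda_B^*$ using multiplicativity of $\|\cdot\|_\infty$ under tensor products, and conclude via the defining property of the partial trace. Your sketched alternative (twirling over a unitary $1$-design on $\HC_A$ plus the triangle inequality) is also valid and differs from the paper's own alternative, which instead uses the orthogonal Jordan decomposition and therefore only covers the Hermitian case.
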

\begin{proof}
By duality of Schatten norms~\cite{horn2012matrix}, we have:
\begin{align}\label{eq:duality}
    \begin{split}
        ||R_{B} - S_{B}||_1
        &=
        \max_{\Lambda_B \text{ s.t. }||\Lambda_B||_{\infty} = 1} \Tr[\Lambda_B (R_{B} - S_{B})]\,.
    \end{split}
\end{align}
Therefore, there exists a matrix $\Lambda_B^*$ with $||\Lambda_B||_{\infty} = 1$ such that the maximum in~\eqref{eq:duality}  is reached, i.e.,
\begin{align}
    \begin{split}
        ||R_{B} - S_{B}||_1
        &=
        \Tr[\Lambda_B^* (R_{B} - S_{B})] \\
        &=
        \Tr[(\id_A \otimes \Lambda_B^*) (R_{AB} - S_{AB})] \\
        &\leq
        \max_{\Lambda_{AB} \text{ s.t. }||\Lambda_{AB}||_{\infty} = 1}\Tr[\Lambda_{AB} (R_{AB} - S_{AB})] \\
        &=
        ||R_{AB} - S_{AB}||_1\,.
    \end{split}
\end{align}
The inequality follows from~\eqref{eq:duality} and the fact that $||\id_A \otimes \Lambda_B^*||_{\infty}=1$, while the last equality is simply another application of the Schatten norm duality.

We remark that an alternative proof, for the case where $R_{AB}$ and $S_{AB}$ are Hermitian, proceeds as follows. Let us begin by noting that one can write $R_{AB} - S_{AB} = P_{AB} - Q_{AB}$ where $P_{AB} \geq 0$ and $Q_{AB} \geq 0$, and $P_{AB}$ is orthogonal to $Q_{AB}$. The orthogonality condition gives $|P_{AB} - Q_{AB}| = P_{AB} + Q_{AB}$. One then has
\begin{align}
    D_T(R_{AB},S_{AB}) &= \frac{1}{2}\Tr |P_{AB} - Q_{AB}|\\ 
    &= \frac{1}{2}(\Tr P_{AB} +\Tr Q_{AB})\\
    &= \frac{1}{2}(\Tr P_{B} +\Tr Q_{B}) \\
    &= \frac{1}{2}( \|P_{B} \|_1 +  \| Q_{B} \|_1 )\\
    &\geq \frac{1}{2}( \|P_{B} -  Q_{B} \|_1 )\\
    &= \frac{1}{2}( \|R_{B} -  S_{B} \|_1 )\,,
\end{align}
where $P_{B} = \Tr_A (P_{AB})$ and $Q_{B} = \Tr_A (Q_{AB})$. Note that the inequality in this proof follows from the triangle inequality for matrix norms.
\end{proof}

\begin{lemma} \label{lemma:dhs-inequality}
Let $\HC$ be a  tensor product Hilbert space $\HC=\HC_A\otimes\HC_B$, with $\dim{H_A}=d_A$ and $\dim{H_B}=d_B$, and let $R_{AB}$ be an operator on $\HC$. Then, the following inequality holds: 
    \begin{align}   
        D_{HS}\left(R_{AB},R_{A}\otimes \frac{\id_{B}}{d_B}\right)
        \geq \frac{D_{HS}\left(R_{B},\frac{\id}{d_B}\right)}{d_B d_{A}}.
    \end{align}
\end{lemma}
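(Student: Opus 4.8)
The plan is to move from the Hilbert-Schmidt ($\|\cdot\|_2$) distance on the full space $\HC_A\otimes\HC_B$ down to the trace ($\|\cdot\|_1$) distance, invoke the monotonicity of the trace distance under partial trace (Lemma~\ref{lemma:monotonicity}) to discard system $A$, and then convert back from the trace distance to the Hilbert-Schmidt distance on $\HC_B$. Throughout I would write $R_A=\Tr_B[R_{AB}]$ and $R_B=\Tr_A[R_{AB}]$, and (consistent with the rest of the paper) take $R_{AB}$ Hermitian, so that $D_{HS}(X,Y)=\Tr[(X-Y)^2]=\|X-Y\|_2^2$ and all the differences appearing below are Hermitian.

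First I would record the two elementary Schatten-norm inequalities that drive the argument: for any operator $X$ on a space of dimension $d$, Cauchy-Schwarz applied to the singular values gives $\|X\|_1\leq\sqrt{d}\,\|X\|_2$, equivalently $\|X\|_2^2\geq\frac{1}{d}\|X\|_1^2$, while $\|X\|_2\leq\|X\|_1$ holds since the $\ell^2$ norm of the singular values is dominated by their $\ell^1$ norm. Setting $\Delta_{AB}=R_{AB}-R_A\otimes\frac{\id_B}{d_B}$, the first inequality applied on $\HC_A\otimes\HC_B$ (dimension $d_Ad_B$) yields
\begin{equation}
D_{HS}\Big(R_{AB},R_A\otimes\tfrac{\id_B}{d_B}\Big)=\|\Delta_{AB}\|_2^2\geq\frac{1}{d_Ad_B}\|\Delta_{AB}\|_1^2=\frac{\big(2\,D_T(R_{AB},R_A\otimes\tfrac{\id_B}{d_B})\big)^2}{d_Ad_B}\,.
\end{equation}

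Next I would apply Lemma~\ref{lemma:monotonicity} to trace out $\HC_A$. Since $\Tr_A[R_{AB}]=R_B$ and $\Tr_A[R_A\otimes\frac{\id_B}{d_B}]=\Tr[R_{AB}]\frac{\id_B}{d_B}$, monotonicity gives $D_T(R_{AB},R_A\otimes\frac{\id_B}{d_B})\geq D_T(R_B,\Tr[R_{AB}]\frac{\id_B}{d_B})$; under the normalization $\Tr[R_{AB}]=1$ that is implicit in the statement, the reference operator collapses to exactly $\frac{\id_B}{d_B}$. Finally, converting back with $\|X\|_1\geq\|X\|_2$ on $\HC_B$ gives $2\,D_T(R_B,\frac{\id_B}{d_B})=\|R_B-\frac{\id_B}{d_B}\|_1\geq\|R_B-\frac{\id_B}{d_B}\|_2$, and squaring turns the right-hand side into $D_{HS}(R_B,\frac{\id}{d_B})$. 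Chaining the three bounds produces precisely $D_{HS}(R_{AB},R_A\otimes\frac{\id_B}{d_B})\geq\frac{1}{d_Ad_B}D_{HS}(R_B,\frac{\id}{d_B})$.

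The individual steps are routine, so the only real care is in bookkeeping the \emph{directions} of the two Schatten inequalities: one needs the lossy bound $\|\cdot\|_2^2\geq\frac{1}{d}\|\cdot\|_1^2$ on the large space $\HC_A\otimes\HC_B$ (which is exactly where the $1/(d_Ad_B)$ prefactor is generated) together with the reverse bound $\|\cdot\|_1\geq\|\cdot\|_2$ on $\HC_B$, and in checking that the partial trace of the reference $R_A\otimes\id_B/d_B$ reduces to $\id_B/d_B$. This last point is the single place the argument leans on a normalization ($\Tr[R_{AB}]=1$); if one prefers not to assume it, the clean fix is to carry the factor $\Tr[R_{AB}]=\Tr[R_B]$ through and state the bound with reference $\frac{\Tr[R_B]}{d_B}\id_B$, matching the convention already used in Lemma~\ref{lemma:trace-dhs}.
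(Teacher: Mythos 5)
Your proof is correct and follows essentially the same route as the paper's: convert the Hilbert--Schmidt distance to the trace distance on $\HC_A\otimes\HC_B$ (paying the $1/(d_A d_B)$ factor via Cauchy--Schwarz), apply the monotonicity of the trace distance (Lemma~\ref{lemma:monotonicity}) to discard system $A$, and convert back to the Hilbert--Schmidt distance on $\HC_B$. The normalization subtlety you flag is genuine --- as literally stated the lemma requires $\Tr[R_{AB}]=1$ (otherwise the reference should read $\Tr[R_B]\,\id_B/d_B$), and indeed the paper's own application of the lemma carries the factor $\Tr[\sigmat]$ exactly as in your proposed fix.
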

\begin{proof}
The equivalence of matrix norms gives us the following inequality between the Frobenius norm $\norm{M}_2$ and the trace norm $\norm{M}_1$ of a matrix $M$ with dimension $d$ \cite{coles2019strong}:
\begin{align} \label{eq:matrix-norm-equivalence}
    \norm{M}_2 \leq \norm{M}_1 \leq \sqrt{d} \norm{M}_2 
\end{align}
The first inequality comes from the monotonicity of the Schatten norms \cite{raissouli2010various, horn2012matrix} and the second one from the Cauchy-Schwartz inequality. Since $D_{HS}(R,S)=\norm{R - S}_2^2$ and $D_{T}(R,S)=\frac{1}{2}\norm{R - S}_1$, it follows from Eq. (\ref{eq:matrix-norm-equivalence}) that:
\begin{align} \label{eq:equivalence-distance}
    \frac{1}{4} D_{HS}(R,S) \leq D_T(R,S)^2 \leq \frac{d}{4} D_{HS}(R,S)
\end{align}
Therefore, in our case:
\begin{align}
    D_{HS}\left(R_{AB},R_{A}\otimes \frac{\id_{B}}{d_B}\right)
    &\geq \frac{4D_{T}\left(R_{AB},R_{A}\otimes \frac{\id_{B}}{d_B}\right)^2}{d_B d_{A}}\label{eq:ineq-TD0}\\
    &\geq \frac{4 D_{T}\left(R_{B},\frac{\id}{d_B}\right)^2}{d_B d_{A}}\nonumber\\
    &\geq \frac{D_{HS}\left(R_{B},\frac{\id}{d_B}\right)}{d_B d_{A}}\,.\label{eq:ineq-TD}
\end{align}
where the first and last inequalities come from the equivalence of matrix norms given by Eq.~\eqref{eq:equivalence-distance}, while the second inequality follows from the monotonicity of the Trace distance as stated in Lemma~\ref{lemma:monotonicity}.
\end{proof}

\section{Proof of Theorem 1}

In this section we present a proof of Theorem~\ref{theo1}, which establishes a lower bound for the variance of the cost function partial derivative in terms of the coefficients of a graph obtained through the GRIM. Moreover, in this section we also present a proof for Proposition~\ref{prop1}. Let us  restate the theorem here for convenience:

\setcounter{theorem}{0}
\begin{theorem}\label{theo1SM}
Consider the QCNN cost function of Eq.~\eqref{eq:costreduced} and let $W$ be a unitary in the first sub-layer of the $\ell$-th layer of $V(\thv)$. Moreover, let us assume  that the unitaries $W_{ij}$, $W_A$ and $W_B$ in the convolutional and fully-connected layers form independent $2$-designs. Then, a lower bound of the variance of the cost function partial derivative, with respect to a parameter $\theta_\mu$ in  unitary $W$, can be computed using the GRIM from the graph $\GC_w$ as
\begin{align}
    \Var[\partial_\mu C]\geq F_n(L,\ell)\,,
\end{align}
with
\begin{equation}\label{eq:functionlowerSM}
    F_n(L,\ell)=\frac{1}{9}\frac{\Tr[H_\mu^2] \varepsilon_O \varepsilon_{\sigma_w} }{50^{L-\ell+1}} \sum_{\vec{g}\in P_\ell(\GC_w)} \Lambda_{\vec{g}}.
\end{equation}
Here $\sigma_w=\Tr_{\overline{w}}[\sigma]$ is the reduced operator $\sigma$ in the subsystem of the qubits that $W$ acts on. Moreover, $\varepsilon_O$, $P_\ell(\GC_w)$, $\vec{g}$ and $\Lambda_{\vec{g}}$ are defined in Proposition~\ref{prop1}.
\end{theorem}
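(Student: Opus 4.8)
\emph{Overall strategy.} The plan is to derive Theorem~\ref{theo1SM} from Proposition~\ref{prop1} together with one further estimate on the backward light-cone. Proposition~\ref{prop1} already supplies
\begin{equation}
    \Var[\partial_\mu C]\geq \frac{\Tr[H_\mu^2]\,\varepsilon_O}{450}\,\avg{\varepsilon_{\widetilde{\sigma}_w}}_{V_L}\sum_{\vec{g}\in P_\ell(\GC_w)}\Lambda_{\vec{g}}\,,
\end{equation}
so the forward light-cone, the factor $\varepsilon_O$, and the path sum $\sum_{\vec g}\Lambda_{\vec g}$ are all accounted for; the only object still averaged over the ansatz is $\avg{\varepsilon_{\widetilde{\sigma}_w}}_{V_L}$, with $\widetilde{\sigma}_w=\Tr_{\overline{w}}[V_L\sigma V_L\ad]$. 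Since $450=9\cdot 50$, matching this against the target $F_n(L,\ell)$ shows that the theorem follows from the single inequality
\begin{equation}\label{eq:keyplan}
    \avg{\varepsilon_{\widetilde{\sigma}_w}}_{V_L}\geq \frac{\varepsilon_{\sigma_w}}{50^{\,L-\ell}}\,,
\end{equation}
which trades the ansatz-dependent average for the fixed input quantity $\varepsilon_{\sigma_w}=N_1(\sigma_w)$.

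\emph{Reducing the backward light-cone.} To prove~\eqref{eq:keyplan} I would integrate the gates of $V_L$ layer by layer, starting from the layer adjacent to $W$ and moving outward to the input. By~\eqref{eq:invariance} and unitarity, every gate outside the backward light-cone $\LC_B$ of $W$ cancels under $\Tr_{\overline{w}}$, so exactly the $L-\ell$ layers between the input and $W$ survive. Expressing $\varepsilon_{\widetilde\sigma_w}=\Tr[\widetilde\sigma_w^2]-\tfrac14\Tr[\widetilde\sigma_w]^2$ through Lemma~\ref{lemma:trace-dhs} (with $d=4$ and $\Tr[\widetilde\sigma_w]=\Tr[\sigma]$ held fixed), each layer integration becomes a second-moment Weingarten integral~\eqref{eq:Haarmoment}; Lemma~\ref{lemma:s1-s2} is the computational engine, rewriting the integral of one layer as a Hilbert--Schmidt distance of $\sigma$ propagated through one fewer layer and reduced onto the slightly enlarged subsystem exposed at that layer.

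\emph{Keeping the recursion bounded.} Unlike the forward cone, $\LC_B$ \emph{widens} toward the input, so after integrating each layer I would immediately contract the newly exposed (pooled-in) qubits back onto the two surviving lines that pass through $W$, using the monotonicity of Lemma~\ref{lemma:monotonicity} packaged in Lemma~\ref{lemma:dhs-inequality}. Because only $O(1)$ qubits are exposed per layer, each such contraction costs only a bounded dimension factor, and the whole recursion stays over a two-qubit operator. Each combined integrate-and-contract step therefore contributes a multiplicative factor bounded below by $1/50=1/\bigl(2(d+1)^2\bigr)$ ($d=4$), and since every intermediate node is a nonnegative Hilbert--Schmidt distance I may freely drop positive remainders (equivalently, retain a single path) without spoiling the lower bound. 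After $L-\ell$ steps the active operator is $\sigma$ reduced onto the two lines of $W$, i.e.\ $\varepsilon_{\sigma_w}$, with accumulated factor $\geq 50^{-(L-\ell)}$; this is~\eqref{eq:keyplan}. Substituting it into Proposition~\ref{prop1} and using $450=9\cdot 50$ yields $F_n(L,\ell)$, and contracting instead onto a different pair at the base of $\LC_B$ gives the relaxed hypothesis mentioned below Corollary~\ref{coro1}.

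\emph{Main obstacle.} The crux is the uniform constant tracking: one must verify that at \emph{every} layer, and for \emph{every} placement of $W$, the product of the $U(4)$ Weingarten weight with the bounded dimension factor from the interleaved partial trace never falls below the clean per-layer bound $1/50$, and that the reference point $\Tr[\sigma]\,\id/4$ is matched correctly at each reduction so that the recursion closes on \emph{exactly} $\varepsilon_{\sigma_w}$ rather than a distance to some other operator. A secondary point is to confirm that the interleaving of integration and partial trace keeps the dimension of the active subsystem bounded at every step, which is what prevents the $O(2^{L-\ell})$ qubits at the base of $\LC_B$ from producing an exponential factor and is ultimately responsible for the bound being polynomial.
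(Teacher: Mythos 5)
Your reduction of the theorem is exactly the paper's: Proposition~\ref{prop1} disposes of the forward light-cone, and since $\frac{1}{450}\cdot 50^{-(L-\ell)}=\frac{1}{9}\cdot 50^{-(L-\ell+1)}$, the whole theorem rests on the backward-cone estimate $\avg{\varepsilon_{\sigmat_w}}_{V_L}\geq 50^{-(L-\ell)}\,\varepsilon_{\sigma_w}$, which is precisely Eq.~\eqref{eq:final1} of the appendix. Where you diverge is in how that estimate is proved, and there the proposal has a genuine gap. The paper uses no monotonicity or dimension-counting in the backward recursion: it introduces an \emph{effective} backward light-cone $\LCt_B$ covered by $(L-\ell)$ center modules and shows, by explicit Weingarten integration of one module (Eq.~\eqref{eq:backward-integral-dhs}), that
\begin{equation*}
\int_{M}\varepsilon_{\sigmat_w}\,d\mu \;=\; \frac{1}{50}\,D_{HS}\Bigl(\sigmat^{(\ell+2)}_{w},\tfrac{\id}{4}\Tr[\sigma]\Bigr)\;+\;\sum_{i,j}\widetilde{c}_{ij}\,D_{HS}\bigl(\widetilde{T}_i,\widetilde{T}_j\bigr),\qquad \widetilde{c}_{ij}\geq 0\,,
\end{equation*}
where the retained term is already the two-qubit quantity on the qubits of $W$ (the gates outside $\LCt_B$ compile to identity inside that particular contraction), so the recursion closes with the exact factor $1/50$ per module and no dimensional loss.

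Your scheme instead manufactures the per-layer constant as a Weingarten weight times a contraction penalty from Lemma~\ref{lemma:dhs-inequality}, and this fails on two counts. First, the constant: Lemma~\ref{lemma:dhs-inequality} costs an extra $1/(d_Bd_A)\leq 1/16$ on top of the Weingarten coefficients, so there is no reason the product clears $1/50$; your identity $50=2(d+1)^2$ is numerology rather than a derivation, and without the exact per-module coefficient you cannot conclude the stated $F_n(L,\ell)$ --- you would only get $c^{\,L-\ell}$ for some unknown, generically smaller, $c$. Second, the positivity you invoke (``every intermediate node is a nonnegative Hilbert--Schmidt distance, so drop remainders'') is exactly the nontrivial structural fact the paper must verify by symbolic integration: the raw second-moment formula~\eqref{eq:Haarmoment} produces terms of mixed sign, and reorganizing them into ``a $\tfrac{1}{50}$ key term plus manifestly nonnegative $D_{HS}$ terms'' is the content of Eq.~\eqref{eq:backward-integral-dhs}, not a consequence of the stated lemmas. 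A smaller but real misattribution: Lemma~\ref{lemma:s1-s2} cannot be the engine of the backward recursion, since it is an identity about the bitstring sums over $\vec{p},\vec{q},\vec{p}',\vec{q}'$, and those are entirely absent from $\varepsilon_{\sigmat_w}$ (the paper notes the backward nodes contain no $\ketbra{\vec{p}}{\vec{q}}$ operators); Lemmas~\ref{lemma:s1-s2} and~\ref{lemma:dhs-inequality} do their work once, in the forward-cone derivation of Proposition~\ref{prop1}.
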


\subsection{Variance of the cost function partial derivative}

Let us assume here that  the block $W$ is in the $\ell$-th layer of the QCNN. As discussed in the main text, we can decompose $V$ as 
\begin{align*}
    V=V_R W V_L\,,
\end{align*}
where $V_R$ contains all the gates in the forward light-cone $\LC_F$ of $W$, and where $V_L$ contains all remaining gates. Moreover, $W$ can also be decomposed as
\begin{align*}
    W=W_A W_B\,,
\end{align*}
where
\begin{align*}
 W_A=\prod_{\eta\leq\mu} e^{-i \theta_{\eta}H_\eta}W_\eta\,, \quad   W_B=\prod_{\eta>\mu} e^{-i \theta_{\eta}H_\eta}W_\eta\,.
\end{align*}
As shown in \cite{cerezo2020cost}, the partial derivative of $C(\thv)$ with respect to $\theta_\mu$ can be written as
\begin{align*}
    \partial_\mu C \!=\! \Tr\left[W_A V_L\sigma V_L\ad W_A\ad [H_\mu,W_B\ad V_R\ad \widetilde{O} V_RW_B]\right],
\end{align*}
and if either $W_A$ or $W_B$ form a $1$-design, we have
\begin{align*}
    \avg{\partial_\mu C}=0\,,
\end{align*}
and
\begin{align}\label{eq:VarcSM}
 \Var[\partial_\mu C]
 =
 \frac{2\Tr[H_\mu^2]}{225}
 \sum_{\substack{\vec{p}\vec{q}\\\vec{p'}\vec{q'}}}\avg{\Delta\Omega_{\vec{q}\vec{p}}^{\vec{q'}\vec{p'}}}_{V_R}\avg{\Delta\Psi_{\vec{p}\vec{q}}^{\vec{p'}\vec{q'}}}_{V_L}\,.
\end{align}
Here the summation runs over all bitstrings $\vec{p}$, $\vec{q}$, $\vec{p'}$, $\vec{q'}$ of length $2^{n-2}$. In addition, we defined 
\begin{align*}
        \Delta\Omega_{\vec{q}\vec{p}}^{\vec{q'}\vec{p'}} &= \Tr[ \Omega_{\vec{q}\vec{p}}\Omega_{\vec{q'}\vec{p'}}] -\frac{\Tr[ \Omega_{\vec{q}\vec{p}}]\Tr[\Omega_{\vec{q'}\vec{p'}}]}{4}\,, \\
    \Delta\Psi_{\vec{p}\vec{q}}^{\vec{p'}\vec{q'}} &=  \Tr[\Psi_{\vec{p}\vec{q}}\Psi_{\vec{p'}\vec{q'}}]-\frac{\Tr[\Psi_{\vec{p}\vec{q}}]\Tr[\Psi_{\vec{p'}\vec{q'}}]}{4}\,,
\end{align*}
 where  $\Omega_{\vec{q}\vec{p}}$ and $\Psi_{\vec{q}\vec{p}}$ are operators on $\HC_{w}$ defined as 
\begin{align*}
    \Omega_{\vec{q}\vec{p}}&=\Tr_{\wbar}\left[(\ketbra{\vec{p}}{\vec{q}}\otimes\id_{w})V_R\ad \widetilde{O}V_R\right]\,,
    \\
    \Psi_{\vec{p}\vec{q}}&=\Tr_{\wbar}\left[(\ketbra{\vec{q}}{\vec{p}}\otimes\id_{w})V_L \sigma V_L\ad\right]\,.
\end{align*}
Here $\Tr_{\wbar}$ indicates the trace over all qubits not in $\HC_{w}$, with $\HC_w$ being the Hilbert space associated with the qubits that that $W$ acts on.

In what follows, we employ the GRIM to compute the contributions of the expectation values $\langle\Delta\Omega_{\vec{q}\vec{p}}^{\vec{q'}\vec{p'}}\rangle_{V_R}$ and $\langle\Delta\Psi_{\vec{p}\vec{q}}^{\vec{p'}\vec{q'}}\rangle_{V_L}$ to the variance of the cost function partial derivative. 

\subsection{Integration over the unitaries in $V_R$ via the GRIM}
\label{sec:appendix-grim}

\subsubsection{Grouping the unitaries in modules}

\begin{figure}[t]
    \centering
    \includegraphics[width=.95\columnwidth]{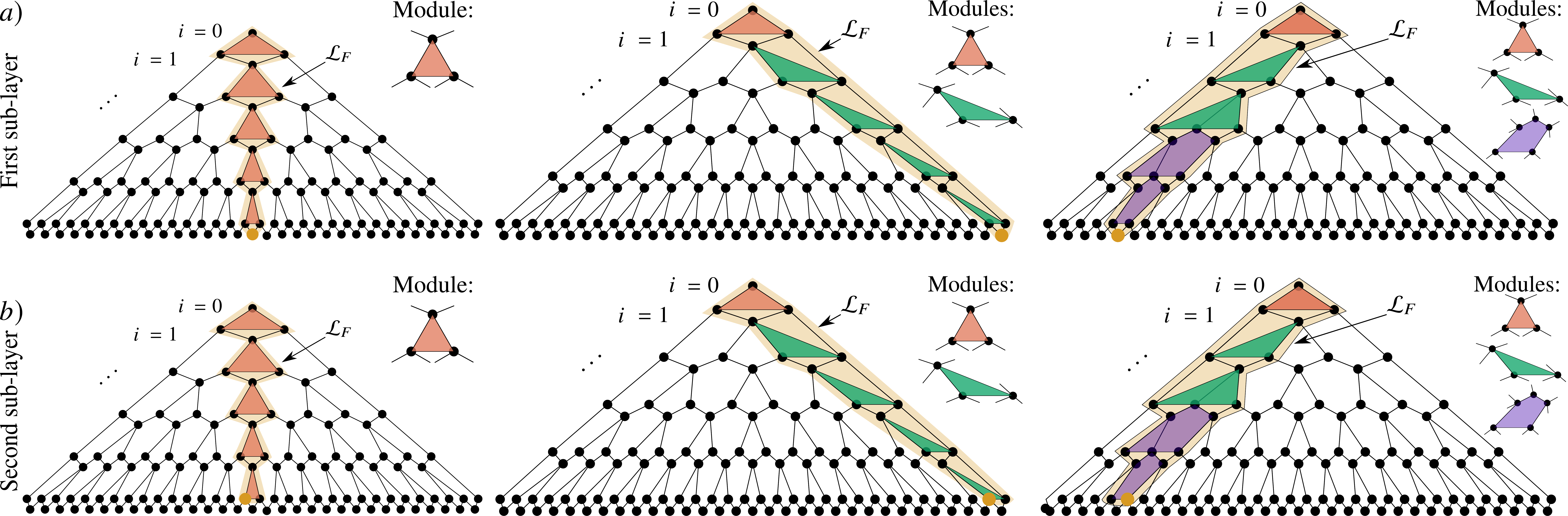}
    \caption{Examples of the first step of the GRIM. Here we can cover all of the forward light-cone $\LC_F$ using three basic modules repeated $\ell$ times. Panels (a) and (b) respectively correspond to the cases when the unitary $W$ is in first sub-layer and second sub-layer of the $\ell$-th QCNN layer. In all cases,  $W$ is indicated by a larger orange circle and the forward light-cone is shaded in yellow.  For the panels in (a) $W$ is not included in any module, while in (b) $W$ belongs to the first module $M$.   }
    \label{fig:all-module-2}
\end{figure}

Let us now focus on the term $\langle\Delta\Omega_{\vec{q}\vec{p}}^{\vec{q'}\vec{p'}}\rangle_{V_R}$. As discussed in the main text we here employ the GRIM, whose first step is to group  the unitaries in $\LC_F$ into modules $M$. Moreover, as show in Fig.~\ref{fig:all-module-2}, for the QCNN architecture one only needs three basic modules: a \textit{center module} $M_{\CC}$, a \textit{middle module} $M_{\MC}$ and an \textit{edge module} $M_{\EC}$. Note that different choices of $W$ lead to different modules being employed. Surprisingly, for any $W$ we will have one of the following three cases:
\begin{itemize}
    \item Case 1: The light-cone $\LC_F$ can be covered by $\ell$ modules $M_{\CC}$.
    \item Case 2: The light-cone $\LC_F$ can be covered by $(\ell-1)$ modules $M_{\EC}$ and a module $M_{\CC}$.
    \item Case 3: The light-cone $\LC_F$ can be covered by $\ell'$ modules $M_{\MC}$, $\ell''$ modules $M_{\EC}$ and a module $M_{\CC}$ such that $\ell'+\ell''+1=\ell$.
\end{itemize}

Let us make the following important remarks. First, note that the {\it final} module is always $M_{\CC}$. Second, let us recall that $W$  can be in the first or in the  second sub-layer. As indicated in Fig.~\ref{fig:all-module-2}(a), if $W$ is in the first sub-layer, then $W$ does not belong to any module. However, if $W$ is in the second sub-layer, then it is always one of the unitaries in a given module (see Fig.~\ref{fig:all-module-2}(b)). This can be alternatively thought as introducing additional basic {\it initial} modules corresponding to taking $M_{\CC}$, $M_{\MC}$ and $M_{\EC}$ and removing a unitary from the second sub-layer.

More formally, the first step of the GRIM implies that $V_R$ can always be decomposed into a set of $\ell$ modules:
\begin{align}
    V_R = V_R^{(\ell)} V_R^{(\ell-1)} \cdots V_R^{(1)}\,,
\end{align}
where $V_R^{(i)}\in\{M_{\CC},M_{\EC},M_{\MC}\}$ for $2 \leq i \leq \ell-1$, while $V_R^{(1)}$ and $V_R^{(\ell)}=M_{\CC}$ are initial and final modules. To integrate the initial module $V_R^{(1)}$ it is convenient to rewrite $\Omega_{\vec{q}\vec{p}}$ as
\begin{align}
    \Omega_{\vec{q}\vec{p}}
    &=
    \Tr_{\wbar}\left[(\ketbra{\vec{p}}{\vec{q}}\otimes\id_{w}) \left(V_R^{(1)}\right)\ad \cdots \left(V_R^{(\ell-1)}\right)\ad \left(V_R^{(\ell)}\right)\ad O V_R^{(\ell)} V_R^{(\ell-1)} \cdots V_R^{(1)} \right] \\ \label{eq:omega-pq-first-module}
    &=
    \Tr_{\wbar}\left[(\ketbra{\vec{p}}{\vec{q}}\otimes\id_{w}) \left(V_R^{(1)}\right)\ad \Ot_{\ell-1} V_R^{(1)} \right]\,,
\end{align}
where we defined
\begin{align}
    \Ot_{i}
    =
    \left(V_R^{(\ell-i+1)}\right)\ad \cdots \left(V_R^{(\ell-1)}\right)\ad \left(V_R^{(\ell)}\right)\ad
    O 
    V_R^{(\ell)} V_R^{(\ell-1)} \cdots V_R^{(\ell-i+1)}.
\end{align}
Finally, let $S_{\wbar}$ be the set of qubits in $\HC_{\overline{w}}$, that is, the set of qubits that $W$ acts trivially on. We decompose $S_{\wbar}$ as the disjoint union
\begin{align}
    S_{\wbar}  = \LC_1 \cup ... \LC_\ell \cup \LCb \,,
\end{align}
where $\LC_k$ represents all the qubits in $S_{\wbar}$ that $V_R^{(k)}$ acts non-trivially on. Moreover, we define $\LCb=S_{\wbar} \setminus \left( \LC_1 \cup ... \cup \LC_\ell \right)$ as the set of all the qubits that are not in the forward light-cone of $W$. Using this decomposition we can rewrite the partial trace of Eq.~\eqref{eq:omega-pq-first-module} as
\begin{align}
    \Omega_{\vec{q}\vec{p}}
    &=
    \Tr_{\LC_1,...,\LC_\ell,\LCb}\left[(\ketbra{\vec{p}}{\vec{q}}\otimes\id_{w}) \left(V_R^{(1)}\right)\ad \Ot_{\ell-1} V_R^{(1)} \right]\,.
\end{align}

In Fig.~\ref{fig:omega-pq} we show the tensor network representation of $\Omega_{\vec{q}\vec{p}}$ and of  $\Delta\Omega_{\vec{q}\vec{p}}^{\vec{q'}\vec{p'}}$ for the case when $V_R^{(1)}=M_{\MC}$.  In the next section we show how to generate the graph $\GC_w$ via the GRIM.

\begin{figure}[t]
    \centering
    \includegraphics[width=.8\columnwidth]{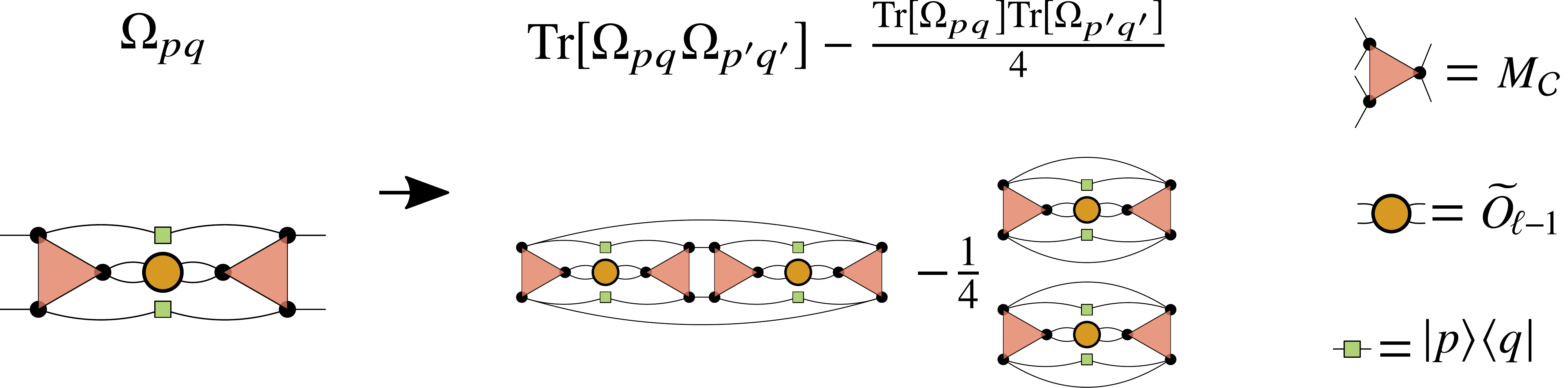}
    \caption{Tensor network representation of  $\Omega_{\vec{q}\vec{p}}$ and $\Delta\Omega_{\vec{q}\vec{p}}^{\vec{q'}\vec{p'}}$. Here we consider the case when $W$ is in the middle of the $\ell$-th layer.  }
    \label{fig:omega-pq}
\end{figure}

\subsubsection{Basic notation}

Here we introduce the basic notation employed to form the GRIM graph. For simplicity, let us consider  the tensor network of $\Delta\Omega_{\vec{q}\vec{p}}^{\vec{q'}\vec{p'}}$ as shown in Fig.~\ref{fig:omega-pq}, and let us consider the task of integrating the blocks in $V_R^{(1)}$. Employing the Weingarten calculus of Eqs.~\eqref{eq:Haarmoment}--\eqref{eq:Haar2moment2} we can express the result of $\langle\Delta\Omega_{\vec{q}\vec{p}}^{\vec{q'}\vec{p'}}\rangle_{V_R^{(1)}}$ as a linear combination of all possible contractions of the $w$ and $\LC_1$ wires. Noting that $\Delta\Omega_{\vec{q}\vec{p}}^{\vec{q'}\vec{p'}}=N_1\left(\widetilde{O}_\ell\right )$, with $N_1(\Ot)=T_{\emptyset}(\Ot) - \frac{1}{4} T_{w}(\Ot)$ (see Eqs.~\ref{eq:F-graph} and \ref{eq:T-contractions}), we write
\begin{align} \label{eq:recursive-graph-equation}
    \int_{V_R^{(1)}}\Delta\Omega_{\vec{q}\vec{p}}^{\vec{q'}\vec{p'}}d\mu 
    =\int_{V_R^{(1)}} N_1\left(\widetilde{O}_\ell\right )d\mu
    =
    \sum_{\beta}\sum_{s' \in \PC(\LC_1)} a_{1,\beta,s'} p_{1,\beta,s'} N_{\beta}(\Ot_{\ell-1})\,,
\end{align}
where we denote $\int_{V_R^{(1)}} d\mu= \int ... \int dW_1 ... dW_m$ as the integral with respect to the Haar measure over the unitary group for all gates $\{ W_{\text{init}} \}_{i=1}^m$ appearing in $V_R^{(1)}$. The coefficients $a_{1,\beta,s'}$ are real numbers to be found, the non-zero of which characterize the module that we are integrating. Moreover, we remark that $\PC(\LC_1)$ and $\PC(\LC_1 \cup w)$ respectively denote the power sets of the qubits in $\LC_1$ and $\LC_1 \cup w$. Hence, the summations in~\eqref{eq:recursive-graph-equation} and~\eqref{eq:F-graph} are taken over all the subsets $s'$ and $s$ of $\LC_1$ and $\LC_1 \cup w$, respectively. In addition, here we have defined the {\it nodes}
\begin{align} \label{eq:F-graph}
    N_{\beta}(\Ot_{\ell-1})    =    \sum_{s \in \PC(\LC_1 \cup w)} c_{\beta,s} T_s(\Ot_{\ell-1})\,,
\end{align}
where $c_{\beta,s}$ are real coefficients that characterize the node, and we define the contraction of the operator $\Ot_{\ell-1}$
\begin{align} \label{eq:T-contractions}
    T_s(\Ot_{\ell-1})
    =
    \Tr\left[
    \Tr_{s,\LC_2,...,\LC_\ell,\LCb}\left[(\ketbra{\vec{p}}{\vec{q}}\otimes\id_{w}) \Ot_{\ell-1} \right]
    \Tr_{s,\LC_2,...,\LC_\ell,\LCb}\left[(\ketbra{\vec{p'}}{\vec{q'}}\otimes\id_{w}) \Ot_{\ell-1} \right]
    \right]\,.
\end{align}
Here $\ketbra{\vec{p}}{\vec{q}}$ and $\ketbra{\vec{p'}}{\vec{q'}}$ are operators over the qubits in  $S_{\wbar} \setminus \LC_1$. The contribution of the projectors $\ketbra{\vec{p}}{\vec{q}}$ and $\ketbra{\vec{p'}}{\vec{q'}}$ over the qubits in  $\LC_1$ is given by the operators $p_{1,\beta,s}$, which are defined as
\begin{align}\label{eq:pcontraction}
    p_{1,\beta,s}
    &=
    \Tr\left[
    \Tr_{s}\left[\ketbra{\vec{p}}{\vec{q}}_{\LC_1} \right]
    \Tr_{s}\left[\ketbra{\vec{p'}}{\vec{q'}}_{\LC_1} \right]
    \right] \\
    &=
    \delta_{({\vec{p}}{\vec{q}})_{s}} 
    \delta_{({\vec{p'}}{\vec{q'}})_{s}} \delta_{({\vec{p}}{\vec{q'}})_{\sbar}} \delta_{({\vec{p'}}{\vec{q}})_{\sbar}}
\end{align}
where $\sbar=\LC_1 \setminus s$. We refer to reader to Fig.~\ref{fig:toygraph}(a) for a graphical representation of the operator $p_{1,\beta,s}$.

Here we remark that $a_{1,\beta,s'}$ and $c_{\beta,s}$ are not uniquely defined for a given integral, since for every $s'$, we can replace $c_{\beta,s}$ by $\gamma_{\beta} c_{\beta,s}$ and $a_{1,\beta,s'}$ by $a_{1,\beta,s'}/\gamma_{\beta}$, where $\gamma_{\beta}$ can be any non-zero number:
\begin{align} \label{eq:gauge-freedom}
   \sum_{\beta} \sum_{s' \in \PC(\LC_1)} a_{1,\beta,s'} p_{1,\beta,s'} \sum_{s \in \PC(\LC_1 \cup w)} c_{\beta,s} T_s(\Ot_{\ell-1})
    =
   \sum_{\beta} \sum_{s' \in \PC(\LC_1)} \frac{a_{1,\beta,s'}}{\gamma_{\beta}} p_{1,\beta,s'} \sum_{s \in \PC(\LC_1 \cup w)} \gamma_{\beta} c_{\beta,s} T_s(\Ot_{\ell-1})\,.
\end{align}
To fix this freedom, we henceforth impose that $c_{\beta, \emptyset}=1$ or $-1$ and that $a_{1,\beta,s'} \geq 0$. The choice of $a_{1,\beta,s'} \geq 0$ guarantees that the edge coefficients of the GRIM graph are always positive.

\subsubsection{Construction of the graph}

\begin{figure}[t]
    \centering
    \includegraphics[width=.8\columnwidth]{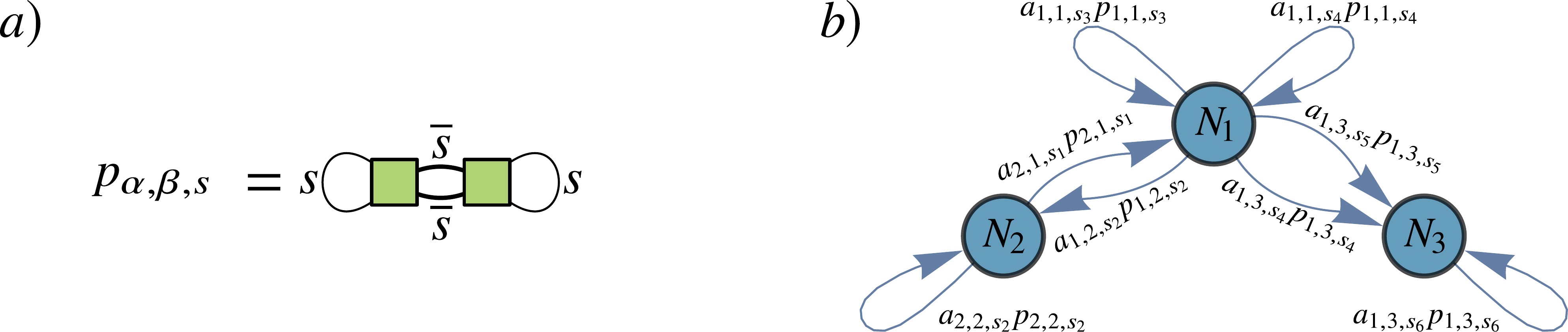}
    \caption{ (a) Schematic representation of the $p_{\alpha,\beta,s}$ coefficients. The contractions of the operators $\ket{\vec{p}}\bra{\vec{q}}$ lead to the terms $p_{\alpha,\beta,s}$ according to~\eqref{eq:pcontraction}. (b) Representative graph obtained after integrating the modules in the forward light-cone $\LC_F$. The nodes $N_\alpha$ is connected to the node $N_\beta$ via an (oriented) arrow. Each edge is associated to an operator $e_{\alpha,\beta,s}=a_{\alpha,\beta,s}p_{\alpha,\beta,s}$.  }
    \label{fig:toygraph}
\end{figure}

Employing the notation introduced in the previous section we can now construct the structure of the graph $\GC_w$, of which we provide a canonical version in Fig.~\ref{fig:toygraph}(b). The recursive procedure is defined as follows:
\begin{itemize}
    \item We define an initial node $N_1$ defined as
    \begin{align} \label{eq:init-node}
        N_1(\Ot)=T_{\emptyset}(\Ot) - \frac{1}{4} T_{w}(\Ot)\,,
    \end{align}
    for any operator $\Ot$. Note that $N_1$ is simply a particular case of Eq. \eqref{eq:F-graph}.  As described in the previous section, we integrate the unitaries in $V_R^{(1)}$ and obtain a result of the form~\eqref{eq:recursive-graph-equation}. In the graph, we represent this by drawing an oriented edge (an arrow) from the initial node $N_1$ to the nodes $N_\beta$ that appear on the right-hand side of~\eqref{eq:recursive-graph-equation}. 
    \item Let $N_\alpha$ be a node in the  graph obtained by integrating the $(i-1)$-th modules. To construct the next nodes in the graph, we integrate $N_\alpha$ over the module $V_R^{(i)}$. Employing Eq.~\eqref{eq:recursive-graph-equation} we can write the result as
    \begin{align}\label{eq:recursionstep}
        \int_{V_R^{(i)}} N_\alpha(\Ot^{(i)})d\mu 
        =
        \sum_\beta \sum_{s' \in \PC(\LC_i)} a_{\alpha,\beta, s'} p_{\alpha,\beta, s'} N_{\beta}(\Ot^{(i+1)})=
        \sum_\beta \sum_{s' \in \PC(\LC_i)} e_{\alpha,\beta, s'} N_{\beta}(\Ot^{(i+1)})\,.
    \end{align}
    As shown in Fig.~\ref{fig:toygraph}(b), for any non-zero $a_{\alpha,\beta, s'} p_{\alpha,\beta, s'}$ we associate a new edge (arrow) from $N_\alpha$ to the nodes in the right-hand-side of~\eqref{eq:recursionstep}. Here we define for convenience the operator
    \begin{equation}\label{eq:eoperator}
            e_{\alpha,\beta, s'}=a_{\alpha,\beta, s'} p_{\alpha,\beta, s'}\,.
    \end{equation}
\end{itemize}
Applying the aforementioned process $\ell$ times, we can obtain the basic structure of the graph $\GC_w$, which is composed of a set of nodes $\{N_\alpha\}$ and associated oriented edges. Let us make two important remarks. First, each edge of the graph is now associated with an operator  $e_{\alpha, \beta,s'}$. Below we show how to obtain the real edge coefficients $\lambda_{\alpha, \beta}$ from the $e_{\alpha, \beta,s'}$. Moreover, we also define an equivalence relation between nodes such that $N_{\alpha_1}$ and $N_{\alpha_2}$ correspond to the same node if they apply the same wire contractions, independent of the qubit that the wire represents.

Next, let us define a walk $\widehat{g}$ through the graph as a sequence of connected edges
\begin{equation}
    \widehat{g} = (e_{\alpha_1, \alpha_2,s'_1}, e_{\alpha_2, \alpha_3,s'_2}, ...)\,,
\end{equation}
such that:
\begin{itemize}
    \item The operator $e_{\alpha_i, \alpha_{i+1},s_i}$ is associated to one of the edges connecting $N_{\alpha_i}$ to $N_{\alpha_{i+1}}$.
    \item The walk starts at the node $N_1$ of~\eqref{eq:init-node}, i.e., $\alpha_1=1$.
    \item As discussed below, in all QCNN graphs  the walk ends at a node $N_1$ evaluated at the operator $O$.
\end{itemize}
We denote the set of all such paths of length $\ell$ as $\widehat{P}_\ell$.

From the ensuing graph and the previous results we can now compute the expectation value $\langle\Delta\Omega_{\vec{q}\vec{p}}^{\vec{q'}\vec{p'}}\rangle_{V_R}$.  Consider a walk $\widehat{g} \in \widehat{P}_\ell$. Then, let us define the operator
\begin{equation}
    Q(\widehat{g}) = \prod_{e_{\alpha, \beta,s} \in \widehat{g}} e_{\alpha, \beta,s}\,, 
\end{equation}
that is, the product of the weights along the walk. We can then write
\begin{equation} \label{eq:sum-over-paths}
    \avg{\Delta\Omega_{\vec{q}\vec{p}}^{\vec{q'}\vec{p'}}}_{V_R} =\int_{V_R^{(\ell)}}\ldots\int_{V_R^{(1)}}\Delta\Omega_{\vec{q}\vec{p}}^{\vec{q'}\vec{p'}}d\mu_1\ldots d\mu_\ell  = N_1(O) \sum_{\widehat{g} \in \widehat{P}_\ell} Q(\widehat{g})\,.
\end{equation}
This equation can simplified by noting that, by definition,
\begin{align}
\begin{split}
    N_1(O) &= T_{\emptyset}(O) - \frac{1}{4} T_{w}(O) \\
    &= 
    \Tr\left[
    \Tr_{\LCb}[(\ketbra{\vec{p}}{\vec{q}}\otimes\id_{w}) O]
    \Tr_{\LCb}[(\ketbra{\vec{p'}}{\vec{q'}}\otimes\id_{w}) O]
    \right]
    -
    \frac{1}{4}
    \Tr[(\ketbra{\vec{p}}{\vec{q}}\otimes\id_{w}) O]
    \Tr[(\ketbra{\vec{p'}}{\vec{q'}}\otimes\id_{w}) O]
    \\
    &=
    \Tr[\ketbra{\vec{p}}{\vec{q}}] 
    \Tr[\ketbra{\vec{p'}}{\vec{q'}}] 
    \left( \Tr[O^2] - \frac{1}{4} \Tr[O]^2 \right)
    \\
    &=
    (\delta_{\vec{p}\vec{q}})_{\LCb}
    (\delta_{\vec{p'}\vec{q'}})_{\LCb}
    \varepsilon_O\,,
\end{split}
\end{align}
where $\varepsilon_O = \Tr[O^2] - \frac{1}{4} \Tr[O]^2$ was defined in the main text. Therefore we have:
\begin{equation} \label{eq:sum-over-paths-final}
    \avg{\Delta\Omega_{\vec{q}\vec{p}}^{\vec{q'}\vec{p'}}}_{V_R} 
    =
    (\delta_{\vec{p}\vec{q}})_{\LCb}
    (\delta_{\vec{p'}\vec{q'}})_{\LCb}\,\,
    \varepsilon_O
    \sum_{\widehat{g} \in \widehat{P}_\ell} Q(\widehat{g}) \,.
\end{equation}

The next step in the GRIM is to simplify the graph by lower bounding Eq.~\eqref{eq:sum-over-paths-final} and  associating to each vertex a real positive coefficient $\lambda_{\alpha \beta}$ in place of each operator $e_{\alpha, \beta,s}$ that appears in $Q(\widehat{g})$. We remark that below we analyze the three possible cases for the placement of $W$.

\subsubsection{Simplification of the graph}

Let us first recall that all the $p_{\alpha, \beta,s}$ obtained from integrating the $k$-th module can be written in terms of the operators
$(\delta_{\vec{p}\vec{q'}})_{\sbar_k} (\delta_{\vec{q}\vec{p'}})_{\sbar_k} (\delta_{\vec{p}\vec{q}})_{s_k} (\delta_{\vec{p'}\vec{q'}})_{s_k}$
where $\sbar_k \cup s_k = \LC_k$ and $\sbar_k \cap s_k = \emptyset$. 
Hence, from~\eqref{eq:sum-over-paths-final} we can write
\begin{align}\label{eq:deltacontractions}
    (\delta_{\vec{p}\vec{q}})_{\LCb}
    (\delta_{\vec{p'}\vec{q'}})_{\LCb}
    \prod_{e_{\alpha, \beta,s} \in \widehat{g}}  p_{\alpha, \beta,s}
    =
    (\delta_{\vec{p}\vec{q'}})_{\Sb_P} (\delta_{\vec{q}\vec{p'}})_{\Sb_P} (\delta_{\vec{p}\vec{q}})_{S_P} (\delta_{\vec{p'}\vec{q'}})_{S_P}\,,
\end{align}
where $S_P$ is the disjoint union of all the $s_k$ and $\LCb$, and where $\Sb_P = \wbar \setminus S_P$.

Combining Eqs.~\eqref{eq:VarcSM}, \eqref{eq:sum-over-paths-final}, and~\eqref{eq:deltacontractions} we find 
\begin{align}
    \sum_{\substack{\vec{p}\vec{q}\\\vec{p'}\vec{q'}}}
    \avg{\Delta\Omega_{\vec{q}\vec{p}}^{\vec{q'}\vec{p'}}}_{V_R}
    \avg{\Delta\Psi_{\vec{p}\vec{q}}^{\vec{p'}\vec{q'}}}_{V_L}
    &=
    \varepsilon_O \sum_{\widehat{g} \in \widehat{P}_\ell}
    \left( \prod_{e_{\alpha, \beta,s} \in \widehat{g}} a_{ \alpha, \beta,s} \right)
    \sum_{\substack{\vec{p}\vec{q}\\\vec{p'}\vec{q'}}} (\delta_{\vec{p}\vec{q'}})_{\Sb_P} (\delta_{\vec{q}\vec{p'}})_{\Sb_P} (\delta_{\vec{p}\vec{q}})_{S_P} (\delta_{\vec{p'}\vec{q'}})_{S_P} 
    \avg{\Delta\Psi_{\vec{p}\vec{q}}^{\vec{p'}\vec{q'}}}_{V_L}
    \\
    &=  \varepsilon_O
    \sum_{\widehat{g} \in \widehat{P}_\ell}
    \left( \prod_{e_{\alpha, \beta,s} \in \widehat{g}} a_{ \alpha, \beta,s} \right)
    \avg{D_{HS}\left(\sigmat_{\Sb_P,w},\sigmat_{\Sb_P} \otimes \frac{\id}{4}\right)}_{V_L}
    \\
    &\geq  \varepsilon_O
    \sum_{\widehat{g} \in \widehat{P}_\ell}
    \left( \prod_{e_{\alpha, \beta,s} \in \widehat{g}} a_{ \alpha, \beta,s} \right)
    \frac{1}{4} \avg{D_{HS}\left(\sigmat_{w}, \frac{\id}{4} \Tr[\sigmat] \right)}_{V_L}
    \frac{1}{2^{|\Sb_P|}}
    \\ \label{eq:one-half-decomposition}
    &= 
    \frac{ \varepsilon_O}{4}
    \avg{\varepsilon_{ \sigmat_{w}}}_{V_L}
    \sum_{\widehat{g} \in \widehat{P}_\ell}
    \left( \prod_{e_{\alpha, \beta,s} \in \widehat{g}} a_{ \alpha, \beta,s} \right)
    \frac{1}{2^{\sum_{e_{\alpha, \beta,s} \in \widehat{g}} |\sbar|}}
    \\ \label{eq:sum-one-half-explicit}
    &=
    \frac{ \varepsilon_O}{4}
    \avg{  \varepsilon_{ \sigmat_{w}} }_{V_L}
    \sum_{\widehat{g} \in \widehat{P}_\ell}
    \left( \prod_{e_{\alpha, \beta,s} \in \widehat{g}} \frac{a_{ \alpha, \beta,s}}{2^{|\sbar|}} \right)\,,
\end{align}
where we denote $|s|$ as the number of qubits in the set $s$. The first equality arises from the fact that  $a_{ \alpha, \beta,s}$ are independent of $\vec{p},\vec{q},\vec{p}',\vec{q}'$. The second equality follows  from Lemma \ref{lemma:s1-s2}. Then, the inequality is obtained by invoking  Lemma \ref{lemma:dhs-inequality} (equivalence of the norms and the monotonicity of the partial trace). In~\eqref{eq:one-half-decomposition} we have used the definition of  $|\Sb_P|$. Finally, in the last line we have simply grouped terms into a single product.

We see from Eq. (\ref{eq:sum-one-half-explicit}) that every coefficient of the form $(\delta_{\vec{p}\vec{q'}})_{\sbar}(\delta_{\vec{p}\vec{q'}})_{\sbar}$ induces a factor $\frac{1}{2^{|\sbar|}}$ in the edge coefficients. That leads us to a new, simplified, version of our graph, where each edge operator $e_{\alpha, \beta,s}$ is replaced by an edge coefficient $\lambda_{\alpha \beta}$:
\begin{align}\label{eq:sumlambda}
    e_{\alpha, \beta,s}
    =
    a_{s_{ \alpha, \beta,s}} p_{ \alpha, \beta,s} 
    \rightarrow
    \lambda_{\alpha \beta}=\sum_s \frac{a_{ \alpha, \beta,s}}{2^{|\sbar|}}\,.
\end{align}
Moreover, two edges connecting two given nodes $N_\alpha$ and $N_\beta$, with associated coefficients $e_{\alpha,\beta,s}$ and $e_{\alpha,\beta,s'}$, can be factored into a single edge in Eq.~\eqref{eq:sum-one-half-explicit}. The associated coefficient is the sum of the $\lambda_{\alpha \beta}$ corresponding to the two edges.

We denote $\GC_w$ as the final graph consisting of the simplified $\lambda_{\alpha \beta}$ coefficients and we denote $P_\ell (\GC_w)$ as the set of paths in this graph. Hence, the final lower bound can be written:
\begin{align} \label{eq:prop1-equation-appendix}
    \sum_{\substack{\vec{p}\vec{q}\\\vec{p'}\vec{q'}}}
    \avg{\Delta\Omega_{\vec{q}\vec{p}}^{\vec{q'}\vec{p'}}}_{V_R}
    \avg{\Delta\Psi_{\vec{p}\vec{q}}^{\vec{p'}\vec{q'}}}_{V_L}
    &\geq
     \frac{ \varepsilon_O}{4}
    \avg{  \varepsilon_{ \sigmat_{w}} }_{V_L}
    \sum_{g \in P_\ell (\GC_w)}
    \Lambda_g\,,
\end{align}
where we define the sequence of connected edges
\begin{equation}
    g = (\lambda_{\alpha_1, \alpha_2}, \lambda_{\alpha_2, \alpha_3}, ...)\,.
\end{equation}
and $$\Lambda_g=\prod_{\lambda_{\alpha \beta} \in g} \lambda_{\alpha \beta}.$$
Note that this constitutes a proof for Proposition~\ref{prop1}.

\subsection{Integration over the unitaries in $V_L$ via the GRIM}

\subsubsection{Method}

In this section we introduce the general method employed to compute the expectation value $\langle  \varepsilon_{ \sigmat_{w}} \rangle_{V_L}$. In contrast to the forward light-cone,  as shown in Fig.~\ref{fig:backward-light-cone}(a), the width of the backward light-cone $\LC_B$ is not bounded and instead grows with the number of layers. The goal here is to employ the GRIM to reduce the number of gates that need to be integrated to compute $\langle  \varepsilon_{ \sigmat_{w}} \rangle_{V_L}$.  

First, let us remark that since we are interested in the operator $\sigmat_{w}=\Tr_{\wbar}[V_L \sigma V_L^\dag]$, all the unitaries in $V_L$ which are not in $\LC_B$ will compile to identity (see Fig.~\ref{fig:backward-light-cone}(a) and (b)). Defining $V_{\LC_B}$ as the unitary containing the blocks in $\LC_B$, then we have that $\langle  \varepsilon_{ \sigmat_{w}} \rangle_{V_L}=\langle  \varepsilon_{ \sigmat_{w}} \rangle_{V_{\LC_B}}$. 

\begin{figure}[t]
    \centering
    \includegraphics[width=.99\columnwidth]{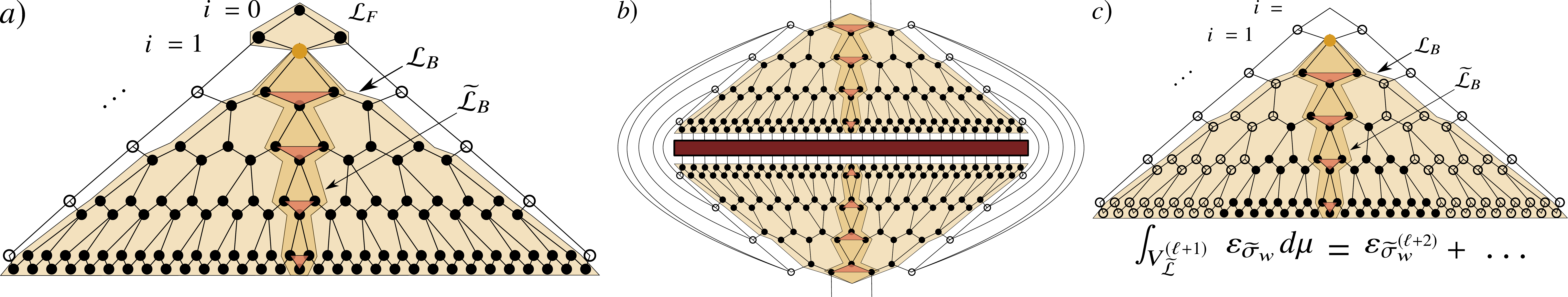}
    \caption{ (a) Schematic representation of the backwards light cone $\LC_B$ (shaded in yellow) and the effective light-cone $\LCt_B$ (shaded in a darker yellow). Here, the width of $\LC_B$ increases with the number of layers. To avoid integrating all the unitaries in $\LC_B$ we define an effective light-cone $\LCt_B$ which can be covered by modules according to the GRIM. (b) Tensor network representation of the unitary $\widetilde{\sigma}_w$. The large red rectangular tensor corresponds to $\sigma$.   Here we can see that the unitaries in $V_L$ which are not in $\LC_B$ simplify to identity when computing $\Tr_{\overline{w}}[V_L \sigma V_L\ad]$. We mark such identities with an unshaded circle.   (c) Integrating the unitaries in $V_{\LCt_B}^{(L-1)}$ leads to a summation of terms according to~\eqref{eq:recursionstepback}. Here, one of these terms corresponds to $\varepsilon_{\widetilde{\sigma}_w^{(\ell+2)}}$. As schematically shown, in the computation of  $\varepsilon_{\widetilde{\sigma}_w^{(\ell+2)}}$ there are many gates that simplify to identity. These gates are indicated here as shallow circles.     }
    \label{fig:backward-light-cone}
\end{figure}

As depicted in Fig.~\ref{fig:backward-light-cone}(a), the next step is to draw an {\it effective backwards light-cone} $\LCt_B$ and to define the unitary $V_{\LCt_B}$ which contains all the gates in $\LCt_B$. This effective light-cone consists of the repetition of $M_\CC$ modules, starting after the integration of an initial unitary if we are in the second sub-layer, as shown in Fig. \ref{fig:initial-backward-first} and \ref{fig:initial-backward-edge-middle}. We will describe below a method to only have to effectively keep track of the reduced light-cone $\LCt_B$ by sequentially removing gates outside of $\LCt_B$ that compile to identity.

We here follow again the first step of GRIM to decompose $V_{\LCt_B}$ into a series of $(L-\ell)$ modules:
\begin{align}\label{eq:unitdecomp}
    V_{\LCt_B} = V_{\LCt_B}^{(\ell+1)} \cdots V_{\LCt_B}^{(L-1)} V_{\LCt_B}^{(L)}\,.
\end{align}
Equation~\eqref{eq:unitdecomp} also allows us to define the operators
\begin{equation}
    \widetilde{\sigma}^{(\ell+k)}=V_{\LCt_B}^{(\ell+k)} \cdots  V_{\LCt_B}^{(L)}{\sigma} \left(V_{\LCt_B}^{(L)}\right)\ad\cdots \left(V_{\LCt_B}^{(\ell+k)}\right)\ad\,,
\end{equation}
and we write $\widetilde{\sigma}_w^{(\ell+k)}=\Tr_{\wbar}[\widetilde{\sigma}^{(\ell+k)}]$.

As usual, the second step of the GRIM is to integrate the unitaries in $V_{\LCb_B}^{(\ell')}$, which again will lead to a recursion of the form
    \begin{align}\label{eq:recursionstepback}
        \int_{V_{\LCt_B}^{(\ell+1)}} \widetilde{N}_\alpha(\widetilde{\sigma}_w^{(\ell')})d\mu 
        =
        \sum_\beta \sum_{s' \in \PC(\LC_i)} a_{\alpha,\beta, s'}  \widetilde{N}_{\beta}(\widetilde{\sigma}_w^{(\ell'+1)})\,,
    \end{align}
where here the nodes $\widetilde{N}_{\alpha}$, contain no $\ket{\vec{p}}\bra{\vec{q}}$ operators due to the fact that $ \varepsilon_{ \sigmat_{w}}$ is independent of $\vec{p}$, $\vec{q}$, $\vec{p}'$, and $\vec{q}'$. Moreover, as explicitly shown below for all three possible scenarios for placement of $W$, there will always be specific tensor contractions  $N_\alpha$ where the unitaries not in  $V_{\LCt_B}$ compile to identify. Such a procedure is schematically shown in Fig.~\ref{fig:backward-light-cone}(c) for the middle module, where computing $\varepsilon_{\sigma_w^{(\ell+2)}}$ leads to many more blocks in $V_{\LC_B}$ being simplified and compiling to the identity. Hence, by sequentially repeating this procedure we  can obtain a lower bound for $\langle  \varepsilon_{ \sigmat_{w}} \rangle_{V_L}$ without needing to integrate all gates in $\LC_{B}$, but just those in $\LCt_B$.

In the next section we explicitly show this procedure for all possible choices of $W$ gate placement.

\subsubsection{$W$ in the first-sublayer}

\begin{figure}[t]
    \centering
    \includegraphics[width=1\columnwidth]{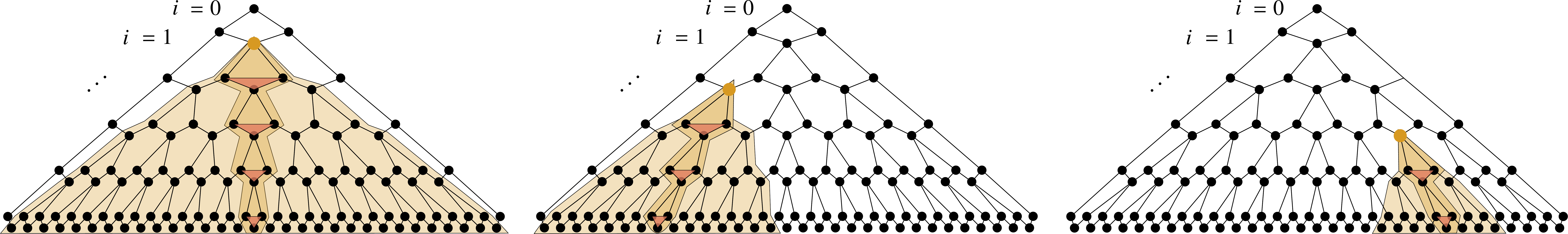}
    \caption{Effective backwards light cone $\LCt_B$ for the case when $W$ is in the first sub-layer or the $\ell$-th layer of the QCNN. Here, $\LCt_B$ can be covered by $(L-\ell)$ modules $M_\MC$.}
    \label{fig:initial-backward-first}
\end{figure}

Here we consider the first case when $W$ is in the first sub-layer of the $\ell$-th layer. This case is schematically shown in Fig.~\ref{fig:initial-backward-first}, where we can see that each unitary in $\LCt_B$ can be covered by $(L-\ell)$ modules $M_\CC$. By explicitly integrating the unitaries in the first module $M_\CC$ of $\LCt_B$ we find:
\begin{align}
    \int_{M_{\MC}}  \varepsilon_{\widetilde{\sigma}_w} d\mu 
    =&  \frac{1}{50} D_{HS}\left(\sigmat^{(\ell+2)}_{w},\frac{\id}{4} \Tr[\sigma] \right)+ \sum_{i,j} \widetilde{c}_{ij} D_{HS}\left(\widetilde{T}_i(\sigmat^{(\ell+2)}), \widetilde{T}_j(\sigmat^{(\ell+2)}) \right)\,, \label{eq:backward-integral-dhs} 
\end{align}
where $\widetilde{c}_{i,j}\geq 0$ $\forall i,j$.  Here,  $\widetilde{T}_i(\sigmat^{(\ell+1)}_{w})$ are different contractions obtained from the operator $\sigmat^{(\ell+1)}_{w}$ and $\widetilde{c}_i$ are real positive coefficients. Since the Hilbert-Schmidt distance is always positive, we have that the lower bound always holds:
\begin{align}
   \int_{M_{\MC}} \varepsilon_{\widetilde{\sigma}_w} d\mu
    &\geq
    \frac{1}{50} 
    D_{HS}\left(\sigmat^{(\ell+2)}_{w},\frac{\id}{4}\Tr[\sigma] \right)\\
    &=
    \frac{1}{50} \varepsilon_{\widetilde{\sigma}_w^{(\ell+2)}}\,.
\end{align}

As discussed in the previous section, repeating this procedure $(L-\ell)$ times and recursively applying the lower bound of~\eqref{eq:backward-integral-dhs}  leads to 
\begin{align}
    \avg{\varepsilon_{\widetilde{\sigma}_w}}_{V_L}
    &\geq
    \left(\frac{1}{50} \right)^{L-\ell}
    D_{HS}\left(\sigma_{w},\frac{\id}{4} \Tr[\sigma] \right)\\
     &=\left(\frac{1}{50} \right)^{L-\ell}\varepsilon_{\sigma_w}\,.\label{eq:final1}
\end{align}

Here we remark that combining Eqs.~\eqref{eq:VarcSM}, ~\eqref{eq:prop1-equation-appendix}, and~\eqref{eq:final1} leads to the proof of Theorem~\ref{theo1SM}. In what follows we consider the case when $W$ is in the second sub-layer.

\subsubsection{$W$ in the edge of the  second  sub-layer}

As shown in Fig.~\ref{fig:initial-backward-edge-middle}(a) here we consider the case when the unitary $W$ is either the first or the last unitary of the second sub-layer of the $\ell$-th layer of the QCNN. In order to group the unitaries in $\LCt_B$ into modules $M_{\MC}$ as we did in the previous section, we first have to integrate the unitary  that is adjacent to $W$ and in the first sub-layer.  This unitary is indicated in Fig.~\ref{fig:initial-backward-edge-middle}(a) by a larger circle with a dashed edge. We call the unitary $W_{\text{init}}$ as it is the first unitary that needs to be integrated in $\LCt_B$.

Explicitly, we find 
\begin{align}
    \int_{W_{\text{init}}}  \varepsilon_{\widetilde{\sigma}_w} d\mu
 =&  \frac{1}{5}   D_{HS}\left(\sigmat^{(\ell+2)}_{w_{\text{init}}},\frac{\id}{4} \Tr[\sigma] \right)+ \sum_{i,j} \widetilde{c}_{i,j} D_{HS}\left(\widetilde{T}_i(\sigmat^{(\ell+1)}),\widetilde{T}_j(\sigmat^{(\ell+1)}) \right)\,, \label{eq:backward-integral-dhs2} 
\end{align}
where we now remark that $\widetilde{\sigma}_{w_{\text{init}}}$ is the reduced state on the qubits that $W_{\text{init}}$ acts on, instead of those that $W$ acts on. Equation~\eqref{eq:backward-integral-dhs2} leads to the following lower bound:
\begin{align}
    \int_{W_{\text{init}}}  \varepsilon_{\widetilde{\sigma}_w}d\mu
    &\geq\frac{1}{5}   D_{HS}\left(\sigmat^{(\ell+2)}_{w_{\text{init}}},\frac{\id}{4} \Tr[\sigma] \right)\\
    &=
    \frac{1}{5}
    \varepsilon_{\widetilde{\sigma}_{w_{\text{init}}}^{(\ell+2)}}\,.
\end{align}
Now we have to integrate $(L-\ell)$ middle modules in $\LCt_B$, meaning that we can can simply employ the result  in~\eqref{eq:final1} to obtain 
\begin{align}
    \avg{\varepsilon_{\widetilde{\sigma}_w}}_{V_L}
    \geq
    \frac{1}{5} \left(\frac{1}{50} \right)^{L-\ell}
   \varepsilon_{\sigma_{w_{\text{init}}}}\,. \label{eq:final2}
\end{align}

\begin{figure}[t]
    \centering
    \includegraphics[width=1\columnwidth]{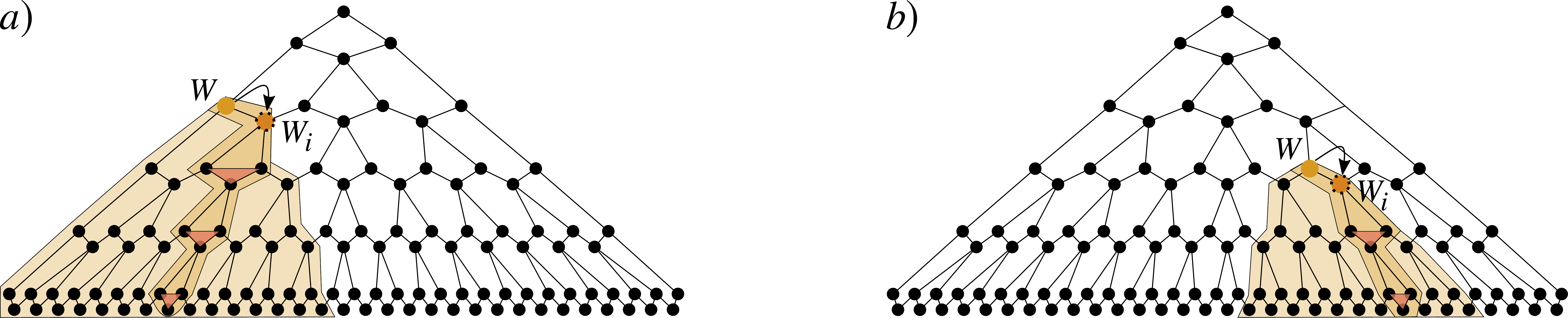}
    \caption{(a) The unitary $W$ is in the edge of the second sub-layer. Here all unitaries in the backwards light-cone $\LCt_B$ can be grouped in modules $M_\MC$ except for the initial unitary $W_{\text{init}}$. (b) The unitary $W$ is in the the second sub-layer but not in its edge. The backwards light-cone structure is similar to the one in panel (a). }
    \label{fig:initial-backward-edge-middle}
\end{figure}

\subsubsection{$W$ in the second sub-layer but not in the edge}

As shown in Fig.~\ref{fig:initial-backward-edge-middle}(b) here we consider the case when the unitary $W$ is in the second sub-layer of the $\ell$-th layer of the QCNN, but is not the first or the last unitary. Similarly to the previous case, we here have to integrate a unitary $W_{\text{init}}$. We find, 
\begin{align}
    \int_{W_{\text{init}}}  \varepsilon_{\widetilde{\sigma}_w} d\mu
 =&  \frac{1}{50}   D_{HS}\left(\sigmat^{(\ell+2)}_{w_{\text{init}}},\frac{\id}{4} \Tr[\sigma] \right)+ \sum_{i,j} \widetilde{c}_{ij} D_{HS}\left(\widetilde{T}_i(\sigmat^{(\ell+1)}),\widetilde{T}_j(\sigmat^{(\ell+1)}) \right)\,, \label{eq:backward-integral-dhs23} 
\end{align}
leading to the following lower bound:
\begin{align}
    \int_{W_{\text{init}}}  \varepsilon_{\widetilde{\sigma}_w} d\mu
    &\geq
    \frac{1}{50}
    D_{HS}\left(\sigmat^{(1)}_{w},\frac{\id}{4} \Tr[\sigma] \right)\\
    &=
    \frac{1}{50}
    \varepsilon_{\widetilde{\sigma}_{w_{\text{init}}}^{(\ell+2)}}\,.
\end{align}
Since the remaining unitaries can be grouped into $M_{\MC}$ modules, we can use Eq.~\eqref{eq:final1}  to obtain 
\begin{align}
    \avg{\varepsilon_{\widetilde{\sigma}_w}}_{V_L}
    \geq
    \frac{1}{50} \left(\frac{1}{50} \right)^{L-\ell}
    \varepsilon_{\sigma_{w_{\text{init}}}}\,. \label{eq:final3}
\end{align}

\subsection{General lower bound on the variance}

Here we combine the results previously obtained to derive a general lower bound for $\Var[\partial_\mu C]$ which is independent of whether $W$ is in the first or in the second sub-layer. 

Let us start by noting that Eqs.~\eqref{eq:final1}, \eqref{eq:final2}, and~\eqref{eq:final3} all provide a lower bound for the expectation value of $\avg{\varepsilon_{\widetilde{\sigma}_w}}_{V_L}$. Specifically, the bound in Eqs.~\eqref{eq:final2}, and~\eqref{eq:final3} can be obtained from that of~\eqref{eq:final1} by respectively dividing by $\frac{1}{5}$ and $\frac{1}{50}$. Taking the minimum of these three lower bounds, i.e.  the one associated with the coefficient $\frac{1}{50}$ gives the general lower bound for the backward light-cone:
\begin{align} \label{eq:backward-light-cone-lower-bound}
    \avg{\varepsilon_{\widetilde{\sigma}_w}}_{V_L}
    \geq
    \left(\frac{1}{50} \right)^{L-\ell+1}
    \varepsilon_{\widehat{\sigma}}\,,
\end{align}
where $\widehat{\sigma}=\sigma_{w}$ if $W$ is in the first sub-layer, or $\widehat{\sigma}=\sigma_{w_{\text{init}}}$ if $W$ is in the second sub-layer. 

Combining Eqs.~\eqref{eq:VarcSM}, ~\eqref{eq:prop1-equation-appendix}, with~\eqref{eq:backward-light-cone-lower-bound} results in a generalized form of Theorem~\ref{theo1SM}
\begin{align}
    \Var[\partial_\mu C]
    &\geq
    \frac{1}{9}\frac{\Tr[H_\mu^2] \varepsilon_O \varepsilon_{ \widehat{\sigma}}}{50^{L-\ell+2}}
    \sum_{\vec{g} \in P_\ell(\GC_w)}
    \Lambda_{\vec{g}}\,.
\end{align}

Moreover, one can further generalize this result by noting that, as shown in Fig.~\ref{fig:initial-backward-trick}, one can sequentially apply the trick of integrating a single $W_{ij}$ unitary in each sub-layer to change how $\LCt_B$ is defined. Now, $\LCt_B$ can be composed of a single unitary per sub-layer, and then of  a sequence of center modules.  This will lead to a final result in which $\widehat{\sigma}$ is the reduced state in {\it any} odd pair of qubits in $\LC_B$.

\begin{figure}[t]
    \centering
    \includegraphics[width=.75\columnwidth]{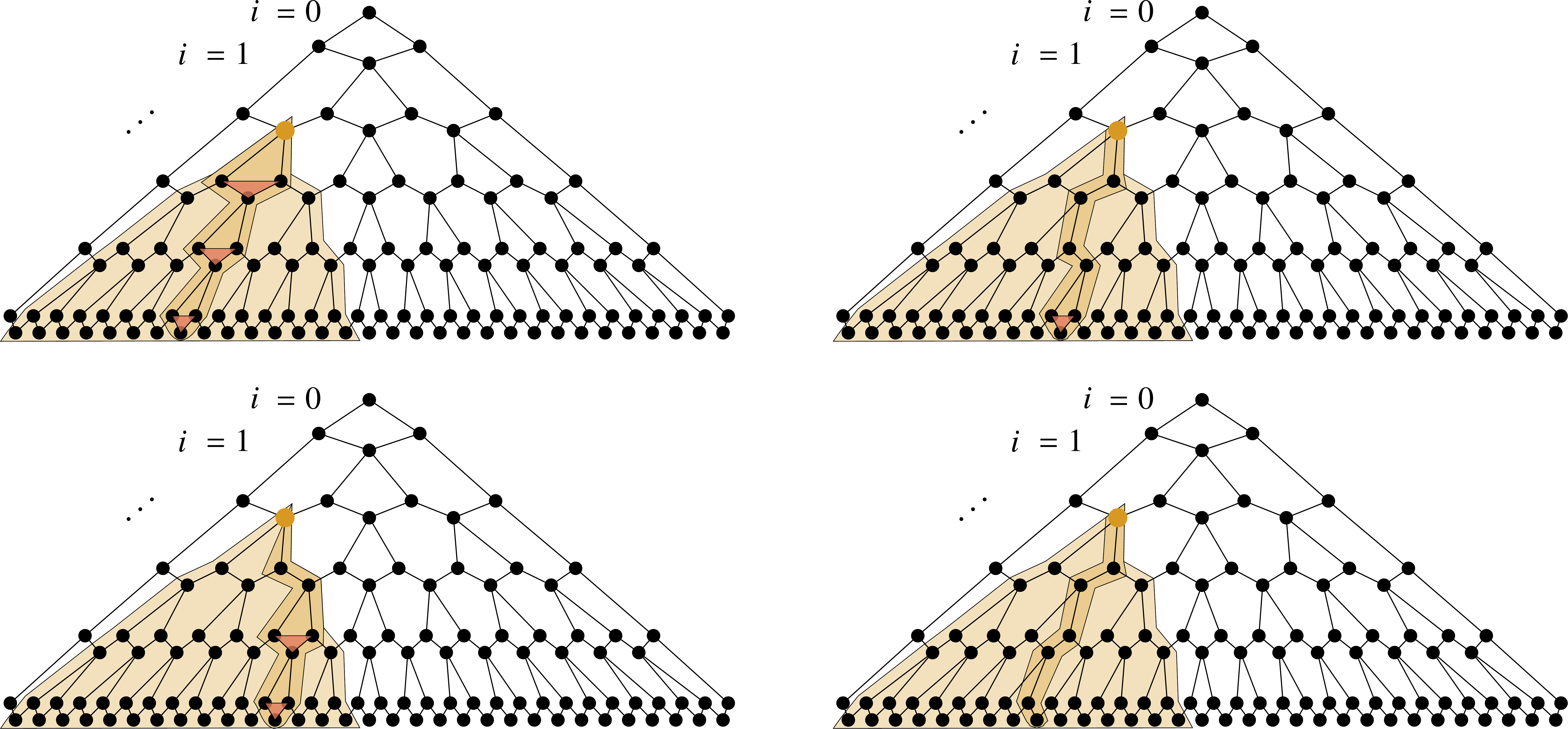}
    \caption{By integrating a single unitary in each sub-layer one can obtain a reduced state $\widehat{\sigma}$ in the sub-space of any odd pair of qubits in $\LC_B$. }
    \label{fig:initial-backward-trick}
\end{figure}

\section{Proof of Corollary~\ref{coro1}}

In this section we present the proof of Corollary~\ref{coro1}, which requires us to analyze the scaling of the term $ \sum_{\vec{g} \in P_\ell(\GC_w)} \Lambda_{\vec{g}}$ in Theorem~\ref{theo1SM}. Specifically, assuming that $L\in\OC(\log(n))$ and that $\Tr[H_\mu^2]\varepsilon_O \varepsilon_{ \widehat{\sigma}} \in\Omega(1/\poly(n))$ we need to show that there is always one path $\vec{g}$ such that the coefficient $\Lambda_{\vec{g}}$ is not exponentially vanishing with the system size, i.e., that $\Lambda_{\vec{g}}\in\Omega(1/\poly(n))$. Moreover, we recall that we have to consider three different cases:
\begin{itemize}
    \item Case 1: The light-cone $\LC_F$ can be covered by $\ell$ modules $M_{\CC}$.
    \item Case 2: The light-cone $\LC_F$ can be covered by $(\ell-1)$ modules $M_{\EC}$ and a module $M_{\CC}$.
    \item Case 3: The light-cone $\LC_F$ can be covered by $\ell'$ modules $M_{\MC}$, $\ell''$ modules $M_{\EC}$ and a module $M_{\CC}$ such that $\ell'+\ell''+1=\ell$.
\end{itemize}

\subsection{Case 1}

Let us recall that the graph for the Case 1 (when $W$ is in the first sub-layer) was presented in the main text. As shown in Fig.~\ref{fig:graphsfinal}(a, top), there is a single path through $\GC_w$ and we have 
\begin{equation}
     \sum_{\vec{g} \in P_\ell(\GC_w)}\Lambda_{\vec{g}}=\left(\frac{28}{125}\right)^\ell\,.
\end{equation}
Then, recalling that $\ell\leq L$, it follows that 
\begin{equation}
    \sum_{\vec{g} \in P_\ell(\GC_w)}\Lambda_{\vec{g}}\geq \left(\frac{28}{125}\right)^L\,,
\end{equation}
where we see that the lower bound is in $\Omega(1/\poly(n))$ if $L\in\OC(\log(n))$.

The case when $W$ is in the second sub-layer leads to an additional module being added to the graph (see Fig.~\ref{fig:graphsfinal}(a, bottom)). This result simply modifies the lower bound as 
\begin{equation}
    \sum_{\vec{g} \in P_\ell(\GC_w)}\Lambda_{\vec{g}}\geq\frac{1}{5} \left(\frac{28}{125}\right)^L\,,
\end{equation}
and we recover again that the lower bound is in $\Omega(1/\poly(n))$ if $L\in\OC(\log(n))$.

\begin{figure}[t]
    \centering
    \includegraphics[width=1\columnwidth]{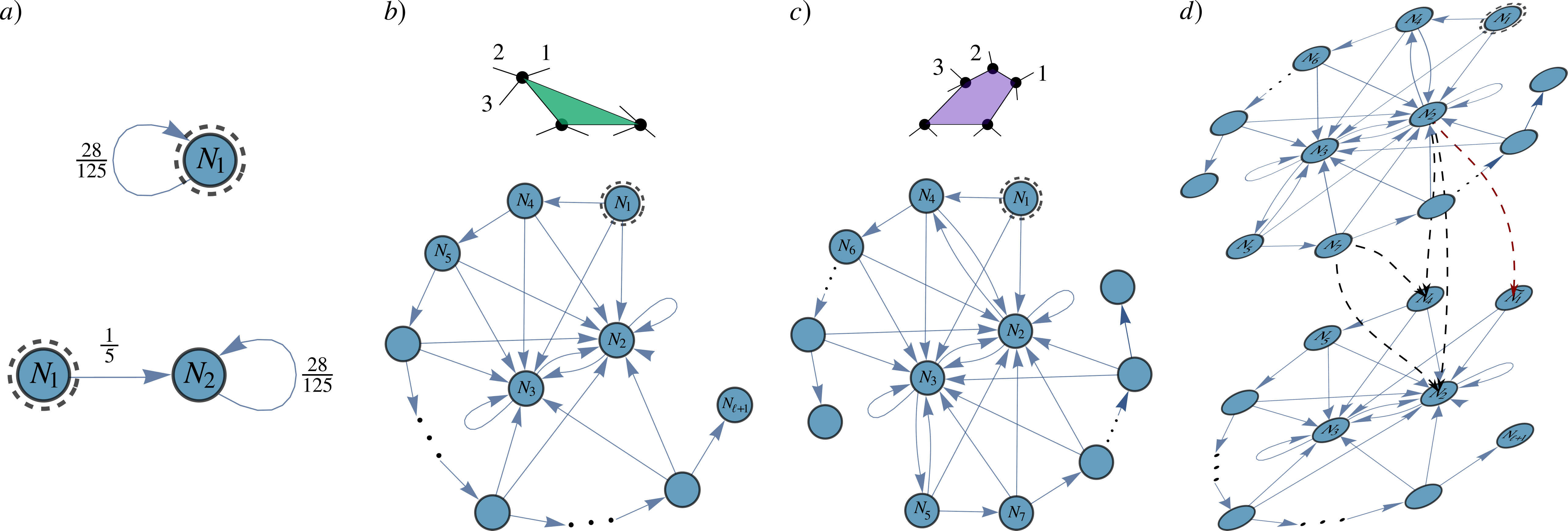}
    \caption{Graphs $\GC_w$ obtained through the GRIM by integrating (a)  $\ell$ center modules $M_\CC$, (b) $\ell$ edge modules $M_\EC$, (c) $\ell$ middle modules $M_\MC$. In panels (b) and (c) we have indicated how the wires have been labeled. (d) Transition between the middle and the edge graphs of (b) and (c). All the nodes in the middle graph are connected to the nodes $N_2$ and $N_4$ on edge graph. The node $N_2$ of the middle graph is also connected to a node $\widehat{N}_1$ that connects to the rest of the edge graph.}
    \label{fig:graphsfinal}
\end{figure}

\subsection{Case 2}

Let us first consider the case when $W$ is in the first sub-layer. The graph obtained after integrating $\ell-1$ modules is shown in Fig.~\ref{fig:graphsfinal}(b). Let us here describe the nodes and their associated coefficients. 

The nodes in Fig.~\ref{fig:graphsfinal}(b) are given by the contractions
\begin{align}
    N_1 &= T_{\{\}} -\frac{1}{4}T_{\{1,2\}} \\
    N_2 &= T_{\{\}} -\frac{1}{2}T_{\{1\}} -\frac{1}{2}T_{\{2,3\}} +\frac{1}{8}T_{\{1,2,3\}} \\
    N_3 &= -T_{\{\}} +\frac{1}{2}T_{\{1\}} +4T_{\{2,3\}} -2T_{\{1,2,3\}} \\
    N_4 &= -T_{\{\}} +8T_{\{1\}} +\frac{1}{2}T_{\{2,3\}} -2T_{\{1,2,3\}} \\
    N_5 &= T_{\{\}} +2T_{\{1\}} -\frac{1}{2}T_{\{2,3\}} -\frac{1}{2}T_{\{1,2,3\}} \\
    N_{5+k} &= T_{\{\}} +\frac{2(4-k)}{4k-1}T_{\{1\}} -\frac{1}{4}T_{\{2,3\}} -\frac{4-k}{2(4k-1)}T_{\{1,2,3\}}\,.
\end{align}
In Fig.~\ref{fig:graphsfinal}(b) we have shown how the wires are labeled.
Similarly, the edge coefficients $\lambda_{\alpha,\beta}$ are presented in Table~\ref{table:a-coefficients-edge-module}. 
\renewcommand{\arraystretch}{1.75}
\begin{table}[h]
\begin{tabular}{|c|c|c|c|c|c|c|}
\hline
\diagbox{$\alpha$}{$\beta$} & 1 & 2 & 3                                             & 4 & $5$ & $5+k+1$ \\ \hline
1        & 0 & $\frac{4}{375}$ & $\frac{2}{375}$ & $\frac{1}{75}$ & 0 & 0 \\ \hline
2        & 0 & $\frac{24}{125}$ & $\frac{1}{25}$ & 0 & 0 & 0 \\ \hline
3        & 0 & $\frac{32}{125}$ & $\frac{4}{25}$ & $\frac{4}{25}$ & 0 & 0 \\ \hline
4        & 0 & $\frac{32}{125}$ & $\frac{4}{25}$ & 0 & $\frac{12}{25}$ & 0 \\ \hline
$5+k$      & 0 & $\frac{32(k+1)}{125(4k-1)}$ & $\frac{4(k+1)}{125(4k-1)}$ & 0 & 0 & $\frac{16(k+1)-4)}{125(4k-1)}$ \\ \hline
\end{tabular}
\caption{Edge coefficients $\lambda_{\alpha,\beta}$ for the edge module graph.}
\label{table:a-coefficients-edge-module}
\end{table}

Finally, for Case 2, the final module that one needs to integrate is a center module $M_\CC$. Surprisingly, we find that that for all the nodes $N_\alpha$ previously presented we have:
\begin{align}
  \int_{M_\MC} N_2(\widetilde{O}_1) d\mu   &= \frac{72}{125}\varepsilon_O  \label{eq:vareps1}\\
    \int_{M_\CC} N_3(\widetilde{O}_1) d\mu  &=  \frac{48}{125} \varepsilon_O   \\
   \int_{M_\CC} N_4(\widetilde{O}_1) d\mu  &=  \frac{528}{125} \varepsilon_O \\
   \int_{M_\CC} N_5(\widetilde{O}_1) d\mu  &=  \frac{272}{125} \varepsilon_O \\
    \int_{M_\CC} N_{5+k}(\widetilde{O}_1) d\mu  &=  \frac{528}{125(4k -1)} \varepsilon_O \:\ \forall k\geq 1\,,\label{eq:vareps2}
\end{align}
that is, all integrals generate positive coefficients.

From here we can see that one can always find a path $\hat{\vec{g}}$ such that $\Lambda_{\hat{\vec{g}}}$ is a product of numbers $\lambda_i$ in $(0,1)$, meaning that 
\begin{align}
    \sum_{\vec{g} \in P_\ell(\GC_w)}\Lambda_{\vec{g}}&\geq \Lambda_
    {\hat{\vec{g}}}\\
    &=\prod_{i=1}^L\lambda_i \\
    &\geq \left(\lambda_{\min}\right)^L\,,
\end{align}
where $\lambda_{\min}$ is the smallest $\lambda_i$ in the path $\hat{\vec{g}}$. Hence, the lower bound is in $\Omega(1/\poly(n))$ if $L\in\OC(\log(n))$.

When $L$ is in the second sub-layer we simply find that the graph in Fig.~\ref{fig:graphsfinal}(b) is slightly modified, but the scaling of the lower bound for $\sum_{\vec{g} \in P_\ell(\GC_w)}\Lambda_{\vec{g}}$ remains unchanged.

\subsection{Case 3}

Let us first consider the integration of the $\ell'$ middle modules $M_{\MC}$. The graph obtained for for the case when $W$ is in the first sub-layer is presented in Fig.~\ref{fig:graphsfinal}(c). 

The nodes in Fig.~\ref{fig:graphsfinal}(c) are given by the contractions
\begin{align}
    N_1 &= T_{\{\}} -\frac{1}{4}T_{\{1,2\}} \\
    N_2 &= -T_{\{\}} +\frac{1}{10}T_{\{1\}}
    +\frac{8}{5}T_{\{3\}}
    +\frac{1}{5}T_{\{1,2\}}
    +\frac{16}{5}T_{\{2,3\}} -2T_{\{1,2,3\}} \\
    N_3 &= T_{\{\}} -\frac{1}{10}T_{\{1\}}
    -\frac{1}{10}T_{\{3\}}
    -\frac{1}{5}T_{\{1,2\}}
    -\frac{1}{5}T_{\{2,3\}} +\frac{1}{8}T_{\{1,2,3\}} \\
    N_4 &= -T_{\{\}} +\frac{8}{5}T_{\{1\}}
    +\frac{1}{10}T_{\{3\}}
    +\frac{16}{5}T_{\{1,2\}}
    +\frac{1}{5}T_{\{2,3\}} -2T_{\{1,2,3\}} \\
    N_{2k+3} &= T_{\{\}} +\frac{4(2b_k-a_k)}{5(8a_k-b_k)}T_{\{1\}}
    -\frac{1}{10}T_{\{3\}}
    +\frac{8(2b_k-a_k)}{5(8a_k-b_k)}T_{\{1,2\}}
    -\frac{1}{5}T_{\{2,3\}} -\frac{2b_k - a_k}{8a_k-b_k}T_{\{1,2,3\}}\:\ ,\:\ \forall k\geq 1 \\
    N_{2k+4} &= T_{\{\}} +\frac{2(4a_k-b_k)}{5(4b_k-a_k)}T_{\{1\}}
    -\frac{1}{10}T_{\{3\}}
    +\frac{4(4a_k-b_k)}{5(4b_k-a_k)}T_{\{1,2\}}
    -\frac{1}{5}T_{\{2,3\}} -\frac{4a_k-b_k}{8b_k-a_k}T_{\{1,2,3\}}\:\ ,\:\ \forall k\geq 1\,.
\end{align}
In Fig.~\ref{fig:graphsfinal}(c) we show how the wires are labeled. The edge coefficients $\lambda_{\alpha,\beta}$ are presented in Table~\ref{table:a-coefficients-middle-module}, where the coefficients $a_k$, $b_k$, $p_k$ and $q_k$ follow a recursive formula of the form
\begin{equation}
    f_n=2f_{n-1}+f_{n-1}\,,
\end{equation}
with
\begin{align}
    a_0&=1, \quad a_1=1, \quad a_2=3\,,\\
    b_0&=0, \quad b_1=2, \quad b_2=4\,,\\
    p_0&=1, \quad p_1=7, \quad p_2=15\,,\\
    q_0&=1, \quad q_1=4, \quad q_2=11\,.
\end{align}
Hence it is clear that $\lambda_{\alpha,\beta}\geq 0$ for all $\alpha$, and $\beta$.

\renewcommand{\arraystretch}{1.75}
\begin{table}[h]
\begin{tabular}{|c|c|c|c|c|c|c|c|c|}
\hline
\diagbox{$\alpha$}{$\beta$} & 1 & 2 & 3                                             & 4 & 5 & 6 & $2k+5$ & $2k+6$ \\ \hline
1        & 0 & $\frac{1}{750}$ & $\frac{56}{1875}$ & $\frac{1}{750}$ & 0 & 0 & 0 & 0 \\ \hline
2        & 0 & $\frac{2}{125}$ & $\frac{288}{3125}$ & $\frac{8}{625}$ & 0 & 0 & 0 & 0 \\ \hline
3        & 0 & $\frac{272}{3125}$ & $\frac{1}{250}$ & 0 & $\frac{6}{625}$ & 0 & 0 & 0 \\ \hline
4        & 0 & $\frac{9}{250}$ & $\frac{488}{3125}$ & 0 & 0 & $\frac{56}{625}$ & 0 & 0 \\ \hline
$2k+3$      & 0 & $\frac{b_k+4b_{k+1}}{250(8a_k-b_k)}$ & $\frac{8}{3125} \left( \frac{8(17a_k+q_k)}{8a_k-b_k} + \frac{25 a_k}{4b_k-a_k}+\frac{20 a_{k+1}}{4b_k-a_k} \right)$  & 0 & 0 & & $\frac{8(8a_{k+1}-b_{k+1})}{625(8a_k-b_k)}$ & 0 \\ \hline
$2k+4$      & 0 & $\frac{4a_{k+1}+a_k}{250(4b_k-a_k)}$ & $\frac{8}{3125}\frac{25a_k+2500a_{k+1}+68b_k+16p_k}{4b_k-a_k}$ & 0 & 0 & & 0 & $\frac{8(4b_{k+1}-a_{k+1})}{625(4b_k-a_k)}$\\ \hline
\end{tabular}
\caption{Edge coefficients $\lambda_{\alpha,\beta}$ for the middle module graph and for $k\geq 1$.}
\label{table:a-coefficients-middle-module}
\end{table}

Unlike Case 2, here we have to also take into account that after integrating the $\ell'$ middle modules $M_\MC$, one still needs to integrate $\ell''$ edge modules $M_\EC$. As shown in Fig.~\ref{fig:graphsfinal}(d), the result can be obtained by connecting the graph of Fig.~\ref{fig:graphsfinal}(c) to a slightly modified edge graph from Fig.~\ref{fig:graphsfinal}(b). Specifically, the edge graph needs to be modified by replacing the node $N_1$ by a modified node $\widetilde{N}_1$ defined as
\begin{equation}
    \widetilde{N}_1=T_{\{\}} -\frac{1}{2}T_{\{1\}}+\frac{92}{7}T_{\{2,3\}}-\frac{46}{7}T_{\{1,2,3\}}\,.
\end{equation}
Note that $\widetilde{N}_1$ is connected to the nodes $N_2$ and $N_3$ but not to the node $N_4$, meaning that we have to set $\lambda_{1,4}=0$ in the edge graph. Then, as shown in Fig.~\ref{fig:graphsfinal}(d), every node in the middle graph is connected to the nodes $N_2$ and $N_4$ on edge graph. The only exception is the node $N_2$ of the middle graph which additionally is connected to the node $\widehat{N}_1$. In Table~\eqref{table:a-coefficients-transition} we present the transition coefficients $\lambda_{\alpha,\beta}$, where the node $N_\alpha$ belongs to the middle graph, while the node $N_\beta$ to the edge graph. In addition, we also need present the coefficients $\lambda_{\widetilde{1},\beta}$ for the updated  $\widetilde{N}_1$ {\it within} the edge graph: $\lambda_{\widetilde{1},2}=1168/875$, and $\lambda_{\widetilde{1},3}=33/175$. Finally, we remark that Eqs.~\eqref{eq:vareps1}--\eqref{eq:vareps1} remain valid. 

\renewcommand{\arraystretch}{1.75}
\begin{table}[h]
\begin{tabular}{|c|c|c|c|}
\hline
\diagbox{$\alpha$}{$\beta$} & $\widetilde{1}$ & 2                                              & 4 \\ \hline
2        & $\frac{11}{125}$ & $\frac{64}{625}$  & $\frac{4}{625}$  \\ \hline
3        & 0 & $\frac{136}{625}$  & $\frac{6}{625}$  \\ \hline
4        & 0 & $\frac{64}{625}$  & $\frac{44}{625}$  \\ \hline
$2k+3$      & 0 & $\frac{64}{625}\left(\frac{25a_k}{8a_k-b_k}-1\right)$  & $\frac{4}{625}\frac{10(a_k+b_k)+b_{k+1}}{8a_k-b_k}$  \\ \hline
$2k+4$      & 0 & $\frac{32}{625}\frac{11b_k+2p_k}{4b_k-a_k}$  & $\frac{4}{625}\frac{10a_k+a_{k+1}+5b_k}{4b_k-a_k}$  \\ \hline
\end{tabular}
\caption{Transition coefficients $\lambda_{\alpha,\beta}$, where the node $N_\alpha$ belongs to the middle module graph, while the node $N_\beta$ to the edge module graph.}
\label{table:a-coefficients-transition}
\end{table}

Similarly to case $2$ we can see that one can always find a path $\hat{\vec{g}}$ such that $\Lambda_{\hat{\vec{g}}}$ is a product of numbers $\lambda_i$ in $(0,1)$, meaning that 
\begin{align}
    \sum_{\vec{g} \in P_\ell(\GC_w)}\Lambda_{\vec{g}}&\geq \Lambda_{\hat{\vec{g}}}\\
    &=\prod_{i=1}^L\lambda_i \\
    &\geq \left(\lambda_{\min}\right)^L\,,
\end{align}
where $\lambda_{\min}$ is the smallest $\lambda_i$ in the path $\hat{\vec{g}}$. Hence, the lower bound is in $\Omega(1/\poly(n))$ if $L\in\OC(\log(n))$.
Moreover, a similar result can be obtained for $W$ in the second sub-layer. 

Hence, we have shown that Corollary~\ref{coro1} is valid for all possible choices of $W$.

\section{Trainability from pooling}

By consolidating parameters into smaller registers, pooling layers in a QCNN can enhance trainability in comparison to a QNN in which a full $n$-qubit register is maintained throughout a variational circuit. This can be illustrated by the pooling module in Fig.~\ref{fig:poolcnn}, in which an input state $\left( \rho^{(0)}\right)^{\otimes n}$ is processed by pooling layers, locally described by the channel
\begin{equation}
\rho^{(j+1)}=\text{tr}_{A}\left[ I_{j+1}\left(\rho^{(j)}\right)^{\otimes 2} I_{j+1}^{\dagger} \right]
\end{equation}
where \begin{equation}I_{j+1}=\ket{+}\bra{+}_{A}\otimes e^{-i\theta_{+}^{(j+1)}Y} + \ket{-}\bra{-}_{A}\otimes e^{-i\theta_{-}^{(j+1)}Y}\end{equation} is the controlled unitary corresponding to the pooling. Starting from $\rho^{(0)}=\ket{0}^{\otimes n}$ with $n=2^{L}$ for simplicity, a cost function at the $j$-th layer ($j=1,\ldots ,L$) is given by
\begin{align}
C^{(j)}&=1-\text{Tr}\left[\ket{0}\bra{0}^{\otimes {n\over 2^{j}}}\left(\rho^{(j)}\right)^{\otimes {n\over 2^{j}}}\right] \nonumber \\
&= 1-\left( \rho^{(j)}_{0,0} \right)^{n\over 2^{j}}.
\label{eqn:poolcost}
\end{align}

The increase in trainability as $j \rightarrow L$ can be seen clearly by considering the cross-section $\theta_{\pm}^{(j)}=\pm\theta$ for all $j$. For these angles, the cost function $C^{(j)}(\theta_{+},\theta_{-})$ is evaluated in closed form by solving the recursion relation
\begin{equation}
\rho^{(j+1)}_{0,0} -\rho^{(j+1)}_{1,1} = \cos \theta \left( \rho^{(j)}_{0,0} -\rho^{(j)}_{1,1}\right), 
\end{equation}
subject to $\text{Tr}\left[\rho^{(j)}\right]=1$ and $\rho^{(0)}_{0,0}=1$, to obtain $\rho_{0,0}^{(j)}={1+\cos^{j}\theta \over 2}$. Substituting this into (\ref{eqn:poolcost}) results in the closed form
\begin{equation}
C^{(j)}(\theta)=1-\left( {1+\cos^{j}\theta \over 2} \right)^{n\over 2^{\ell}}.\label{eqn:toycost}
\end{equation}
For $\ell=1$, i.e., a single pooling layer, the expression~\eqref{eqn:toycost} is almost everywhere equal to 1 on $[-\pi ,\pi]$ as $n\rightarrow \infty$ and, therefore, $C^{(1)}(\theta)$ exhibits a barren plateau. In contrast, for $j = \log n=L$, one evaluates
\begin{align}
E\left( \Big\vert {dC^{(L)}\over d\theta} \Big\vert \right)&= {L \over 2}\int_{-\pi}^{\pi}{d\theta \over 2\pi} \vert \cos^{L-1}\theta \sin \theta \vert \nonumber \\
&= {1\over \pi}
\end{align}
for all $n$, which implies the absence of barren plateaus in the landscape  because the magnitude of the derivative is bounded away from zero in expectation. Uncorrelating the pooling layers by taking the angles defining the conditional unitary $I_{j}$ to be $\theta_{\pm}^{(j)}:= \pm \theta^{(j)}$, $j=1,\ldots ,L$ does not result in barren plateaus for any parameters, as long as the pooling module is sufficiently deep. In particular, the $j$ layer cost function
\begin{equation}
C^{(j)}(\lbrace \theta^{(k)} \rbrace_{k=1}^{j} )= 1-\left({1+ \prod_{k=1}^{j}\cos \theta^{(k)} \over 2}  \right)^{n\over 2^{j}}
\end{equation}
 for maximum depth $j=L$ satisfies
\begin{equation}
E\left( \Big\vert {dC^{(L)}\over d\theta^{(k)}} \Big\vert \right) = {1\over 2n^{\log_{2}(\pi) -1}}
\end{equation}
for all $k$, which does not exhibit the exponentially fast decrease to 0 required by the definition of barren plateaus.

\end{document}